\newtheorem{theorem}{Theorem}[section]
\newtheorem{lemma}{Lemma}[section]
\newtheorem{proposition}{Proposition}[section]
\newtheorem{remark}{Remark}[section]
\newtheorem{definition}{Definition}[section]
\newtheorem{assumption}{Assumption}[section]
\def\proof{\noindent {\it Proof. $\, $}}
\def\endproof{\hfill $\Box$} % $\blacksquare$}
\newcommand{\Xhh}{X^h}
\newcommand{\psup}{\overline{p}}
\newcommand{\pinf}{\underline{p}}
\newcommand{\vsup}{\overline{v}}
\newcommand{\vinf}{\underline{v}}
\newcommand{\Kpp}{K}
\newcommand{\kpp}{k}
\newcommand{\epsi }{\varepsilon }
\newcommand{\wt }{\widetilde }
\newcommand{\wh }{\widehat }
\newcommand{\whphi }{\widehat{\phi }}
\newcommand{\taus }{\tau^{*}}
\newcommand{\tausi }{\tau^{*,i}}
\newcommand{\taush }{\tau^{*,h}}
\newcommand{\Ma }{\cM^p (\cC^a )}
\newcommand\I{\mathds{1}}
\newcommand\Vben{V^b}
\def\phi{\varphi }
\def\P{{\mathbb P}}
\newcommand{\hHh}{H}
\newcommand{\cEgH}{\cE^{g,H}}
\newcommand{\cEgi}{\cE^{g,i}}
\newcommand{\cEgh}{\cE^{g,h}}
\def\cB{{\mathcal B}}
\def\cC{{\mathcal C}}
\def\cT{{\mathcal T}}
\def\cG{{\mathcal G}}
\def\cH{{\mathcal H}}
\def\cE{{\mathcal E}}
\def\cM{{\mathcal M}}
\def\cS{{\mathcal S}}
\def\cP{{\mathcal P}}
\def\rr{{\mathbb R}}
\def\gg{{\mathbb G}}
\newcommand{\Keywords}[1]{\par\noindent{\small{\bf Keywords\/}: #1}}
\newcommand{\Class}[1]{\par\noindent{\small{\bf Mathematics Subjects Classification (2010)\/}: #1}}
\title{{\Large \bf ARBITRAGE-FREE PRICING OF AMERICAN OPTIONS IN NONLINEAR MARKETS} \vskip 45 pt }
\author{Edward Kim$\,^{b}$, Tianyang Nie$\,^{a}$\footnote{The research of T. Nie and M. Rutkowski was supported by the DVC Research Bridging Support Grant {\it Non-linear Arbitrage Pricing of Multi-Agent Financial Games}. The work of T. Nie was supported by the National Natural Science Foundation of China (No. 11601285) and the Natural Science Foundation of Shandong Province (No. ZR2016AQ13).} \ and Marek Rutkowski$\,^{b,c}$ \\ \\
\\$^{a\,}$School of Mathematics, Shandong University,\\ Jinan, Shandong
250100, China\\ \\  $^{b\,}$School of Mathematics and Statistics, University of Sydney
\\ Sydney, NSW 2006, Australia\\ \\ $^{c\,}$Faculty of Mathematics and Information Science,
Warsaw University of Technology, \\ 00-661 Warszawa, Poland \\ }
\date{\vskip 35 pt \today \vskip 30 pt}
\begin{document}

\maketitle

\begin{abstract}
We re-examine and extend the findings from the recent paper by Dumitrescu et al. \cite{DQS2018} who studied American
options in a particular market model using the nonlinear arbitrage-free pricing approach developed in El Karoui and Quenez \cite{EQ1997}. In the first part, we provide a detailed study of unilateral valuation problems for the two counterparties in an American-style contract within the framework of a general nonlinear market. We extend results from Bielecki et al.~\cite{BCR2018,BR2015} who examined the case of a European-style contract. In the second part, we present a BSDE approach, which is used to establish more explicit pricing, hedging and exercising results when solutions to reflected BSDEs have additional desirable  properties.
\vskip 20 pt
\Keywords{nonlinear market, American option, optimal stopping, reflected BSDE}
\vskip 10 pt
\Class{91G40,$\,$60J28}
\end{abstract}

\newpage

%\tableofcontents

%\newpage

%%%%%%%%%%%%%%%%%%%%%%%%%%%%%%%%%%%%%%%%%%%%%%%%%%%%%%%%%%%%%%%%%%%%%%%%%%%%%%%%%%%%%%%
%%%%%%%%%%%%%%%%%%%%%%%%%%%%%%%%%%%%%%%%%%%%%%%%%%%%%%%%%%%%%%%%%%%%%%%%%%%%%%%%%%%%%%%
\section{Introduction} \label{sec0}
%%%%%%%%%%%%%%%%%%%%%%%%%%%%%%%%%%%%%%%%%%%%%%%%%%%%%%%%%%%%%%%%%%%%%%%%%%%%%%%%%%%%%%%
%%%%%%%%%%%%%%%%%%%%%%%%%%%%%%%%%%%%%%%%%%%%%%%%%%%%%%%%%%%%%%%%%%%%%%%%%%%%%%%%%%%%%%%

Unlike contracts of a European style, contracts of American style are asymmetric between the two counterparties
(hereafter referred to as the {\it issuer} and the {\it holder}) not only due to the opposite directions of contractual cash flows,
but also due to the fact that only one party, the holder of an American option, has the right to {\it exercise} (that is, to stop and settle) an American contract before its expiration date, which we invariably denote as $T$. The issues of arbitrage-free pricing and rational exercising of American options within the framework of a linear market model (usually for the classical Black and Scholes model, but possibly with trading constraints, such as: no borrowing of cash or no short-selling of shares) have been studied in numerous papers, to mention just a few: Bensoussan~\cite{B1984}, El Karoui et al. \cite{EPAQ1997}, Jaillet et al. \cite{JLL1990}, Karatzas~\cite{K1988}, Karatzas and Kou~\cite{KK1998}, Kallsen and K\"uhn~\cite{KK2004},
Klimsiak and Rozkosz \cite{KR2016}, Myneni~\cite{M1992} and Rogers \cite{R2002}.

The goal of this work is to re-examine and extend the findings from the recent paper by Dumitrescu et al.~\cite{DQS2018} who studied American options within the framework of a particular imperfect market model with default using the nonlinear arbitrage-free pricing approach developed in El Karoui and Quenez~\cite{EQ1997}.  In contrast to \cite{DQS2018} (see also \cite{DQS2017} for the case of game options), we place ourselves within the setup of a general nonlinear arbitrage-free market, as introduced in Bielecki et al.~\cite{BCR2018,BR2015}, and we examine general properties of fair and acceptable unilateral prices. We also obtain more explicit results regarding pricing, hedging and break-even times for the issuer and rational exercise times for the holder using a BSDE approach without being specific about the dynamics of underlying assets, but by focusing instead on pertinent general features of solutions to reflected BSDEs.

Let us introduce some notation for a generic nonlinear market model. Let $(\Omega, \cG, \gg,\P )$ be a filtered probability space satisfying the usual conditions of right-continuity and completeness, where the filtration $\gg=(\cG_t)_{t \in [0,T]}$ models the flow of information available to all traders. For convenience, we assume that the initial $\sigma$-field $\cG_0$ is trivial. Moreover, all processes introduced in what follows are implicitly assumed to be $\gg$-adapted and, as usual, any semimartingale is assumed to be c\`adl\`ag. For simplicity of notation, we assume throughout that the trading conditions are identical for the issuer and the holder, although this assumption can be relaxed without any difficulty since we examine unilateral valuation and hedging problems.
Let $\cT=\cT_{[0,T]}$ stand for the class of all $\gg$-stopping times taking values in $[0,T]$. We adopt the following definition
of an American contingent claim.

By convention, all cash flows of a contract are described from the perspective of the issuer. Hence when a cash
flow is positive for the issuer, then the cash amount is paid by the holder and received by the issuer. Obviously, if a cash flow
is negative for the issuer, then the cash amount is transferred from the issuer to the holder.
For instance, when dealing with the classical case of an American put option written on a stock $S$, we assume that the payoff to the issuer  (respectively, the holder) equals $\Xhh_{\tau}=-(K-S_{\tau})^+$  (respectively, $\Xhh_{\tau}=(K-S_{\tau})^+$) if the option is exercised at time $\tau $ by the holder. This is formalized through the following definition of an {\it  American contingent claim.}
Note that the superscript $h$ in $\Xhh$ is used to emphasize that only the holder has the right to exercise and hence to stop the contract;
this can be contrasted with the case of a game option (see, e.g., Kifer~\cite{KY2000,KY2013}  and the references therein) where the covenants stipulate that both parties may exercise and hence stop the contract.

\begin{definition} \label{def0.1}
{\rm An {\it  American contingent claim} with the $\gg$-adapted, c\`adl\`ag payoff process $\Xhh$ is a contract between the issuer and the holder where the holder  has the right to exercise the contract by selecting a $\gg $-stopping time $\tau\in\cT_{[0,T]}$. Then the issuer `receives' the amount $\Xhh_{\tau}$ or, equivalently, `pays' to the holder the amount of $- \Xhh_\tau$ at time $\tau$ where the $\gg$-adapted payoff process $\Xhh_t,\,t\in[0,T]$ is specified by the contract. Note that we do not make any a priori assumptions about the sign of the payoff process $\Xhh$, so it can be either positive or negative, in general.}
\end{definition}

More generally, an {\it American contract} is formally identified with a triplet $\cC^a =(A,\Xhh,\cT )$ where a $\gg$-adapted,
c\`adl\`ag stochastic process $A$, which is predetermined by the contract's clauses, represents the {\it cumulative cash flows} from time 0 till the contract's maturity date $T$. In the financial interpretation, the process $A$ is assumed to model all the cash flows of a given American contract, which are either paid out from the issuer's wealths or added to his wealth via the value process of his portfolio of traded assets.

By symmetry, an analogous interpretation applies to the holder of an American contract and, obviously, any amount received (respectively, paid) by one of the parties is paid (respectively, received) by the other one. Note, however, that the price of the contract $\cC^a$, which is exchanged at its initiation (by convention, at time 0), is not included in the process $A$ so that we set $A_0=0$. This convention is motivated by the fact that the contract's price before the deal is made is yet unspecified and thus it needs to be determined through negotiations between the counterparties and we will argue that unilateral pricing does not yield a common value for the initial price of an American contract, in general.

When examining the valuation of an American contract at any time  $t\in[0,T]$, we implicitly assume that it has not yet been exercised and thus the set of exercise times available at time $t$ to its current holder is the class $\cT_{[t,T]}$ of all $\gg$-stopping times taking values in $[t,T]$. In principle, one could consider two alternative conventions regarding the payoff upon exercise: either
\hfill \break
(A.1)  the cash flow upon exercise at time $t$ equals $A_t-A_{t-}+\Xhh_t$ or \hfill \break
(A.2)  if a contract is exercised at time $t$, then the cash flow $A_t-A_{t-}$ is waived, so the only cash flow occurring at time $t$ is $\Xhh_t$.
\hfill \break Unless explicitly stated otherwise, we work under covenant (A.1) and we acknowledge that the choice of a particular settlement rule may result in a different price for an American contract $\cC^a$, in general. Of course, this choice is immaterial when the process $A$ is continuous or, simply, when it vanishes, so that the contract reduces to a pair $(\Xhh,\cT )$.

In the classical linear market model, if the terminal payoff of a contract is negative (or positive), then it is easy to recognize
whether the fair price for a given party should be positive (or negative) and thus it is less important to keep track of signs of cash flows. In contrast, in a nonlinear setup it may occur, for instance, that each party would like to sell a contract for a positive
respective price and thus it is not reasonable to make any a priori assumptions about the signs of prices.

The paper is organized as follows. In Section \ref{sec1}, we work in an abstract nonlinear setup, meaning that we only make fairly general assumptions about the nonlinear dynamics of the wealth process of self-financing strategies. The main postulates of that kind are the monotonicity properties of the wealth (see Assumptions \ref{ass1.1} and \ref{ass1.2}). We examine general properties of fair and profitable prices for the two counterparties, the issuer and the holder. In particular, we built upon papers by Bielecki et al.~\cite{BCR2018,BR2015} where the arbitrage-free valuation of European contingent claims in nonlinear markets was examined.
In Section \ref{sec4}, we re-examine and extend a BSDE approach to the valuation of American options in nonlinear market initiated by
El Karoui and Quenez~\cite{EQ1997} and continued in Dumitrescu et al.~\cite{DQS2018}. Our main goal is to show that unilateral acceptable prices for an American contract $\cC^a$ can be characterised in terms of solutions to reflected BSDEs driven by a multi-dimensional continuous semimartingale $S$. For the sake of concreteness, we postulate in Section \ref{sec4} that the wealth process $V=V(y,\xi ,A)$ satisfies
\begin{equation} \label{uyeq3x}
V_t=y-\int_0^t g(u,V_u,\xi_u)\,du+\int_0^t\xi_u\,dS_u+A_t,
\end{equation}
where $y \in \rr$ is the initial wealth at time 0 of a dynamic portfolio $\xi$ (recall that $A_0=0$). However, in order to
keep the notation comprehensive, we do not explicitly specify the process $S$ or generator $g$ but instead we make
assumptions about solutions to the SDE \eqref{ueq3x} and the BSDE  \eqref{nhBSDE}.

%%%%%%%%%%%%%%%%%%%%%%%%%%%%%%%%%%%%%%%%%%%%%%%%%%%%%%%%%%%%%%%%%%%%%%%%%%%%%%%%%%%%%%
%%%%%%%%%%%%%%%%%%%%%%%%%%%%%%%%%%%%%%%%%%%%%%%%%%%%%%%%%%%%%%%%%%%%%%%%%%%%%%%%%%%%%%
\section{Unilateral Fair and Acceptable Prices} \label{sec1}
%%%%%%%%%%%%%%%%%%%%%%%%%%%%%%%%%%%%%%%%%%%%%%%%%%%%%%%%%%%%%%%%%%%%%%%%%%%%%%%%%%%%%%
%%%%%%%%%%%%%%%%%%%%%%%%%%%%%%%%%%%%%%%%%%%%%%%%%%%%%%%%%%%%%%%%%%%%%%%%%%%%%%%%%%%%%%

Let $\cM=(\cB,\cS,\Psi)$ be a market model which is arbitrage-free with respect to European contracts in the sense of  Bielecki et al.~\cite{BCR2018,BR2015}. Here $\Psi$ stands for the class of all {\it admissible} trading strategies and $\Psi(y,D)$ denotes the class of all admissible trading strategies from $\Psi$ with initial wealth $y\in\rr$ and with external cash flows $D$. For any trading strategy $\phi\in\Psi (y,D)$, we denote by $V(y,\phi,D)$ the {\it wealth process} of $\phi$. Obviously, the equality $V_0(y,\phi,D)=y$ holds for all $y\in\rr$ and any strategy $\phi$. It is assumed throughout that the processes $D,\Xhh$ and the wealth process $V(y,\phi,D)$ are c\`adl\`ag and $\gg$-adapted. We will gradually make more assumptions about the dynamics of wealth processes.

%%%%%%%%%%%%%%%%%%%%%%%%%%%%%%%%%%%%%%%%%%%%%%%%%%%%%%%%%%%%%%%%%%%%%%%%
\subsection{Benchmark Wealth} \label{sec1.1}
%%%%%%%%%%%%%%%%%%%%%%%%%%%%%%%%%%%%%%%%%%%%%%%%%%%%%%%%%%%%%%%%%%%%%%%%

An important feature of the nonlinear arbitrage-free approach is the concept of the {\it benchmark wealth} $\Vben (x)$ with respect to which arbitrage opportunities of a given trader are quantified and assessed. As in \cite{BCR2018,BR2015}, a simple and natural candidate for the benchmark wealth can be given by the equality $\Vben (x)=V^0(x)$ where, for an arbitrary initial endowment $x\in\rr$ of a trader, we set for all $t\in[0,T]$
\[
V_t^0(x):=xB_t^{0,l}\I_{\{x\geq 0\}}+ xB_t^{0,b}\I_{\{x<0\}}
\]
where the risk-free \textit{lending} (respectively, {\it borrowing}) \textit{cash account} $B^{0,l}$ (respectively, $B^{0,b}$) is used for unsecured lending (respectively, borrowing) of cash. Note that $V^0(x)$ represents the wealth process of a trader who decided at time 0 to keep his initial cash endowment $x$ in either the lending (when $x \geq 0$) or the borrowing (when $x<0$) cash account and who is not involved in any other trading activities between the times 0 and $T$. By convention, the quantities $x_1$ and $x_2$ represent initial endowments of the issuer and the holder and thus the processes $V^0(x_1)$ and $V^0(x_2)$ are their respective benchmark wealths.
Since the idea of the benchmark wealth is immaterial for valuation in linear market models, it is rarely encountered in works on arbitrage-free valuation, although it corresponds to the well-known economic concept of {\it opportunity costs}.
It thus important to make few comments on its relevance, scope and limitations.

On the one hand, we acknowledge that the idea of unilateral pricing based on an initial portfolio of a trader can be rejected as unrealistic. It is thus worth stressing that even when we set $x_1=x_2=0$ so that $V_t^0(x_1)=V_t^0(x_2)=0$ for all $t\in[0,T]$, meaning that the initial endowments of traders are completely ignored, the asymmetry in their respective unilateral prices will show up
as a result of nonlinearity of the wealth dynamics, so that this assumption would not give an essential simplification of our results and findings.

On the other hand, it would also be possible to assume that each trader is endowed with an initial portfolio of assets (including also the savings account $B^{0,l}$ and $B^{0,b}$, which means that he could be either a borrower or a lender of cash) with the current market value $y$ at time 0 (ignoring bid-ask spreads and transaction costs). Then $V^b(y)$ could represent the wealth process of his static portfolio. Assume, for instance, that the issuer of an American contract has an initial portfolio of risky assets, denoted as $\alpha_0$, and cash, denoted as $\beta_0$, with the (static) terminal values $V_T(\alpha_0)$ and $V^0_T(\beta_0)$, respectively.  Suppose that the risky portfolio $\alpha_0$ is not used for issuer's hedging purposes but a (positive or negative) amount $x_1$ from the cash account $\beta_0$ is used to establish the hedge. Then the issuer's benchmark wealth $V^b(\alpha_0,\beta_0)$ can be defined as $V^b_t(\alpha_0,\beta_0):=V_t(\alpha_0)+V^0_t(\beta_0)$ for all $t\in[0,T]$ so that, in particular, the initial wealth of the issuer is equal to $y:= V^b_0(\alpha_0,\beta_0)=V_0(\alpha_0)+V^0_0(\beta_0)$. Then the issuer's total wealth, inclusive of the price $p$ for the contract $\cC^a$ entered at time 0 and the hedging strategy $\phi $, equals
$$
V_t(\alpha_0,\beta_0,x_1,p,\phi,A)=V_t(\alpha_0)+V^0_t(\beta_0-x_1)+V_t(x_1+p,\phi,A).
$$
where $x_1$ no longer represents the initial wealth of the issuer, but rather the part of the initial cash endowment used for hedging of $\cC^a$. It is clear that it would be more difficult to analyze the case of a dynamic benchmark portfolio, although in principle this is possible. For the sake of conciseness, we are not going to study these cases in what follows, but we stress that all results in
this work are valid for any specifications of the benchmark wealth processes for the issuer and the holder,
respectively, since it suffices to assume that they are some $\gg$-adapted and c\`adl\`ag stochastic processes.

%Assume that
%$$
%\Delta K := K(V_T(\alpha_0)+V^0_T(\beta_0+\wh{p})+\Xhh_T)-K(V^0_T(\alpha_0)+V^0_T(\beta_0))
%$$
%was computed at time 0 and assets with the combined value $\Delta K$ should be sold at time 0. Then the wealth inclusive of the
%hedge satisfies
%$$
%V_t(\widehat{\alpha}_0,\beta_0+\Delta K_0,x_1,\wh{p},\phi,A)=V_t(\wh{\alpha}_0)+V^0_t(\beta_0+\Delta K_0-x_1)+V_t(x_1+\wh{p},\phi,A)
%$$
%and the difference $\wh{p}- p$ can be interpreted as KVA.

%%%%%%%%%%%%%%%%%%%%%%%%%%%%%%%%%%%%%%%%%%%%%%%%%%%%%%%%%%%%%%%%%%%%%%%%
\subsection{Unilateral Fair Prices} \label{sec1.2}
%%%%%%%%%%%%%%%%%%%%%%%%%%%%%%%%%%%%%%%%%%%%%%%%%%%%%%%%%%%%%%%%%%%%%%%%

We consider an extended market model $\Ma $ in which an American contract $\cC^a$ is traded at time~$0$ at some initial price $p$ where $p$ is an arbitrary real number. We first give a preliminary analysis of unilateral fair valuation of an American contract by its issuer who is endowed  with the pre-trading initial wealth $x_1\in\rr$ and the corresponding benchmark wealth process $\Vben (x_1)$.

%%%%%%%%%%%%%%%%%%%%%%%%%%%%%%%%%%%%%%%%%%%%%%%%%%%%%%%%%%%%%%%%%%%%%%%%
\subsubsection{Issuer's Fair Prices} \label{sec1.2.1}
%%%%%%%%%%%%%%%%%%%%%%%%%%%%%%%%%%%%%%%%%%%%%%%%%%%%%%%%%%%%%%%%%%%%%%%%

We start by introducing the following conditions. Since the process $A$ is fixed throughout, to alleviate notation, we will frequently write $V(x_1+p,\phi)$ instead of $V(x_1+p,\phi,A)$ when dealing with the issuer. By the same token, we will later write $V(x_2-p,\psi)$ instead of $V(x_2-p,\psi,-A)$ when examining trading strategies of the holder.

\begin{definition} \label{def1.1}
{\rm We say that a triplet $(p,\phi,\tau)\in\rr\times\Psi(x_1+p,A)\times\cT$ satisfies:
\[
\begin{array}[c]{lll}
&\text{(AO)}& \Longleftrightarrow \ V_{\tau}(x_1+p,\phi)+\Xhh_{\tau}\geq \Vben_{\tau}(x_1)
 \ \text{\rm and } \P \big(V_{\tau}(x_1+p,\phi)+\Xhh_{\tau}>\Vben_{\tau}(x_1)\big)>0,\medskip \\
&\text{(SH)}& \Longleftrightarrow \  V_{\tau}(x_1+ p,\phi)+\Xhh_{\tau} \geq  \Vben_{\tau}(x_1),  \medskip\\
&\text{(BG$_{\epsi}$)}&\Longleftrightarrow \ V_{\tau}(x_1+p,\phi)+\Xhh_{\tau} \leq \Vben_{\tau}(x_1)+\epsi,  \medskip\\
&\text{(BE)}&\Longleftrightarrow \ V_{\tau}(x_1+p,\phi)+\Xhh_{\tau}=\Vben_{\tau}(x_1),  \medskip \\
&\text{(NA)}&\Longleftrightarrow \ \text{\rm either } V_{\tau}(x_1+p,\phi)+\Xhh_{\tau}=\Vben_{\tau}(x_1)
  \ \text{\rm or } \P \big(V_{\tau}(x_1+p,\phi)+\Xhh_{\tau}<\Vben_{\tau}(x_1)\big)>0.
% &\text{(LG)}&\Longleftrightarrow \ \P \big(V_{\tau}(x_1+p,\phi)+\Xhh_{\tau}<\Vben_{\tau}(x_1)\big)>0.
\end{array}
\]}
\end{definition}

Let us first explain the meaning of acronyms appearing in Definition \ref{def1.1}: (AO) stands for {\it arbitrage opportunity}, (SH) for  {\it superhedging}, (BG) for {\it bounded gain}, (BE) for {\it break-even} and (NA) for {\it no-arbitrage}. For a more detailed
explanation of each of these properties, see Definitions \ref{def1.2}--\ref{def1.6}.

For brevity, we write $(p ,\phi ,\tau) \in$ (AO) if a triplet $(p ,\phi ,\tau)$ satisfies condition (AO); an analogous convention will be applied to other conditions introduced in Definition \ref{def1.1}.

If property (SH) is satisfied by a triplet $(p,\phi,\tau)$, then we say that an {\it issuer's superhedging at time} $\tau $ arises. Note that, from the optional section theorem, condition (SH) holds for a pair $(p,\phi)\in \rr\times\Psi(x_1+p,A) $ and {\it all} $\tau\in\cT $ if and only if $(p,\phi)$ is such that the inequality $V_t (x_1+p ,\phi)+\Xhh_t \geq \Vben_t(x_1)$ is valid for every $t \in [0,T]$. This simple observation justifies the following definition of property (SH) for a pair $(p,\phi)$.

\begin{definition} \label{def1.2}
{\rm We say that a pair $(p,\phi)\in \rr\times\Psi(x_1+p,A)$ satisfies (SH) (briefly, $(p,\phi )\in \, $(SH)) if the inequality $V_t (x_1+p,\phi)+\Xhh_t \geq \Vben_t(x_1)$ holds for every $t \in [0,T]$. Then $(p,\phi)$ is called an {\it issuer's superhedging strategy} in the extended market $\Ma $.}
\end{definition}

Property (AO) of a triplet $(p,\phi,\tau)$ is referred to as the {\it issuer's strict superhedging} (or the {\it issuer's arbitrage opportunity}) at time $\tau $.
%If $(p,\phi)$ are such that $V_t(x_1+p,\phi)+\Xhh_t> \Vben_t(x_1)$ for every $t \in [0,T]$, then the property (AO)
%is satisfied by $(p,\phi ,\tau)$ for every $\tau\in\cT $, but the converse implication does not hold, in general.

\begin{definition} \label{def1.3}
{\rm We say that a pair $(p,\phi)\in \rr\times\Psi(x_1+p,A)$ satisfies condition (AO) if a triplet $(p,\phi ,\tau)$ complies with condition (AO) for every $\tau\in\cT$. Then we also say that $(p,\phi)$ creates an {\it  issuer's arbitrage opportunity} in the extended market $\Ma $.}
\end{definition}

The following lemma is an immediate consequence of Definition \ref{def1.3}.

\begin{lemma} \label{lem1.1}
If a pair $(p,\phi)\in \rr\times\Psi(x_1+p,A)$ is such that $V_t (x_1+p,\phi)+\Xhh_t> \Vben_t(x_1)$ for every $t \in [0,T]$,
then $(p,\phi)$ fulfills (AO) and thus an issuer's arbitrage opportunity arises in the extended market $\Ma $.
\end{lemma}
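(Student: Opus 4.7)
The plan is to verify condition (AO) for the pair $(p,\phi)$ as stated in Definition \ref{def1.3}, namely that for every $\tau \in \cT$ the triplet $(p,\phi,\tau)$ satisfies the two requirements of (AO) in Definition \ref{def1.1}. Since (AO) consists of a pathwise inequality together with a strict inequality holding with positive probability, both conditions should follow directly from the strict pointwise hypothesis $V_t(x_1+p,\phi)+\Xhh_t> \Vben_t(x_1)$.

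Concretely, I would fix an arbitrary $\tau\in\cT_{[0,T]}$. Since the processes $V(x_1+p,\phi)$, $\Xhh$ and $\Vben(x_1)$ are c\`adl\`ag and $\gg$-adapted, the hypothesis that the strict inequality holds for every $t \in [0,T]$ can be read $\P$-a.s.\ simultaneously in $t$ (the usual c\`adl\`ag extension from a countable dense set). Evaluating along the random time $\tau$, whose values lie in $[0,T]$, one obtains $V_\tau(x_1+p,\phi)+\Xhh_\tau> \Vben_\tau(x_1)$ $\P$-a.s. This immediately yields the weak inequality $V_\tau(x_1+p,\phi)+\Xhh_\tau\geq \Vben_\tau(x_1)$ and, moreover,
\[
\P\big(V_\tau(x_1+p,\phi)+\Xhh_\tau> \Vben_\tau(x_1)\big)=1>0,
\]
so that the triplet $(p,\phi,\tau)$ fulfills (AO) in the sense of Definition \ref{def1.1}. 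Since $\tau\in\cT$ was arbitrary, Definition \ref{def1.3} then asserts that $(p,\phi)$ creates an issuer's arbitrage opportunity in $\Ma$.

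There is essentially no hard step here: the result is the natural one-line corollary to the defining conditions, and the only minor subtlety, which I would flag explicitly rather than belabor, is the legitimacy of passing from the pointwise-in-$t$ strict inequality to its evaluation at the stopping time $\tau$. This passage is justified by the c\`adl\`ag assumption on all processes involved (introduced just before Section \ref{sec1.1}), which promotes an ``a.s.\ for each fixed $t$'' statement to an ``a.s.\ for all $t$'' statement and hence allows substitution of the $\gg$-stopping time $\tau$.
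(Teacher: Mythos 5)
Your argument is correct and is exactly the ``immediate consequence of Definition \ref{def1.3}'' that the paper invokes without writing out a proof: fix an arbitrary $\tau\in\cT$, substitute it into the pathwise strict inequality, and observe that both clauses of (AO) follow (the second with probability one). The side remark about passing from a for-every-$t$ statement to the stopping time $\tau$ is a reasonable point to flag, and the c\`adl\`ag/adaptedness assumptions indeed cover it.
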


We say that {\it  no issuer's arbitrage arises for} $(p,\phi)$ {\it  at} $\tau $ if $(p,\phi ,\tau)$ satisfies condition (NA).
It readily seen that if a triplet $(p,\phi ,\tau)$ fails to satisfy (NA), then it fulfills (AO) and thus an issuer's arbitrage opportunity arises at time $\tau$ for the issuer's strategy $(p,\phi)$. By convention, we henceforth set $\inf \emptyset=\infty$ and $\sup \emptyset=-\infty $. Note that the superscript $f$ stands here for {\it fair} and $i$ for {\it issuer}.

\begin{definition}\label{def1.4} {\rm
We say that $p^{f,i} (x_1,\cC^a)$ is an {\it issuer's fair price} for $\cC^a$ if no issuer's arbitrage opportunity $(p,\phi)$
may arise in $\Ma $ when $p=p^{f,i} (x_1,\cC^a)$. Hence the set of issuer's fair prices equals
 \[
\cH^{f,i}(x_1):=\big\{ p\in\rr \,|\ \forall \,\phi\in\Psi (x_1+p,A) \, \exists \,\tau\in\cT\!: (p,\phi,\tau) \in \text{(NA)} \big\}
\]
%\[
%\cH^{f,i}(x_1):=\big\{ p\in\rr :\, \forall \,\phi \in \Psi (x_1+p,A) \, \exists \,\tau \in \cT \text{ such that } (p,\phi ,\tau) \text{ satisfy (NA)} \big\}
%\]
and the upper bound for issuer's fair prices is given by
\begin{align} \label{eq7a}
\psup^{f,i}(x_1,\cC^a):=\sup \big\{p\in\rr \,|\ p \text{\ is an issuer's fair price for\ } \cC^a\big\}=\sup \,\cH^{f,i}(x_1).
\end{align}
If the equality $\psup^{f,i}(x_1,\cC^a)=\max \, \cH^{f,i}(x_1)$ holds (that is, whenever $\psup^{f,i}(x_1,\cC^a) \in \cH^{f,i}(x_1)$), then $\psup^{f,i}(x_1,\cC^a)$ is denoted as $\wh{p}^{f,i}(x_1,\cC^a)$ and called the {\it issuer's maximum fair price} for $\cC^a$.
} \end{definition}

To alleviate notation, the variables $(x_1,\cC^a)$ will be sometimes suppressed and thus we will write $p^{f,i} ,\,
\psup^{f,i},\, \wh{p}^{f,i},$ etc., rather than $p^{f,i} (x_1,\cC^a),\, \psup^{f,i}(x_1,\cC^a),\, \wh{p}^{f,i}(x_1,\cC^a),$ etc.

\begin{assumption}  \label{ass1.1}
{\rm  The {\it forward monotonicity} property holds:
for all $x, p\in\rr,\, \phi\in\Psi (x+p,A)$ and $p'>p$ (respectively, $p'<p$),
there exists a trading strategy $\phi'\in\Psi (x+p',A)$ such that $V_t (x+p',\phi' ) \geq V_t (x+p,\phi)$
(respectively, $V_t (x+p',\phi' ) \leq V_t (x+p,\phi)$) for every $t \in [0,T]$.}
\end{assumption}

\begin{lemma} \label{lem1.2}
Let Assumption \ref{ass1.1} be satisfied.  If $ p\in\cH^{f,i}(x_1)$, then for any $p' <p$ we have that $p'\in\cH^{f,i}(x_1)$.
Hence if $\cH^{f,i}(x_1) \ne \emptyset$, then either $\cH^{f,i}(x_1)=(-\infty,\psup^{f,i}\,]=(-\infty,\wh{p}^{f,i}]$ for some
$\psup^{f,i}\in\rr$ or $\cH^{f,i}(x_1)=(-\infty,\psup^{f,i})$ where $\psup^{f,i}\in\rr \cup \{ \infty\}.$
\end{lemma}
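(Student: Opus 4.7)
The plan is to prove the first, substantive assertion directly from Assumption \ref{ass1.1}, and then to read off the second assertion as a routine classification of downward-closed subsets of $\rr$.

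First I would fix $p\in\cH^{f,i}(x_1)$ and $p'<p$, pick an arbitrary $\phi'\in\Psi(x_1+p',A)$, and try to exhibit some $\tau\in\cT$ with $(p',\phi',\tau)\in\,$(NA), which is precisely what Definition \ref{def1.4} requires. The idea is to transport the problem back to the level $p$ via monotonicity: since $p>p'$, Assumption \ref{ass1.1} (in its increasing form, applied with $x=x_1$ and with the roles of $p$ and $p'$ swapped) produces a strategy $\phi\in\Psi(x_1+p,A)$ such that $V_t(x_1+p,\phi)\geq V_t(x_1+p',\phi')$ for all $t\in[0,T]$. Because $p\in\cH^{f,i}(x_1)$, there exists $\tau\in\cT$ such that $(p,\phi,\tau)\in\,$(NA), and I would claim that this same $\tau$ works for the triplet $(p',\phi',\tau)$.

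Next I would verify the claim by a short case distinction, which is the one spot that needs care since the definition of (NA) in Definition \ref{def1.1} is stated as a disjunction rather than as the literal negation of (AO). If $V_\tau(x_1+p,\phi)+\Xhh_\tau=\Vben_\tau(x_1)$, then the monotonicity yields $V_\tau(x_1+p',\phi')+\Xhh_\tau\leq\Vben_\tau(x_1)$, so either equality holds $\P$-a.s. (giving the first alternative of (NA)) or strict inequality holds on a set of positive probability (giving the second alternative). If instead $\P\bigl(V_\tau(x_1+p,\phi)+\Xhh_\tau<\Vben_\tau(x_1)\bigr)>0$, then the set inclusion
\[
\{V_\tau(x_1+p,\phi)+\Xhh_\tau<\Vben_\tau(x_1)\}\subseteq\{V_\tau(x_1+p',\phi')+\Xhh_\tau<\Vben_\tau(x_1)\}
\]
immediately gives the second alternative for $(p',\phi',\tau)$. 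In either case $(p',\phi',\tau)\in\,$(NA), so $p'\in\cH^{f,i}(x_1)$.

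For the final statement, the first part of the lemma means that $\cH^{f,i}(x_1)$ is downward-closed in $\rr$; together with the definition $\psup^{f,i}=\sup\cH^{f,i}(x_1)$ in \eqref{eq7a}, this forces $\cH^{f,i}(x_1)=(-\infty,\psup^{f,i}]$ whenever the supremum is attained (so $\psup^{f,i}=\wh{p}^{f,i}\in\rr$) and $\cH^{f,i}(x_1)=(-\infty,\psup^{f,i})$ otherwise, with $\psup^{f,i}\in\rr\cup\{\infty\}$ covering the unbounded case $\cH^{f,i}(x_1)=\rr$. The only place where anything non-mechanical happens is the case distinction in the previous paragraph; everything else is bookkeeping.
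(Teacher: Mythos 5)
Your proof is correct and follows essentially the same route as the paper's: both hinge on Assumption \ref{ass1.1} to lift an arbitrary strategy $\phi'$ at level $p'$ to a dominating strategy $\phi$ at level $p$, and then transfer the (no-)arbitrage property between the two triplets. The paper phrases this as a proof by contradiction (an (AO) pair at $p'$ would produce an (AO) pair at $p$), whereas you argue the contrapositive directly by checking that (NA) passes down from $(p,\phi,\tau)$ to $(p',\phi',\tau)$; your explicit case distinction just spells out the same transfer the paper leaves implicit.
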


\proof
We argue by contradiction. If $\cH^{f,i}(x_1)=\emptyset$, then $\psup^{f,i}=- \infty $. Let us now consider the case where $\cH^{f,i}(x_1) \ne \emptyset$. Assume that $ p\in\cH^{f,i}(x_1)$ and a number $p'< p$ is not an issuer's fair price. Then there exists $\phi'\in\Psi (x_1+p',A)$ such that $(p',\phi' ,\tau)$ fulfills (AO) for every $\tau\in\cT$. Consequently, by Assumption \ref{ass1.1}, there exists $\phi \in\Psi (x_1+p,A)$ such that a triplet $(p,\phi ,\tau)$ complies with (AO) for every $\tau\in\cT$. This clearly contradicts the assumption that $p$ belongs to $\cH^{f,i}(x_1)$ and thus we conclude that the asserted properties are valid.
\endproof

%The condition (LG) introduced in Definition \ref{def1.1} is referred to as the issuer's {\it loss-generating} property of $(p,\phi,\tau )$.
%
%\begin{definition} \label{def1.4a} {\rm
%We set $\psup^{l,i}(x_1,\cC^a):=\sup \cH^{l,i}(x_1)$ where
%\[
%\cH^{l,i}(x_1):=\big\{p\in\rr\,|\ \forall \,\phi\in\Psi (x_1+p,A)\,\exists \,\tau\in\cT\!:(p,\phi,\tau)\in \text{(LG)}\big\}
%\]
%is the set $\cH^{l,i}(x_1)$ of all {\it issuer's loss-generating costs} for $\cC^a$.}
%\end{definition}
%
%Note that if a real number $p$ is such that, for every $\phi\in\Psi (x_1+p,A)$, there exists $\tau\in \cT $
%such that $(p,\phi ,\tau)$ satisfy the condition (LG), then $p$ belongs to the set $\cH^{f,i}(x_1)$ of fair prices.
%This means that $\cH^{l,i}(x_1) \subseteq \cH^{f,i}(x_1)$ and thus $\psup^{l,i}(x_1,\cC^a) \leq \psup^{f,i}(x_1,\cC^a)$.

The {\it bounded gain} condition (BG$_{\epsi}$) stipulates that issuer's gains associated with a triplet $(p,\phi,\tau )$ are bounded from above by $\epsi $. It leads to the following definition of an issuer's superhedging cost with negligible gain.

\begin{definition} \label{def1.5} {\rm
We say that $p \in \rr$ is an {\it issuer's superhedging cost with negligible gain}  for $\cC^a$ if for every $\phi\in \Psi(x_1+p,A)$ such that condition (SH) is satisfied by $(p,\phi)\in \rr\times\Psi(x_1+p,A)$ and for every $\epsi>0$, there exists a $\tau^{\epsi} \in \cT$ such that $V_{\tau^{\epsi}}(x_1+p,\phi)+ \Xhh_{\tau^{\epsi}} \leq \Vben_{\tau^{\epsi}}(x_1)+ \epsi$. The set of all issuer's superhedging costs
 with negligible gain  for $\cC^a$ is denoted by $\cH^{n,i}(x_1)$, that is,
\[
\cH^{n,i}(x_1):=\big\{p\in\rr\,|\ \forall \,(p,\phi)\in \rr \times \Psi(x_1+p,A) \mbox{\rm \ satisfying (SH)} \forall\,\epsilon>0\, \exists \,\tau^{\epsi}\in\cT\!:(p,\phi,\tau^{\epsi})\in \text{(BG$_{\epsi}$)}\big\}.
\] }
\end{definition}

\begin{remark} \label{rem1.1}
{\rm Note that the issuer's superhedging cost with negligible gain is not necessarily unique. For example, assume that a pair $(p,\phi)\in \rr\times\Psi(x_1+p,A)$  satisfies (SH) so that $V_{\tau}(x_1+ p,\phi)+\Xhh_{\tau} \geq  \Vben_{\tau}(x_1)$ for all $\tau\in \cT$ and, in addition, $V_{T-}(x_1+p,\phi)+ \Xhh_{T-}= \Vben_{T-}(x_1)$. It is possible to suppose that there exists a $\delta>0$ such that for every $\phi' \in \rr\times\Psi(x_1+p+\delta,A) $ such that $(p,\phi')\in\, $ (SH), we have $V_{T-}(x_1+p,\phi)+ \Xhh_{T-} =V_{T-}(x_1+p+\delta,\phi')+ \Xhh_{T-}= \Vben_{T-}(x_1)$ and $V_{T}(x_1+p+\delta,\phi')+ \Xhh_{T}>V_{T}(x_1+p,\phi)+ \Xhh_{T}$. Then it is obvious that both  $p$ and $p+\delta$ are issuer's superhedging costs with negligible gain.}
\end{remark}

Let us finally introduce a stopping time related to the {\it break-even} condition (BE) introduced in Definition \ref{def1.1}.

\begin{definition} \label{def1.6} {\rm
If condition (BE) is satisfied by $(p,\phi,\tau)\in\rr\times\Psi (x_1+p,A)\times\cT$, then a stopping time $\tau\in\cT $ is called an {\it issuer's break-even time} for the pair $(p,\phi)\in\rr\times\Psi (x_1+p,A)$.}
\end{definition}

Note that even when the pair $(p,\phi)$ is fixed, the uniqueness of an issuer's break-even time $\tau$ is not ensured, in general.
Obviously, any issuer's break-even time can be formally classified as one of the exercise times available to the holder of $\cC^a$
but, as we will argue in what follows, an issuer's break-even time is unlikely to also be a {\it rational exercise time} for the holder. This is due to the fact that it may not actually always be advantageous for the holder to exercise at a stopping time that causes the issuer to break even or prohibits the issuer's arbitrage opportunities.  Firstly, usually the holder is not informed about the issuer's trading strategy. Secondly, the holder should be behaving in a rational way for his own payoff and hedging abilities.  A holder's rational exercise time can be typically identified with a particular instance of a {\it holder's break-even time}, which is introduced in Definition \ref{def1.8}. The earliest issuer's break-even time will be later denoted as $\tausi$, whereas for the earliest holder's rational exercise time we will use the symbol $\taush $.

%%%%%%%%%%%%%%%%%%%%%%%%%%%%%%%%%%%%%%%%%%%%%%%%%%%%%%%%%%%%%%%%%%%%%%%%
\subsubsection{Holder's Fair Prices} \label{sec1.2.2}
%%%%%%%%%%%%%%%%%%%%%%%%%%%%%%%%%%%%%%%%%%%%%%%%%%%%%%%%%%%%%%%%%%%%%%%%

Let us now analyze the holder's fair pricing problem for $\cC^a$.  We assume that he is endowed with the pre-trading initial wealth $x_2\in\rr$ and his computation refers to the benchmark wealth process $\Vben (x_2)$. Recall that we write $V(x_2-p,\psi) := V(x_2-p,\psi,-A)$ when there is no danger of confusion.

\begin{definition} \label{def1.7}
{\rm We say that $(p,\psi ,\tau)\in\rr\times\Psi(x_2-p,-A)\times\cT$ satisfy:
\[
\begin{array}[c]{lll}
&\text{(AO$'$)}&\Longleftrightarrow \  V_{\tau}(x_2-p,\psi)-\Xhh_{\tau} \geq \Vben_{\tau}(x_2)\
\text{\rm and } \P \big(V_{\tau}(x_2-p,\psi)-\Xhh_{\tau}>\Vben_{\tau}(x_2)\big)>0,\medskip \\
&\text{(SH$'$)} &\Longleftrightarrow \ V_{\tau}(x_2-p,\psi)-\Xhh_{\tau} \geq \Vben_{\tau}(x_2), \medskip\\
&\text{(BL$_{\epsi}'$)}&\Longleftrightarrow \ V_{\tau}(x_2-p,\psi)- \Xhh_{\tau} \geq \Vben_{\tau}(x_2) - \epsi , \medskip\\
&\text{(BE$'$)} &\Longleftrightarrow \ V_{\tau}(x_2-p,\psi)-\Xhh_{\tau}=\Vben_{\tau}(x_2), \medskip \\
&\text{(NA$'$)}&\Longleftrightarrow \ \text{\rm either } V_{\tau}(x_2-p,\psi )-\Xhh_{\tau}=\Vben_{\tau}(x_2)\
\text{\rm or } \P \big(V_{\tau}(x_2-p,\psi)-\Xhh_{\tau} < \Vben_{\tau}(x_2)\big)>0.
% &\text{(LG$'$)}&\Longleftrightarrow \ \P \big( V_{\tau}(x_2-p,\psi)- \Xhh_{\tau}<\Vben_{\tau}(x_2)\big)>0.
\end{array}
\] }
\end{definition}

Property (AO$'$) (respectively, (SH$'$)) is called the {\it strict superhedging} (respectively, {\it superhedging}) condition for the holder.
% and the property (LG$'$) is referred to as the holder's {\it loss-generating} property.
Note that the {\it bounded loss} property (BL$_{\epsi}'$) means that the holder's losses are bounded from below by $-\epsi $.
Condition (BE$'$) leads to the following definition.

\begin{definition}  \label{def1.8}
{\rm If the equality  $V_{\tau' }(x_2-p ,\psi )-\Xhh_{\tau' }=\Vben_{\tau' }(x_2)$ holds, then a stopping time $\tau'\in\cT$ is called a {\it holder's break-even time} for the pair $(p,\psi )\in\rr\times\Psi (x_2-p,-A)$.}
\end{definition}

The concept of a holder's arbitrage opportunity reflects the fact that the holder has the right to exercise an American contract, that is, to conveniently choose a stopping time $\tau $ at which the contract is settled and terminated. Specifically, a {\it  holder's arbitrage opportunity in} $\Ma$ is a triplet $(p,\psi,\tau)\in \rr\times\Psi (x_2-p,-A)\times\cT$ satisfying condition (AO$'$).

For a triplet $(p,\psi,\tau)\in\rr\times\Psi (x_2-p,-A)\times\cT$, we say that {\it  no holder's arbitrage arises for} $(p,\psi)$ {\it  at time} $\tau$ if $(p,\psi,\tau)$ fulfills (NA$'$). It is easily seen that a triplet $(p, \psi,\tau)\in \rr\times\Psi (x_2-p,-A)\times\cT $ fails to satisfy (AO$'$) if and only if it satisfies (NA$'$).

\begin{definition}\label{def1.9} {\rm
We say that $p^{f,h}(x_2,\cC^a)$ is a {\it holder's fair price} for $\cC^a$ if no holder's arbitrage opportunity $(p,\psi,\tau)$ may arise in the extended market $\Ma$ when $p=p^{f,h}(x_2,\cC^a)$. Hence the set of holder's fair prices equals
\[
\cH^{f,h}(x_2):=\big\{ p\in\rr \,|\  \forall \, (\psi ,\tau)\in\Psi (x_2-p,-A)\times\cT\!: (p, \psi ,\tau)\in \text{(NA$'$)} \big\}
\]
%\[
%\cH^{f,h}(x_2):=\big\{ p\in\rr :\, \forall \, (\psi ,\tau)\in\Psi (x_2-p,-A)\times\cT
% \text{ the triplet } (p, \psi ,\tau) \text{ satisfies (NA$'$)} \big\}
%\]
and the lower bound for the holder's fair prices is given by
\begin{align} \label{eq8a}
\pinf^{f,h}(x_2,\cC^a):=\inf\big\{p\in\rr\,|\  p \text{\ is a holder's fair price for\ } \cC^a\big\}=\inf\,\cH^{f,h}(x_2).
\end{align}
If the equality $\pinf^{f,h}(x_2,\cC^a)=\min\,\cH^{f,h}(x_2)$ holds, then $\pinf^{f,h}(x_2,\cC^a)$ is denoted as $\breve{p}^{f,h}(x_2,\cC^a)$ and called the {\it holder's minimum fair price} for $\cC^a$.}
\end{definition}

\begin{lemma} \label{lem1.3}
Let Assumption \ref{ass1.1} be satisfied for $-A$. If $ p\in\cH^{f,h}(x_2)$, then for any $p'>p$ we have that $p'\in\cH^{f,h}(x_2)$.
Therefore, if $\cH^{f,h}(x_2) \ne \emptyset$, then either $\cH^{f,h}(x_2)=[\,\pinf^{f,h},\infty )=[\,\breve{p}^{f,h},\infty )$
or $\cH^{f,h}(x_2)=(\pinf^{f,h},\infty)$.
\end{lemma}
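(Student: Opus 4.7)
The plan is to mirror the proof of Lemma \ref{lem1.2}, with the direction of the monotonicity reversed because for the holder it is the quantity $x_2-p$ (rather than $x_1+p$) that represents the initial wealth. So increasing $p$ corresponds to \emph{decreasing} the holder's initial capital, which is the feature that flips the inclusion from $p' < p$ to $p' > p$.

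I would argue by contradiction. Suppose $p \in \cH^{f,h}(x_2)$ and take any $p' > p$; assume towards a contradiction that $p'$ is not a holder's fair price. By Definition \ref{def1.9}, there exist $\psi'\in\Psi(x_2-p',-A)$ and $\tau\in\cT$ such that the triplet $(p',\psi',\tau)$ satisfies (AO$'$), that is,
\[
V_\tau(x_2-p',\psi')-\Xhh_\tau \geq \Vben_\tau(x_2) \quad\text{and}\quad \P\bigl(V_\tau(x_2-p',\psi')-\Xhh_\tau > \Vben_\tau(x_2)\bigr) > 0.
\]
Now since $p' > p$, we have $x_2 - p' < x_2 - p$. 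Applying Assumption \ref{ass1.1} with $-A$ in place of $A$, to the strategy $\psi'$ with initial wealth $x_2 - p'$ and the larger target initial wealth $x_2 - p$, I obtain a strategy $\psi\in\Psi(x_2-p,-A)$ satisfying $V_t(x_2-p,\psi) \geq V_t(x_2-p',\psi')$ for every $t\in[0,T]$.

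Subtracting $\Xhh_\tau$ from both sides of this inequality at $t=\tau$ and combining with the two conditions comprising (AO$'$) above yields that $(p,\psi,\tau)$ also satisfies (AO$'$), which contradicts $p\in\cH^{f,h}(x_2)$. The structural dichotomy for $\cH^{f,h}(x_2)$ then follows at once: the set is upward-closed in $\rr$, so if nonempty it must be either $[\pinf^{f,h},\infty)$ (precisely when the infimum $\pinf^{f,h}$ is attained, in which case $\pinf^{f,h}=\breve{p}^{f,h}$) or $(\pinf^{f,h},\infty)$, where $\pinf^{f,h}\in\rr\cup\{-\infty\}$.

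There is no real obstacle here beyond bookkeeping: the only place one needs to be careful is to apply Assumption \ref{ass1.1} in the correct direction (lower initial wealth to higher initial wealth, rather than higher to lower), and to remember that the cash flow process for the holder is $-A$ rather than $A$, so the assumption is invoked with $-A$ as stipulated in the hypotheses of the lemma.
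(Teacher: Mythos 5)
Your proof is correct and is exactly the argument the paper intends: the paper omits the proof of Lemma \ref{lem1.3} as the mirror image of Lemma \ref{lem1.2}, and you reproduce that contradiction argument with the monotonicity of Assumption \ref{ass1.1} (applied with $-A$) correctly reversed, since $p'>p$ gives the smaller initial wealth $x_2-p'<x_2-p$. The only point worth noting is that for the holder it suffices to propagate condition (AO$'$) for the single stopping time $\tau$ furnished by the assumed arbitrage, rather than for all $\tau\in\cT$ as in the issuer's case, and you handle this correctly.
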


%\begin{definition} \label{def1.10a} {\rm
%The set of all {\it holder's loss-generating costs} for $\cC^a$ is given by
%\[
%\cH^{l,h}(x_2):=\big\{p\in\rr\,|\  \forall \,(\psi,\tau)\in\Psi (x_2-p,-A)\times\cT\!:(p,\psi,\tau)\in \text{(LG$'$)}\big\}.
%\]}
%\end{definition}
%
%Under Assumption \ref{ass1.1}, it is clear that $\cH^{l,h}(x_2)\subseteq \cH^{f,h}(x_2)$ and thus $\pinf^{l,h}(x_2,\cC^a) \geq \pinf^{f,h}(x_2,\cC^a)$ where $\pinf^{l,c}=\pinf^{l,h}(x_2,\cC^a):=\inf \cH^{l,h}(x_2)$.

\begin{definition} \label{def1.10} {\rm
We say that $p \in \rr$ is a {\it holder's cost with negligible loss} if for every $\epsi>0$ there exists a pair $(\psi,\tau)\in\Psi (x_2-p,-A)\times\cT$ such that $V_{\tau}(x_2-p,\psi)- \Xhh_{\tau} \geq \Vben_{\tau}(x_2)-\epsi $. The set of all holder's costs with negligible loss for $\cC^a$ is denoted by $\cH^{n,h}(x_2)$, that is,
\[
\cH^{n,h}(x_2):=\big\{p\in\rr\,|\  \forall \, \epsilon >0 \, \exists \, (\psi,\tau)\in\Psi (x_2-p,-A)\times\cT\!:(p,\psi,\tau)\in \text{(BL$_{\epsi}'$)}\big\}.
\] }
\end{definition}

%%%%%%%%%%%%%%%%%%%%%%%%%%%%%%%%%%%%%%%%%%%%%%%%%%%%%%%%%%%%%%%%%%%%%%%%%
\subsection{Superhedging Costs}  \label{sec1.3}
%%%%%%%%%%%%%%%%%%%%%%%%%%%%%%%%%%%%%%%%%%%%%%%%%%%%%%%%%%%%%%%%%%%%%%%%%

The concepts of a (strict) superhedging strategy and the associated cost for the issuer and the holder are fairly standard.
For the issuer, they are based on conditions (SH) and (AO), respectively, whereas for the holder they hinge on conditions (SH$'$) and (AO$'$), respectively. This means that for the issuer we need to impose some conditions that are valid for every $\tau\in\cT$, whereas for the holder it suffices to postulate that the analogous conditions are satisfied for some $\tau\in\cT$.

%%%%%%%%%%%%%%%%%%%%%%%%%%%%%%%%%%%%%%%%%%%%%%%%%%%%%%%%%%%%%%%%%%%%%%%%
\subsubsection{Issuer's Superhedging Costs} \label{sec1.3.1}
%%%%%%%%%%%%%%%%%%%%%%%%%%%%%%%%%%%%%%%%%%%%%%%%%%%%%%%%%%%%%%%%%%%%%%%%

We first introduce the notion of the lower bound for issuer's strict superhedging costs.

\begin{definition} \label{def1.11}
{\rm The lower bound for issuer's strict superhedging costs for $\cC^a$ is given by $\pinf^{a,i}(x_1,\cC^a)$ $:=\inf \,\cH^{a,i}(x_1)$ where
%\[
% \cH^{a,i}(x_1):=\big\{ p\in\rr \,|\  \exists \,\phi\in\Psi (x_1+p,A)\!\text{ such that  for all } \tau, (p,\phi,\tau)\in\text{(AO)} \big\}.
%\]
\[
\cH^{a,i}(x_1):=\big\{ p\in\rr :\,\exists \,\phi\in\Psi(x_1+p,A):\! (p,\phi) \in \textrm{(AO)}\big\}.
\]
If the equality $\pinf^{a,i}(x_1,\cC^a)=\min \,\cH^{a,i}(x_1)$ holds, then it is denoted as $\breve{p}^{a,i}(x_1,\cC^a)$ and called the {\it issuer's minimum strict superhedging cost} for $\cC^a$.}
\end{definition}

It is readily seen that $\cH^{a,i}(x_1)$ is the complement of $\cH^{f,i}(x_1)$ and thus, in view of Lemma \ref{lem1.2},
the equality $\pinf^{a,i}(x_1,\cC^a)=\psup^{f,i}(x_1,\cC^a)$ is satisfied under Assumption \ref{ass1.1}. More precisely, we deal with the following alternative: either
\begin{equation} \label{first}
\cH^{f,i}(x_1)=(-\infty,\wh{p}^{f,i}\,]\ \,\text{and}\ \,\cH^{a,i}(x_1)=(\pinf^{a,i},\infty)
\end{equation}
or
\begin{equation} \label{second}
\cH^{f,i}(x_1)=(-\infty,\psup^{f,i}\,) \ \,\text{and}\ \,\cH^{a,i}(x_1)=[\,\breve{p}^{a,i},\infty).
\end{equation}

\begin{definition} \label{def1.12}
{\rm The lower bound for issuer's superhedging costs for $\cC^a$ is given by $\pinf^{s,i}(x_1,\cC^a)$ $:=\inf \,\cH^{s,i}(x_1)$ where
%\[
% \cH^{s,i}(x_1):=\big\{ p\in\rr \,|\  \exists \,\phi\in\Psi (x_1+p,A)\! \text{ such that for all } \tau, (p,\phi,\tau)\in\text{(SH)} \big\}.
%\]
\[
\cH^{s,i}(x_1):=\big\{p\in\rr :\,\exists\,\phi\in\Psi (x_1+p,A)\!: (p,\phi) \in \textrm{(SH)}\big\}.
\]
If the equality $\pinf^{s,i}(x_1,\cC^a)=\min\,\cH^{s,i}(x_1)$ holds, then $\pinf^{s,i}(x_1,\cC^a)$ is denoted as $\breve{p}^{s,i}(x_1,\cC^a)$ and called the {\it issuer's minimum superhedging cost} for $\cC^a$.}
\end{definition}

It is clear that $\cH^{a,i}(x_1) \subseteq \cH^{s,i}(x_1)$ and thus $\pinf^{s,i}(x_1,\cC^a) \leq \pinf^{a,i}(x_1,\cC^a)$.
In general, it may occur that $\pinf^{s,i}(x_1,\cC^a)<\pinf^{a,i}(x_1,\cC^a)=\psup^{f,i}(x_1,\cC^a)$.
%For example, when $\cH^{a,i}(x_1) \neq \cH^{s,i}(x_1)$,
%i.e., there exists $(p_0,\varphi_0)$ such that for all $\tau$, $ \ V_{\tau}(x_1+p_0,\phi_0)+\Xhh_{\tau}=\Vben_{\tau}(x_1)$, moreover, if $p_0<p$ for any $p\in \cH^{a,i}(x_1)$
%(this is possible, e.g. if the wealth process $V$ satisfies (\ref{eq3x}), the strictly comparison theorem of BSDEs will make this true),
%then clearly we have $\pinf^{s,i}(x_1,\Xhh) < \pinf^{a,i}(x_1,\Xhh)=\psup^{f,i}(x_1,\Xhh)$.
To avoid this problematic situation, we introduce Assumption \ref{ass1.2}, which ensures that $\pinf^{s,i}(x_1,\cC^a)=\pinf^{a,i}(x_1,\cC^a)$
and thus also $\pinf^{s,i}(x_1,\cC^a)=\psup^{f,i}(x_1,\cC^a)$. Obviously, Assumption \ref{ass1.2} is stronger than Assumption \ref{ass1.1}.

\begin{assumption}  \label{ass1.2}
{\rm  The {\it forward strict monotonicity} property holds: for all $x,p\in\rr,\, \phi\in\Psi(x+p,A)$ and $p'>p$  (respectively, $p' <p$), there exists a trading strategy $\phi'\in\Psi (x+p',A)$ such that $V_t(x+p',\phi')>V_t(x+p,\phi)$ (respectively, $V_t(x+p',\phi')<V_t(x+p,\phi)$) for every $t\in [0,T]$.}
\end{assumption}

\begin{lemma}  \label{lem1.4}
If Assumption \ref{ass1.2} is satisfied, then the equality $\pinf^{s,i}(x_1,\cC^a)=\pinf^{a,i}(x_1,\cC^a)$ holds and thus
$\psup^{f,i}(x_1,\cC^a)=\pinf^{s,i}(x_1,\cC^a)=\pinf^{a,i}(x_1,\cC^a)$.
\end{lemma}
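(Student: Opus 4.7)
The plan is to leverage the already-noted identity $\psup^{f,i}(x_1,\cC^a)=\pinf^{a,i}(x_1,\cC^a)$ (established after Definition \ref{def1.11} as a consequence of $\cH^{a,i}(x_1)$ being the complement of $\cH^{f,i}(x_1)$ together with Lemma \ref{lem1.2}) along with the trivial inclusion $\cH^{a,i}(x_1)\subseteq\cH^{s,i}(x_1)$, which yields $\pinf^{s,i}\leq\pinf^{a,i}$ without any extra assumption. Thus the only missing ingredient is the reverse inequality $\pinf^{a,i}\leq\pinf^{s,i}$, and this is exactly where Assumption \ref{ass1.2} will be used in an essential way.

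To establish it, I would fix an arbitrary $p\in\cH^{s,i}(x_1)$ together with a certifying superhedging strategy $\phi\in\Psi(x_1+p,A)$, so that $V_t(x_1+p,\phi)+\Xhh_t\geq\Vben_t(x_1)$ for every $t\in[0,T]$. For any $p'>p$, Assumption \ref{ass1.2} (with $x=x_1$) produces $\phi'\in\Psi(x_1+p',A)$ satisfying $V_t(x_1+p',\phi')>V_t(x_1+p,\phi)$ pathwise for all $t\in[0,T]$. Adding $\Xhh_t$ to both sides gives $V_t(x_1+p',\phi')+\Xhh_t>\Vben_t(x_1)$ pointwise on $[0,T]$. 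Consequently, for every $\tau\in\cT$ both clauses of (AO) hold (with the strict-inequality event of probability one), so $(p',\phi',\tau)\in\,$(AO) for every $\tau$, which by Definition \ref{def1.3} means $p'\in\cH^{a,i}(x_1)$.

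It follows that for each $p\in\cH^{s,i}(x_1)$ the whole open half-line $(p,\infty)$ is contained in $\cH^{a,i}(x_1)$, and thus $\pinf^{a,i}(x_1,\cC^a)\leq p$. Taking the infimum over $p\in\cH^{s,i}(x_1)$ gives $\pinf^{a,i}\leq\pinf^{s,i}$, so together with the reverse inequality we obtain $\pinf^{s,i}=\pinf^{a,i}$. Combining with $\psup^{f,i}=\pinf^{a,i}$ then yields the full chain $\psup^{f,i}=\pinf^{s,i}=\pinf^{a,i}$. The degenerate case $\cH^{s,i}(x_1)=\emptyset$ is handled for free: then also $\cH^{a,i}(x_1)=\emptyset$ by the inclusion, and both infima equal $+\infty$ by the convention $\inf\emptyset=+\infty$.

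The only conceptual point to watch is that the strict inequality in Assumption \ref{ass1.2} must propagate pathwise at every $t\in[0,T]$, not merely in probability or in expectation, because we need the pointwise strict inequality $V_\tau(x_1+p',\phi')+\Xhh_\tau>\Vben_\tau(x_1)$ to guarantee the positive-probability clause in (AO) for \emph{every} stopping time $\tau$ simultaneously. Fortunately Assumption \ref{ass1.2} is formulated precisely in this pathwise fashion, so this step is immediate and the rest is bookkeeping; there is no genuine analytic obstacle.
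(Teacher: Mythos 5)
Your proposal is correct and follows essentially the same route as the paper: the inclusion $\cH^{a,i}(x_1)\subseteq\cH^{s,i}(x_1)$ gives $\pinf^{s,i}\leq\pinf^{a,i}$ for free, and Assumption \ref{ass1.2} is used to upgrade a superhedging strategy at level $p$ to a strict superhedging strategy at any level $p'>p$ (the paper writes $p'=p+\varepsilon$), yielding $(p,\infty)\subseteq\cH^{a,i}(x_1)$ and hence the reverse inequality, with the empty case and the identity $\psup^{f,i}=\pinf^{a,i}$ handled exactly as in the text. Your explicit remark that the pathwise strict inequality is what makes the positive-probability clause of (AO) hold for every stopping time simultaneously is a useful elaboration of a step the paper labels as ``clear,'' but it is not a different argument.
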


\proof
Let us first assume that  $\cH^{s,i}(x_1)\neq \emptyset$ or, equivalently, that $\pinf^{s,i}(x_1,\cC^a)<\infty $ (recall that $\inf \emptyset=\infty $). Since Assumption \ref{ass1.2} holds, it is clear that for an arbitrary $p\in\cH^{s,i}(x_1)$ and any $\varepsilon>0$,
there exists a strategy $\phi'\in\Psi (x_1+p+\varepsilon)$ such that condition (AO) is satisfied by the pair $(p+\varepsilon,\phi')$.
Hence $p+\varepsilon$ belongs to $\cH^{a,i}(x_1)$ and thus $p+\varepsilon \geq \pinf^{a,i}(x_1,\cC^a)$.
From the arbitrariness of $p \in  \cH^{s,i}(x_1)$ and $\varepsilon>0$, we infer that $ \pinf^{s,i}(x_1,\cC^a)\geq \pinf^{a,i}(x_1,\cC^a)$. From the discussion after Definition \ref{def1.12}, we get $\pinf^{s,i}(x_1,\cC^a)\leq\pinf^{a,i}(x_1,\cC^a)$ and thus $\pinf^{s,i}(x_1,\cC^a)=\pinf^{a,i}(x_1,\cC^a)$. Recalling that $ \psup^{f,i}(x_1,\cC^a)=\pinf^{a,i}(x_1,\cC^a)$ (see the comments after Definition \ref{def1.11}), we conclude that $\psup^{f,i}(x_1,\cC^a)=\pinf^{s,i}(x_1,\cC^a)=\pinf^{a,i}(x_1,\cC^a)$ if $\cH^{s,i}(x_1)\neq \emptyset$. Let us now assume that $\cH^{s,i}(x_1)=\emptyset$ so that $\cH^{a,i}(x_1)=\emptyset$ as well, since $\cH^{a,i}(x_1)\subseteq \cH^{s,i}(x_1)$. Then, on the one hand, $\pinf^{s,i}(x_1,\cC^a)=\pinf^{a,i}(x_1,\cC^a)=\infty $ and, on the other hand, $\psup^{f,i}(x_1,\cC^a)=\infty $, since $\cH^{f,i}(x_1)=\rr$ being the complement of $\cH^{a,i}(x_1)$. Hence the asserted equalities are satisfied in that case as well.
\endproof

%\begin{remark}
%{\rm Using similar arguments as in the proof of Lemma \ref{lem1.4}, one can show that if  Assumption \ref{ass1.2} holds, then $\psup^{l,i}(x_1,\cC^a)=\psup^{f,i}(x_1,\cC^a)$.}
%\end{remark}

%%%%%%%%%%%%%%%%%%%%%%%%%%%%%%%%%%%%%%%%%%%%%%%%%%%%%%%%%%%%%%%%%%%%%%%%
\subsubsection{Holder's Superhedging Costs} \label{sec1.3.2}
%%%%%%%%%%%%%%%%%%%%%%%%%%%%%%%%%%%%%%%%%%%%%%%%%%%%%%%%%%%%%%%%%%%%%%%%

Let us now examine the holder's strict superhedging costs.

\begin{definition} \label{def1.13}
{\rm The upper bound for holder's strict superhedging costs for $\cC^a$ is given by $\psup^{a,h}(x_2,\cC^a)$ $:=\sup \,\cH^{a,h}(x_2)$ where
\[
\cH^{a,h}(x_2):=\big\{ p\in\rr \,|\ \exists \, (\psi,\tau)\in\Psi (x_2-p,-A)\times\cT\!: (p,\psi ,\tau)\in\text{(AO$'$)} \big\}.
\]
%\[
%\cH^{a,h}(x_2):=\big\{ p\in\rr \,|\,\exists \, (\psi,\tau)\in\Psi (x_2-p,-A)\times\cT \
%\text{ such that } (p,\psi ,\tau) \text{ satisfies (AO$'$)} \big\}.
%\]
If the equality $\psup^{a,h}(x_2,\cC^a)=\max \,\cH^{a,h}(x_2)$ holds, then $\psup^{a,h}(x_2,\cC^a)$ is denoted as $\wh{p}^{a,h}(x_2,\cC^a)$ and called the {\it holder's maximum strict superhedging cost} for $\cC^a$.}
\end{definition}

It is easily seen that $\cH^{a,h}(x_2)$ is the complement of $\cH^{f,h}(x_2)$. Hence we infer from Lemma \ref{lem1.3}
that the equality $\psup^{a,h}(x_2,\cC^a)=\pinf^{f,h}(x_2,\cC^a)$ is satisfied under Assumption \ref{ass1.1} and either
\begin{equation} \label{firstn}
\cH^{a,h}(x_2)=(-\infty,\psup^{a,c})\ \,\text{and}\ \, \cH^{f,h}(x_2)=[\,\breve{p}^{f,h},\infty)
\end{equation}
or
\begin{equation} \label{secondn}
\cH^{a,h}(x_2)=(-\infty,\wh{p}^{a,c}\,]\ \,\text{and}\ \, \cH^{f,h}(x_2)=(\pinf^{f,h},\infty).
\end{equation}

Similarly as for the issuer, we also introduce the concept of a {\it superhedging} strategy for the holder.

\begin{definition} \label{def1.14}
{\rm The upper bound for holder's superhedging costs for $\cC^a$ equals $\psup^{s,h}(x_2,\cC^a):=\sup \,\cH^{s,h}(x_2)$ where
\[
\cH^{s,h}(x_2):=\big\{p\in\rr \,|\ \exists \,(\psi,\tau)\in \Psi (x_2-p,-A)\times \cT\!: (p,\psi,\tau)\in \text{(SH$'$)}\big\}.
\]
%\[
%\cH^{s,h}(x_2):=\big\{ p\in\rr :\,\exists \, (\psi,\tau)\in\Psi (x_2-p,-A)\times\cT \
%\text{ such that } (p,\psi ,\tau) \text{ satisfies (SH$'$)} \big\}.
%\]
If the equality $\psup^{s,h}(x_2,\cC^a)=\max\,\cH^{s,h}(x_2)$ holds, then $\psup^{s,h}(x_2,\cC^a)$ is denoted as $\wh{p}^{s,h}(x_2,\cC^a)$ and called the {\it holder's maximum superhedging cost} for $\cC^a$.}
\end{definition}

It is clear that $\cH^{a,h}(x_2) \subseteq \cH^{s,h}(x_2)$ and thus the inequality $\psup^{s,h}(x_2,\cC^a) \geq \psup^{a,h}(x_2,\cC^a)$ is valid. Furthermore, Assumption \ref{ass1.2} ensures that the equality holds.

\begin{lemma}  \label{lem1.5}
If Assumption \ref{ass1.2} is satisfied with $-A$, then the equality $\psup^{s,h}(x_2,\cC^a)=\psup^{a,h}(x_2,\cC^a)$ holds and thus
$\pinf^{f,h}(x_2,\cC^a)=\psup^{s,h}(x_2,\cC^a)=\psup^{a,h}(x_2,\cC^a)$.
\end{lemma}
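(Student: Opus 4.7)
The plan is to mirror the proof of Lemma \ref{lem1.4} with the directions of all relevant inequalities reversed, since the roles of ``larger'' and ``smaller'' price are swapped between the issuer and the holder. Specifically, the issuer's price bounds come from a \emph{supremum} over fair prices (hence an \emph{infimum} over superhedging costs), while the holder's bounds come from an \emph{infimum} over fair prices (hence a \emph{supremum} over superhedging costs); so I expect to shift the price \emph{downward} by an arbitrary $\varepsilon>0$, rather than upward as in Lemma \ref{lem1.4}.

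First I would dispose of the trivial case. If $\cH^{s,h}(x_2)=\emptyset$, then the inclusion $\cH^{a,h}(x_2)\subseteq\cH^{s,h}(x_2)$ forces $\cH^{a,h}(x_2)=\emptyset$, so $\psup^{s,h}(x_2,\cC^a)=\psup^{a,h}(x_2,\cC^a)=-\infty$ by our convention $\sup\emptyset=-\infty$, and $\cH^{f,h}(x_2)=\rr$ (being the complement), so $\pinf^{f,h}(x_2,\cC^a)=-\infty$ as well; hence all three quantities coincide.

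Next, suppose $\cH^{s,h}(x_2)\neq\emptyset$. The easy half, $\psup^{s,h}(x_2,\cC^a)\geq \psup^{a,h}(x_2,\cC^a)$, follows from the inclusion $\cH^{a,h}(x_2)\subseteq\cH^{s,h}(x_2)$ already noted before the statement. For the converse, fix an arbitrary $p\in\cH^{s,h}(x_2)$ and $\varepsilon>0$. By Definition \ref{def1.14}, there exists $(\psi,\tau)\in\Psi(x_2-p,-A)\times\cT$ satisfying (SH$'$), that is,
\[
V_{\tau}(x_2-p,\psi)-\Xhh_{\tau}\geq \Vben_{\tau}(x_2).
\]
Now apply Assumption \ref{ass1.2} with $-A$ in place of $A$, with initial endowment $x_2$, and with the pair of prices $p$ and $p-\varepsilon$: since lowering the price from $p$ to $p-\varepsilon$ \emph{raises} the holder's initial wealth from $x_2-p$ to $x_2-p+\varepsilon$, there exists $\psi'\in\Psi(x_2-p+\varepsilon,-A)$ such that $V_t(x_2-p+\varepsilon,\psi')>V_t(x_2-p,\psi)$ for every $t\in[0,T]$. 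Evaluating at $\tau$ and subtracting $\Xhh_\tau$ yields
\[
V_{\tau}(x_2-p+\varepsilon,\psi')-\Xhh_{\tau}>V_{\tau}(x_2-p,\psi)-\Xhh_{\tau}\geq\Vben_{\tau}(x_2),
\]
so the triplet $(p-\varepsilon,\psi',\tau)$ satisfies (AO$'$); hence $p-\varepsilon\in\cH^{a,h}(x_2)$ and therefore $p-\varepsilon\leq\psup^{a,h}(x_2,\cC^a)$. Letting $\varepsilon\downarrow 0$ and then taking the supremum over $p\in\cH^{s,h}(x_2)$ gives $\psup^{s,h}(x_2,\cC^a)\leq\psup^{a,h}(x_2,\cC^a)$, which combined with the reverse inequality yields the first asserted equality.

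Finally, to conclude that $\pinf^{f,h}(x_2,\cC^a)=\psup^{a,h}(x_2,\cC^a)$, I would invoke the observation immediately following Definition \ref{def1.13}: the set $\cH^{a,h}(x_2)$ is the complement of $\cH^{f,h}(x_2)$, and Assumption \ref{ass1.2} (with $-A$) implies Assumption \ref{ass1.1} (with $-A$), so that Lemma \ref{lem1.3} applies and forces $\cH^{f,h}(x_2)$ to be an upward-closed half-line, whose infimum coincides with the supremum of its complement. No single step is particularly subtle; the only technical care needed is to keep track of signs when applying strict forward monotonicity to the shifted external cash flow $-A$, and to verify that the degenerate empty-set case does not spoil the identification with $\pinf^{f,h}$.
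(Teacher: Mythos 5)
Your proposal is correct and follows essentially the same route as the paper's own proof: shift the price down by $\varepsilon$, invoke the strict forward monotonicity (Assumption \ref{ass1.2} with $-A$) to upgrade a superhedging strategy into a strict superhedging strategy at the lower price, conclude $p-\varepsilon\in\cH^{a,h}(x_2)$, and handle the empty-set case separately. Your write-up is in fact slightly more careful than the paper's (which contains a small typo asserting that $\psup^{s,h}\leq\psup^{a,h}$ "is always satisfied" where the inclusion $\cH^{a,h}(x_2)\subseteq\cH^{s,h}(x_2)$ actually gives $\geq$), so nothing further is needed.
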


\proof
Let us first assume that $\cH^{s,h}(x_2)\neq\emptyset$ or, equivalently, that $\psup^{s,h}(x_2,\cC^a)>-\infty$.  Under Assumption \ref{ass1.2}, for any $p\in\cH^{s,h}(x_2)$ and $\varepsilon>0$, there exists a pair $(\phi',\tau)\in\Psi (x_2-(p-\varepsilon))\times \cT $ such that $(p-\varepsilon,\phi',\tau)$ satisfy condition (AO$'$), that is, $p-\varepsilon\in\cH^{a,h}(x_2)$ and thus $p-\varepsilon\leq \psup^{a,h}(x_2,\cC^a)$. Since $p \in  \cH^{s,h}(x_2)$ and $\varepsilon>0$ are arbitrary, we conclude that $\psup^{s,h}(x_2,\cC^a)\leq \psup^{a,h}(x_2,\cC^a)$. The inequality $\psup^{s,h}(x_2,\cC^a)\leq \psup^{a,h}(x_2,\cC^a)$ is always satisfied and thus $\psup^{s,h}(x_2,\cC^a)=\psup^{a,h}(x_2,\cC^a)=\pinf^{f,h}(x_2,\cC^a)$ where the last equality is known to hold under Assumption \ref{ass1.1}. Let us now assume that $\cH^{s,h}(x_2)=\emptyset$. Then, on the one hand, $\psup^{s,h}(x_2,\cC^a)=\psup^{a,h}(x_2,\cC^a)=-\infty $ since
$\cH^{a,h}(x_2)\subseteq \cH^{s,h}(x_2)$ and, on the other hand, $\pinf^{f,h}(x_2,\cC^a)=-\infty $ since $\cH^{f,h}(x_2)$ is the complement of $\cH^{a,h}(x_2)$. Hence the asserted equalities are valid in that case as well.
\endproof

\subsection{Unilateral Acceptable Prices}   \label{sec1.4}
%%%%%%%%%%%%%%%%%%%%%%%%%%%%%%%%%%%%%%%%%%%%%%%%%%%%%%%%%%%%%%%%%%%%%%%%%

Our next goal is to analyze the following problem: under which assumptions a suitably defined {\it replication cost} of an American
contract is also its maximum (respectively, minimum) fair price for the issuer (respectively, the holder). The answer to this question
will lead us to the crucially important concept of unilateral acceptable prices computed by the counterparties.

%%%%%%%%%%%%%%%%%%%%%%%%%%%%%%%%%%%%%%%%%%%%%%%%%%%%%%%%%%%%%%%%%%%%%%%%%
\subsubsection{Issuer's Acceptable Price}   \label{sec1.4.1}
%%%%%%%%%%%%%%%%%%%%%%%%%%%%%%%%%%%%%%%%%%%%%%%%%%%%%%%%%%%%%%%%%%%%%%%%%

We will now study the concept of replication of the contract $\cC^a$ by the issuer. We work hereafter under Assumption \ref{ass1.2} and thus, in view of Lemma \ref{lem1.4}, we have that
\begin{equation} \label{equah}
\psup^{f,i}(x_1,\cC^a)=\pinf^{s,i}(x_1,\cC^a)=\pinf^{a,i}(x_1,\cC^a).
\end{equation}

\begin{definition} \label{def1.15}
{\rm The lower bound for issuer's replication costs for $\cC^a$ is given by $\pinf^{r,i}(x_1,\cC^a)$ \\ $:=\inf \,\cH^{r,i}(x_1)$ where
\begin{align*}
\cH^{r,i}(x_1):=\big\{p\in\rr\,|\,\exists \,(\phi,\tau)\in\Psi(x_1+p,A)\times\cT\!:\text{$(p,\phi)\in $ (SH) \& } (p,\phi,\tau)\in\text{(BE)}\big\}.
\end{align*}
%\begin{align*}
%\cH^{r,i}(x_1)=\big\{ p\in\rr: \,\exists \, (\phi ,\tau)\in\Psi (x_1+p,A)\times\cT
%\text{ such that $(p,\phi)$ satisfy (SH) and } (p,\phi ,\tau) \text{ satisfy  (BE)} \big\}.
%\end{align*}
If the equality $\pinf^{r,i} (x_1,\cC^a)=\min \,\cH^{r,i}(x_1)$ holds, then $\pinf^{r,i} (x_1,\cC^a)$ is denoted as $\breve{p}^{r,i}(x_1,\cC^a)$ and called the {\it issuer's minimum replication cost} for $\cC^a$.}
\end{definition}

Note that in Definition \ref{def1.15} we focus on a particular issuer's superhedging strategy for which a break-even time exists
and we do not impose any restrictions on wealth processes of other trading strategies available to the issuer. Hence, in principle,
for $p\in \cH^{r,i}(x_1)$, it may happen that there exists another pair, say $(p,\psi )$, which is an issuer's strict superhedging strategy. This would mean that the issuer's replication cost is not an issuer's fair price for $\cC^a$. To eliminate this shortcoming of Definition \ref{def1.15}, in the next definition we impose, in addition, the no-arbitrage condition (NA) on all issuer's trading strategies associated with $p$.

\begin{definition} \label{def1.16}
{\rm The lower bound for issuer's fair replication costs for $\cC^a$ is given by $\pinf^{f,r,i}(x_1,\cC^a):=\inf \,\cH^{f,r,i}(x_1)$ where
\begin{align*}
\cH^{f,r,i}(x_1):=& \big\{ p\in\rr\,|\ \exists \, (\phi ,\tau)\in\Psi (x_1+p,A)\times\cT\!:
(p,\phi)\in\text{(SH) \& } (p,\varphi,\tau)\in\text{(BE)}; \\
&\quad  \quad \quad \ \ \,  \forall \,\phi'\in\Psi (x_1+p,A) \, \exists \,\tau'\in\cT\!: \, (p,\phi',\tau')\in\text{(NA)} \big\}.
\end{align*}
%\begin{align*}
%\cH^{f,r,i}(x_1)=& \big\{ p\in\rr: \,\exists \, (\phi ,\tau)\in\Psi (x_1+p,A)\times\cT
%\text{ such that } (p,\phi) \text{ satisfy (SH) and } (p,\varphi,\tau) \text{ satisfy  (BE)}; \\
%&\quad  \quad \quad \quad \forall \,\phi'\in\Psi (x_1+p,A) \,\exists \,\tau'\in\cT \,\text{ such that } (p,\phi' ,\tau') \text{ satisfy (NA)} \big\}.
%\end{align*}
If the equality $\pinf^{f,r,i}(x_1,\cC^a)=\min \,\cH^{f,r,i}(x_1)$ holds, then $\pinf^{f,r,i}(x_1,\cC^a)$ is denoted as $\breve{p}^{f,r,i}(x_1,\cC^a)$ and called the {\it issuer's minimum fair replication cost} for $\cC^a$.}
\end{definition}

\begin{definition} \label{def1.17}
{\rm If the set $\cH^{f,r,i}(x_1)$ is a singleton, then its unique element is denoted as $p^i(x_1,\cC^a)$ and called the {\it issuer's acceptable price}.}
\end{definition}

Obviously, if $p^i(x_1,\cC^a)$ is well defined, then it coincides with $\breve{p}^{f,r,i}(x_1,\cC^a)$.
Let us examine some useful relationships between the conditions introduced in Definition \ref{def1.1}.

\begin{lemma} \label{lem1.6}
(i) If $(p,\phi)$ satisfy (SH) and $(p,\phi,\tau)$ satisfy (NA), then (BE) is valid for $(p,\phi,\tau)$. \hfill \break
(ii) If $(p,\phi)$ fail to satisfy (SH), then there exists $\tau\in\cT $ such that $(p,\phi,\tau)$ satisfy (NA). \hfill \break
(iii) The following conditions are equivalent: \hfill \break
(a) for any $(p,\phi)$ satisfying (SH) there exists $\tau\in\cT $ such that $(p,\phi,\tau)$ satisfy (BE),\hfill \break
(b) for any $(p,\phi)$, there exists $\tau\in\cT $ such that $(p,\phi,\tau)$ satisfy (NA).
\end{lemma}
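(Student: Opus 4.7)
The strategy is to unpack Definitions \ref{def1.1}--\ref{def1.2} and observe that the lemma amounts to elementary logical manipulations of the conditions (SH), (BE), and (NA), together with a single appeal to the optional section theorem in part~(ii). Throughout, I would write $Y:=V(x_1+p,\phi)+\Xhh-\Vben(x_1)$ to compress notation, so that (SH) for a pair reads $Y_t\geq 0$ for every $t\in[0,T]$, while (BE) at $\tau$ reads $Y_\tau=0$ and (NA) at $\tau$ reads either $Y_\tau=0$ a.s.\ or $\P(Y_\tau<0)>0$.

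For part~(i), since $(p,\phi)$ satisfies (SH) as a pair, we have $Y_t\geq 0$ for every $t\in[0,T]$, and evaluating at the stopping time $\tau$ gives $Y_\tau\geq 0$ almost surely, so $\P(Y_\tau<0)=0$. Property (NA) for $(p,\phi,\tau)$ then forces the first alternative, namely $Y_\tau=0$, which is precisely (BE). For part~(ii), the failure of (SH) for $(p,\phi)$ means, by Definition \ref{def1.2}, that the relation $Y_t\geq 0$ fails for some $t\in[0,T]$. Since $Y$ is $\gg$-optional (being c\`adl\`ag and adapted), the set $B:=\{(t,\omega)\,:\,Y_t(\omega)<0\}$ is an optional set whose projection onto $\Omega$ has positive $\P$-measure; the optional section theorem therefore produces a stopping time $\tau\in\cT$ with $\P(Y_\tau<0)>0$, and the triplet $(p,\phi,\tau)$ satisfies the second alternative in the definition of (NA).

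For part~(iii), I would establish the two implications separately, each in a single line using (i) and (ii). For (a)$\Rightarrow$(b), fix $(p,\phi)$: if it satisfies (SH), then (a) produces $\tau$ with (BE), which trivially entails (NA); if it fails (SH), part~(ii) directly provides a stopping time satisfying (NA). Conversely, for (b)$\Rightarrow$(a), fix $(p,\phi)$ satisfying (SH); by (b) there exists $\tau$ with (NA), and part~(i) upgrades this to (BE). The only analytically non-trivial step is the invocation of the optional section theorem in part~(ii); the remainder is pure bookkeeping on the definitions, and no additional hypotheses (such as the monotonicity Assumptions~\ref{ass1.1} or~\ref{ass1.2}) are required.
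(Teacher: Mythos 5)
Your proposal is correct and, in substance, matches the paper's proof: parts (i) and (iii) are the same direct unpacking of Definitions \ref{def1.1}--\ref{def1.2}, and (iii) is assembled from (i) and (ii) exactly as in the paper. The only real divergence is in part (ii). The paper argues by complementation: a pair satisfying (AO) necessarily satisfies (SH), so failure of (SH) means (AO) fails for the pair, i.e.\ some triplet $(p,\phi,\tau)$ fails (AO), and any triplet failing (AO) satisfies (NA). You instead construct the stopping time directly by applying the optional section theorem to the optional set $\{Y<0\}$. Both arguments are valid; yours is more explicit but uses heavier machinery than needed, since the negation of (SH) in Definition \ref{def1.2} already yields a deterministic $t_0\in[0,T]$ with $\P(Y_{t_0}<0)>0$, and the constant stopping time $\tau\equiv t_0$ then witnesses the second alternative in (NA) (and if one prefers the section-theorem route, one should note that the resulting stopping time must be set to $T$ off the event where the section is taken so that it lies in $\cT_{[0,T]}$, which does not affect the conclusion). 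Your closing observation that neither Assumption \ref{ass1.1} nor Assumption \ref{ass1.2} is needed is consistent with the paper, which likewise proves this lemma from the definitions alone.
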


\proof
\noindent (i) The statement is an almost immediate consequence of definitions of conditions (SH), (NA) and (BE). Indeed, suppose that a pair $(p,\phi)$ satisfies (SH) and a triplet $(p,\phi,\tau)$ complies with (NA). Clearly, the inequality $\P (V_{\tau}(x_1+p,\phi)+\Xhh_{\tau}< \Vben_{\tau}(x_1))>0$ cannot hold and thus $(p,\phi,\tau)$ must satisfy (BE).

\noindent (ii)  We know that if a pair $(p,\phi)$ satisfies (AO), then it fulfills (SH) and thus if $(p,\phi)$ fails to satisfy (SH), then condition (AO) is not met. Recalling that, in essence, the class of pairs $(p,\phi)$ satisfying (AO) is the complement of (NA), we conclude that there exists a stopping time $\tau\in\cT $ such that the triplet $(p,\phi,\tau)$ complies with (NA).
%One can also prove (ii) directly in the following way:  since $(p,\phi)$ do not satisfy (SH) there exists $\epsi >0$ such that
%\[
%\P \big( \inf_{\, t \in [0,T]} \big( V_t(x_1+p,\psi )+\Xhh_t-\Vben_t(x_1)\big) <-\epsi \big)>0
%\]
%and thus there exists $\tau\in\cT $ such that $(p,\phi,\tau)$ satisfy (NA).

\noindent (iii) We first show that (a) implies (b).  We know from (ii) that if a pair $(p,\phi)$ does not satisfy condition (SH), then
there exists $\tau\in\cT $ such that $(p,\phi,\tau)$ complies with (NA). If $(p,\phi)$ satisfies (SH)
then, from condition (a), there exists $\tau\in\cT $ such that the triplet $(p,\phi,\tau)$ fulfills (BE), and thus it also satisfies (NA). Let us now assume that condition (b) is met. Then, from part (i), condition (a) holds as well.
\endproof

In view of \eqref{equah}, the proof of the following lemma is obvious and thus it is omitted.

\begin{lemma} \label{lem1.7}
We have that $\cH^{s,i}(x_1) \supseteq \cH^{r,i}(x_1)\supseteq \cH^{f,r,i}(x_1)=\cH^{f,i}(x_1) \cap \cH^{r,i}(x_1)$ and thus
\begin{equation} \label{xprices1}
\psup^{f,i}=\pinf^{s,i} \leq \pinf^{r,i} \leq \pinf^{f,r,i}.
\end{equation}
\end{lemma}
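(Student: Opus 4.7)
The plan is to verify the three set relations first, then read off the inequality chain by taking infima and invoking Lemma \ref{lem1.4} via equation \eqref{equah}. Since all three claims are essentially definitional bookkeeping, I expect no real obstacle; the only mild care needed is in recognizing that $\cH^{f,r,i}(x_1)$ has been defined by literally appending the fair-price clause to the conditions that define $\cH^{r,i}(x_1)$.

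First I would verify $\cH^{s,i}(x_1)\supseteq \cH^{r,i}(x_1)$. Take any $p\in\cH^{r,i}(x_1)$ and pick the witness pair $(\phi,\tau)\in\Psi(x_1+p,A)\times\cT$ with $(p,\phi)\in$ (SH) and $(p,\phi,\tau)\in$ (BE). The superhedging clause alone already places $p$ into $\cH^{s,i}(x_1)$. Next I would verify $\cH^{f,r,i}(x_1)=\cH^{f,i}(x_1)\cap\cH^{r,i}(x_1)$. Reading Definition \ref{def1.16}, the first line of the defining condition is exactly membership in $\cH^{r,i}(x_1)$, while the second line (``$\forall\,\phi'\,\exists\,\tau'\!:(p,\phi',\tau')\in$ (NA)'') is precisely the defining property of $\cH^{f,i}(x_1)$ as given in Definition \ref{def1.4}. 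So the two descriptions coincide. The inclusion $\cH^{r,i}(x_1)\supseteq\cH^{f,r,i}(x_1)$ is then immediate from this intersection representation.

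For the price inequalities, I would use the elementary monotonicity: if $A\supseteq B$ then $\inf A\le\inf B$ (with the convention $\inf\emptyset=\infty$ already in force). Applying this to $\cH^{s,i}(x_1)\supseteq\cH^{r,i}(x_1)\supseteq\cH^{f,r,i}(x_1)$ yields $\pinf^{s,i}\le\pinf^{r,i}\le\pinf^{f,r,i}$. Finally, I would invoke \eqref{equah}, which is a direct consequence of Lemma \ref{lem1.4} (valid under the standing Assumption \ref{ass1.2}), to replace $\pinf^{s,i}$ by $\psup^{f,i}$ on the left, obtaining the full chain \eqref{xprices1}. As no additional structural input is required beyond the definitions and Lemma \ref{lem1.4}, the proof reduces to writing out these three observations, which is why the authors remark that it may be omitted.
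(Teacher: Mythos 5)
Your argument is correct and is exactly the definitional bookkeeping the authors have in mind when they omit the proof as obvious: the first clause of Definition \ref{def1.16} is membership in $\cH^{r,i}(x_1)$, the second is membership in $\cH^{f,i}(x_1)$, the inclusions follow, and the inequalities come from monotonicity of the infimum together with \eqref{equah}. Nothing is missing.
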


In the next result, we study the basic properties of issuer's costs.

\begin{proposition} \label{pro1.1}
(i) If $\cH^{f,r,i}(x_1)\ne\emptyset $, then it is a singleton and the issuer's acceptable price $p^i = p^i(x_1,\cC^a)$ satisfies
\begin{equation} \label{bg11}
-\infty< p^i = \wh{p}^{f,i}=\breve{p}^{r,i}=\breve{p}^{s,i}<+\infty .
\end{equation}
(ii) If $\cH^{r,i}(x_1)\ne\emptyset $, then $\psup^{f,i}=\pinf^{s,i}\leq \pinf^{r,i}<\infty$. \\
(iii) If $\cH^{r,i}(x_1)=\emptyset $, then $\psup^{f,i}=\pinf^{s,i}\leq \pinf^{r,i}=\infty$.
\end{proposition}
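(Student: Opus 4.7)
The plan is to leverage Lemma~\ref{lem1.7} as the engine for all three parts, since it already supplies the key chain
$\psup^{f,i}=\pinf^{s,i}\leq \pinf^{r,i}\leq \pinf^{f,r,i}$ as well as the crucial identity
$\cH^{f,r,i}(x_1)=\cH^{f,i}(x_1)\cap\cH^{r,i}(x_1)$. Everything reduces to a sandwich argument forcing equality across the chain whenever $\cH^{f,r,i}(x_1)$ is non-empty, plus a trivial discussion of the two cases for $\cH^{r,i}(x_1)$.

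For part (i), I would pick any $p\in\cH^{f,r,i}(x_1)$. Since by Lemma~\ref{lem1.7} we have $\cH^{f,r,i}(x_1)\subseteq\cH^{f,i}(x_1)$, the fair-price characterisation \eqref{eq7a} gives $p\leq \psup^{f,i}$. On the other hand, $p\in\cH^{f,r,i}(x_1)$ yields $p\geq \pinf^{f,r,i}$. Combining with the chain from Lemma~\ref{lem1.7}, namely $\psup^{f,i}=\pinf^{s,i}\leq \pinf^{r,i}\leq \pinf^{f,r,i}\leq p\leq \psup^{f,i}$, forces all five quantities to coincide. This immediately shows that any $p\in\cH^{f,r,i}(x_1)$ equals the common value, so the set is a singleton; and since the infimum $\pinf^{s,i}$, $\pinf^{r,i}$, $\pinf^{f,r,i}$ and the supremum $\psup^{f,i}$ are all attained at the same point $p\in\rr$, they coincide with $\breve{p}^{s,i}$, $\breve{p}^{r,i}$, $\breve{p}^{f,r,i}$, $\wh{p}^{f,i}$ respectively, giving \eqref{bg11}. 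The finiteness $-\infty<p^i<\infty$ is automatic because $p^i\in\rr$ by definition.

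For part (ii), non-emptiness of $\cH^{r,i}(x_1)$ provides some $p\in\rr$ with $\pinf^{r,i}\leq p<\infty$, which combined with Lemma~\ref{lem1.7} gives $\psup^{f,i}=\pinf^{s,i}\leq\pinf^{r,i}<\infty$. For part (iii), the convention $\inf\emptyset=\infty$ immediately forces $\pinf^{r,i}=\infty$, and the inequality $\psup^{f,i}=\pinf^{s,i}\leq \pinf^{r,i}$ from Lemma~\ref{lem1.7} then yields the claim trivially (regardless of whether $\cH^{s,i}(x_1)$ is itself empty, in which case $\pinf^{s,i}=\infty$ as well).

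The only part that requires any thought is (i), and even there the main obstacle is purely bookkeeping: one must be careful to distinguish between $\pinf^{r,i}$ being an infimum and $\breve{p}^{r,i}$ being that infimum when attained, and similarly for the other three costs, so that the four identifications in \eqref{bg11} all follow from the observation that the common value lies in the relevant set. No additional hypotheses beyond Assumption~\ref{ass1.2} (already in force in Section~\ref{sec1.4.1}) are needed, since Lemma~\ref{lem1.7} already incorporates it via \eqref{equah}.
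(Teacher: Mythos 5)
Your proposal is correct and follows essentially the same route as the paper: both arguments combine the inclusion $\cH^{f,r,i}(x_1)\subseteq\cH^{f,i}(x_1)$ (giving $\pinf^{f,r,i}\leq\psup^{f,i}$) with the chain \eqref{xprices1} from Lemma \ref{lem1.7} to collapse all bounds to a single value, from which the singleton property and the attainment of the infima/supremum follow. The bookkeeping you flag (distinguishing each bound from its attained version) is handled in the paper in exactly the way you describe, so no further comment is needed.
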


\proof We start by proving (i). If $\cH^{f,r,i}(x_1) \ne \emptyset$, then from the inclusion $\cH^{f,r,i}(x_1)\subseteq \cH^{f,i}(x_1)$
we obtain the inequality $\pinf^{f,r,i} \leq \psup^{f,i}$ and thus, in view of \eqref{xprices1}, we have $\psup^{f,i}=\pinf^{s,i}=\pinf^{r,i}=\pinf^{f,r,i}$. Moreover, $\cH^{f,i}(x_1)\ne \emptyset$ and $\cH^{s,i}(x_1)\ne \emptyset$ and thus $\psup^{f,i}>-\infty $ and $\pinf^{s,i}<\infty $. We conclude that $-\infty <\psup^{f,i}=\pinf^{r,i}=\pinf^{f,r,i}=\pinf^{s,i}<\infty $. From the inclusion $\cH^{f,r,i}(x_1)\subseteq \cH^{f,i}(x_1)$ and the equalities $\sup \,\cH^{f,i}(x_1)=\psup^{f,i}=\pinf^{f,r,i}=\inf \,\cH^{f,r,i}(x_1)$, we deduce that for arbitrary $p_1,p_2\in\cH^{f,r,i}(x_1)$ and $p_3\in \cH^{f,i}(x_1)$ we have $p_1=p_2\ge p_3$ and thus $\cH^{f,r,i}(x_1)$ is a singleton and its unique element is not less than any element of $\cH^{f,i}(x_1)$. Consequently, $\wh{p}^{f,i}$ and $p^i$ are well defined and satisfy $-\infty<\wh{p}^{f,i}=p^i<+\infty$. Furthermore, $\cH^{f,r,i}(x_1) \subseteq  \cH^{r,i}(x_1)$ and thus the equality $p^i=\pinf^{r,i}$ implies that $\breve{p}^{r,i}$ exists as well and coincides with $p^i$. We conclude that \eqref{bg11} is valid. This means, in particular, that $\cH^{f,i}(x_1)=(-\infty ,\wh{p}^{f,i}]=(-\infty ,\breve{p}^{r,i}]=(-\infty ,p^i]$. Parts (ii) and (iii) are easy consequences of \eqref{xprices1}.
\endproof

\begin{remark} \label{rem1.2}
{\rm Part (i) in Proposition \ref{pro1.1} shows that if there exists a number $p$ for which an issuer's fair replicating
strategy exists and $p$ is the unique issuer's acceptable price, his maximum fair price,
minimum replication cost, and the infimum of (strict) superhedging costs (but, obviously, it is not a strict superhedging cost). We thus deal here with a highly desirable situation but, sadly, it is not easy to check whether a number $p$ with above-mentioned properties exists.  As expected, to overcome this difficulty we will impose additional assumptions on wealth processes of trading strategies. Recall that in the complete linear market, one can show that $\cH^{r,i}(x_1)\ne\emptyset$ but, to the best of our knowledge, the properties of the set $\cH^{f,r,i}(x_1)$ have not been studied in the existing literature.
} \end{remark}

\subsubsection{Holder's Acceptable Price} \label{sec1.4.2}
%%%%%%%%%%%%%%%%%%%%%%%%%%%%%%%%%%%%%%%%%%%%%%%%%%%%%%%%%%%%%%%%%%%%%%%%%

We postulate that Assumption \ref{ass1.2} is satisfied with $-A$ and thus, in view of Lemma \ref{lem1.5}, we have that
\begin{equation} \label{equac}
\pinf^{f,h}(x_2,\cC^a)=\psup^{s,h}(x_2,\cC^a)=\psup^{a,h}(x_2,\cC^a).
\end{equation}
The notion of the holder's replication cost introduced is through the following definition.

\begin{definition} \label{def1.18}
{\rm The upper bound for holder's replication costs for $\cC^a$ is given by  $\psup^{r,h}(x_2,\cC^a)$ \\ $:=\sup \,\cH^{r,h}(x_2)$ where
\begin{equation*}
\cH^{r,h}(x_2)=\big\{ p\in\rr \,|\   \exists \, (\psi,\tau)\in\Psi (x_2-p,-A)\times\cT\!: (p,\psi ,\tau)\in\text{(BE$'$)} \big\}.
\end{equation*}
Equivalently,
\begin{equation*} % \label{eq13}
\psup^{r,h}(x_2,\cC^a):=-\inf \big\{ q\in\rr \,|\  \exists \, (\psi ,\tau)\in\Psi (x_2+q,-A)\times\cT\!: V_{\tau}(x_2+q,\psi )-\Xhh_{\tau}=\Vben_{\tau}(x_2) \big\}.
\end{equation*}
If the equality $\psup^{r,h}(x_2,\cC^a)=\max \,\cH^{r,h}(x_2)$ holds, then $\psup^{r,h}(x_2,\cC^a)$ is denoted as $\wh{p}^{r,h}(x_2,\cC^a)$ and called the {\it holder's maximum replication cost} for $\cC^a$.}
\end{definition}

To establish the existence of the holder's acceptable price, we will use the concept of the holder's fair replication costs.

\begin{definition} \label{def1.19}
{\rm The upper bound for holder's fair replication costs for $\cC^a$ is given by $ \psup^{f,r,h}(x_2,\cC^a)=\sup \,\cH^{f,r,h}(x_2)$ where
\begin{align*}
\cH^{f,r,h}(x_2):=\big\{ p\in\rr \,|\ & \exists \, (\psi,\tau)\in\Psi (x_2-p,-A)\times\cT\!:
\text{$(p,\psi,\tau) \in$ (BE$'$)}  \ \&  \\ &  \forall \, (\psi',\tau' )\in\Psi (x_2-p,-A)\times\cT\!:  \text{$(p,\psi',\tau' ) \in $ (NA$'$)} \big\}.
\end{align*}
%\begin{align*}
%\cH^{f,r,h}(x_2):=\big\{ p\in\rr:\, & \exists \, (\psi ,\tau)\in\Psi (x_2-p,-A)\times\cT
%     \text{ the triplet } (p,\psi ,\tau) \text{ satisfies (BE$'$)}  \text{ and } \\
%\quad \quad  & \forall \, (\psi' ,\tau' )\in\Psi (x_2-p,-A)\times\cT
%     \text{ the triplet } (p,\psi' ,\tau' ) \text{ satisfies (NA$'$)} \big\}.
%\end{align*}
If $\psup^{f,r,h}(x_2,\cC^a)=\max \,\cH^{f,r,h}(x_2)$, then $\psup^{f,r,h}(x_2,\cC^a)$ is denoted as $\wh{p}^{f,r,h}(x_2,\cC^a)$ and called the {\it  holder's maximum fair replication cost} for $\cC^a$.}
\end{definition}

\begin{definition} \label{def1.20}
{\rm If the set $\cH^{f,r,h}(x_2)$ is a singleton, then its unique element is denoted as $p^h(x_2,\cC^a)$ and called the {\it holder's acceptable price}.}
\end{definition}

Notice that if  $p^h(x_2,\cC^a)$ is well defined, then it is equal to $\wh{p}^{f,r,h}(x_2,\cC^a)$. The following lemma is an easy consequence of Definition \ref{def1.7}.

\begin{lemma} \label{lem1.8}
(i) If $(p,\psi,\tau)$ satisfy (BE$'$), then they satisfy (SH$'$). \\
(ii) If $(p,\psi,\tau)$ satisfy (SH$'$) and (NA$'$), then they satisfy (BE$'$). \\
(iii) If $(p,\psi,\tau)$ do not satisfy (SH$'$), then they satisfy (NA$'$).
\end{lemma}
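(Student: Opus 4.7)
The plan is to unwind Definition \ref{def1.7} in each case; no analytic machinery is needed, since each assertion reduces to an elementary comparison between the almost-sure inequality in (SH$'$), the almost-sure equality in (BE$'$), and the dichotomy in (NA$'$).

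For part (i), I would simply observe that (BE$'$) postulates $V_{\tau}(x_2-p,\psi)-\Xhh_{\tau}=\Vben_{\tau}(x_2)$, which trivially yields the inequality $V_{\tau}(x_2-p,\psi)-\Xhh_{\tau}\ge\Vben_{\tau}(x_2)$ required by (SH$'$).

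For part (ii), I would use (SH$'$) to rule out the second branch of the (NA$'$) dichotomy: since $V_{\tau}(x_2-p,\psi)-\Xhh_{\tau}\ge\Vben_{\tau}(x_2)$ holds $\P$-a.s., the event $\{V_{\tau}(x_2-p,\psi)-\Xhh_{\tau}<\Vben_{\tau}(x_2)\}$ has probability zero. Hence (NA$'$) forces the first branch, that is, $V_{\tau}(x_2-p,\psi)-\Xhh_{\tau}=\Vben_{\tau}(x_2)$, which is precisely (BE$'$).

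For part (iii), the failure of (SH$'$) means that the inequality $V_{\tau}(x_2-p,\psi)-\Xhh_{\tau}\ge\Vben_{\tau}(x_2)$ does not hold $\P$-a.s., so $\P\bigl(V_{\tau}(x_2-p,\psi)-\Xhh_{\tau}<\Vben_{\tau}(x_2)\bigr)>0$, which is exactly the second branch of (NA$'$); hence (NA$'$) is satisfied. There is no real obstacle here: the entire lemma is a tautological rearrangement of Definition \ref{def1.7}, and the only thing to be careful about is distinguishing between strict and non-strict inequalities holding pathwise versus with positive probability.
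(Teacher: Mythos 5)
Your proof is correct and takes essentially the same route the paper intends: the paper omits the proof of Lemma \ref{lem1.8} entirely, declaring it ``an easy consequence of Definition \ref{def1.7}'', and your elementary unwinding of (SH$'$), (BE$'$) and (NA$'$) is exactly that consequence, mirroring the argument the paper does spell out for the issuer-side analogue, Lemma \ref{lem1.6}. Nothing is missing.
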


We infer from Lemma \ref{lem1.8} that $\cH^{s,h}(x_2) \supseteq  \cH^{r,h}(x_2)\supseteq \cH^{f,r,h}(x_2)=\cH^{r,h}(x_2) \cap \cH^{f,h}(x_2)$ and thus, in view of \eqref{equac}, we have
\begin{equation} \label{cequac}
\pinf^{f,h}=\psup^{s,h} \geq \psup^{r,h} \geq \psup^{f,r,h}.
\end{equation}

\begin{proposition} \label{pro1.2}
(i) If $\cH^{f,r,h}(x_2)\ne\emptyset $, then it is a singleton and the holder's acceptable price $p^h = p^h(x_2,\cC^a)$ satisfies
\begin{equation} \label{bg22}
-\infty< p^h = \breve{p}^{f,h}=\wh{p}^{r,h}=\wh{p}^{s,h}<+\infty .
\end{equation}
(ii) If $\cH^{r,h}(x_2) \ne \emptyset $, then $-\infty<\psup^{r,h} \leq\pinf^{f,h}=\psup^{s,h}$.\\
(iii) If $\cH^{r,h}(x_2)=\emptyset $, then $-\infty=\psup^{r,h}\leq\pinf^{f,h}=\psup^{s,h}$.
\end{proposition}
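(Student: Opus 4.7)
The plan is to mirror the argument used for Proposition \ref{pro1.1}, but now working with the dual chain of inequalities \eqref{cequac}, namely $\pinf^{f,h}=\psup^{s,h}\geq \psup^{r,h}\geq \psup^{f,r,h}$, together with Lemma \ref{lem1.8} and the set inclusions $\cH^{f,r,h}(x_2)\subseteq \cH^{r,h}(x_2)\subseteq \cH^{s,h}(x_2)$ and $\cH^{f,r,h}(x_2)\subseteq \cH^{f,h}(x_2)$. All signs in the issuer's argument will flip, but the skeleton is identical: use the two-sided inclusion to collapse the chain of inequalities into equalities, then use uniqueness of the resulting extremum to force the set of fair replication costs to be a singleton.

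For part (i), I would proceed as follows. Assume $\cH^{f,r,h}(x_2)\neq\emptyset$. Any $p\in\cH^{f,r,h}(x_2)$ also lies in $\cH^{f,h}(x_2)$, which by Lemma \ref{lem1.3} is a right half-line with left endpoint $\pinf^{f,h}$; hence $p\geq \pinf^{f,h}$, and taking the supremum yields $\psup^{f,r,h}\geq \pinf^{f,h}$. Combined with \eqref{cequac}, this collapses the entire chain into the equalities $\pinf^{f,h}=\psup^{s,h}=\psup^{r,h}=\psup^{f,r,h}$. The finiteness of this common value follows because $\cH^{s,h}(x_2)\supseteq\cH^{f,r,h}(x_2)\neq\emptyset$ gives $\psup^{s,h}>-\infty$, while $\cH^{f,h}(x_2)\supseteq\cH^{f,r,h}(x_2)\neq\emptyset$ contains a real number and hence $\pinf^{f,h}<\infty$; both transfer across the chain.

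To deduce that $\cH^{f,r,h}(x_2)$ is a singleton, I would take arbitrary $p_1,p_2\in\cH^{f,r,h}(x_2)$ and arbitrary $p_3\in\cH^{f,h}(x_2)$. Since $p_1,p_2\in\cH^{f,h}(x_2)$, we have $p_1,p_2\geq \pinf^{f,h}$; since $p_1,p_2\leq \psup^{f,r,h}=\pinf^{f,h}$, the two inequalities force $p_1=p_2=\pinf^{f,h}\leq p_3$. Denote this common value by $p^h$. The fact that $p^h$ is an element of $\cH^{f,h}(x_2)$, $\cH^{r,h}(x_2)$ and $\cH^{s,h}(x_2)$ attaining the corresponding infimum/suprema shows that $\breve{p}^{f,h}$, $\wh{p}^{r,h}$ and $\wh{p}^{s,h}$ all exist and coincide with $p^h$, which is exactly \eqref{bg22}.

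Parts (ii) and (iii) are then straightforward read-offs from \eqref{cequac} combined with the convention $\sup\emptyset=-\infty$: if $\cH^{r,h}(x_2)\neq\emptyset$ then $\psup^{r,h}>-\infty$ and \eqref{cequac} gives $\pinf^{f,h}=\psup^{s,h}\geq \psup^{r,h}>-\infty$, while if $\cH^{r,h}(x_2)=\emptyset$ then $\psup^{r,h}=-\infty$ and the remaining inequalities in \eqref{cequac} still hold. Since the whole argument is a sign-flipped transcription of the proof of Proposition \ref{pro1.1}, I do not foresee any substantive obstacle; the only point needing mild care is the bookkeeping of strict versus non-strict finiteness at the endpoints of $\cH^{f,h}(x_2)$ and $\cH^{s,h}(x_2)$, which must be read off from Lemmas \ref{lem1.3} and \ref{lem1.5}.
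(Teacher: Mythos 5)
Your proposal is correct and follows essentially the same route as the paper's own proof: use the inclusion $\cH^{f,r,h}(x_2)\subseteq\cH^{f,h}(x_2)$ to get $\psup^{f,r,h}\geq\pinf^{f,h}$, collapse the chain \eqref{cequac} into equalities, and squeeze any two elements of $\cH^{f,r,h}(x_2)$ between $\pinf^{f,h}$ and $\psup^{f,r,h}$ to obtain the singleton property, with (ii) and (iii) read off from \eqref{cequac}. The only cosmetic difference is that you invoke Lemma \ref{lem1.3} where the definition of the infimum already suffices, and you spell out the existence of $\wh{p}^{s,h}$ slightly more explicitly than the paper does.
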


\proof
It is clear that $\cH^{f,r,h}(x_2)\subseteq  \cH^{f,h}(x_2)$ and thus if $\cH^{f,r,h}(x_2) \ne \emptyset $, then $\psup^{f,r,h}<+\infty$ and $\psup^{f,r,h} \geq \pinf^{f,h}$. Furthermore, the sets $\cH^{s,h}(x_2)$ and $\cH^{r,h}(x_2)$ are nonempty and thus we infer from \eqref{cequac} that $- \infty < \pinf^{f,h}=\psup^{r,h}=\psup^{f,r,h}=\psup^{s,h} < \infty $. From the inclusion $\cH^{f,r,h}(x_2)\subseteq \cH^{f,h}(x_2)$ and the equalities $\inf \,\cH^{f,h}(x_2)=\pinf^{f,h}=\psup^{f,r,h}=\sup \,\cH^{f,r,h}(x_2)$,  we see that for arbitrary $p_1,p_2\in\cH^{f,r,h}(x_2)$ and $p_3\in \cH^{f,h}(x_2)$ we have $p_1=p_2\leq p_3$ and thus $\cH^{f,r,h}(x_2)$ is a singleton and its unique element is less than any element of $\cH^{f,h}(x_2)$. Consequently, $\breve{p}^{f,h}$ and $p^h$ are well defined and they satisfy $-\infty< p^h =\breve{p}^{f,h}<+\infty$. Furthermore, $\cH^{f,r,h}(x_2) \subseteq  \cH^{r,h}(x_2)$ and thus the equality $p^h=\psup^{r,h}$ implies that $\wh{p}^{r,h}$ exists and coincides with $p^h$. We conclude that \eqref{bg22} is valid. The proofs of statements (ii) and (iii) are straightforward in view of \eqref{cequac}.
\endproof

%\begin{remark}
%{\rm In the context of part (ii) in Proposition \ref{pro1.2}, one may ask whether both cases $\psup^{s,h}=\psup^{r,h}$ and $\psup^{s,h}>\psup^{r,h}$
%may occur.  We observe that $\cH^{s,h}(x_2)=\cH^{r,h}(x_2) \cup \cH^{a,h}(x_2)$ so that
%\begin{equation*}
%\psup^{s,h}=\sup \cH^{s,h}(x_2)=\sup \, (\cH^{r,h}(x_2) \cup \cH^{a,h}(x_2) )=\max \,\{ \sup \cH^{r,h}(x_2),\sup \cH^{a,h}(x_2)\}
%=\max \,\{\psup^{r,h},\psup^{a,c}\}
%\end{equation*}
%and thus $\psup^{s,h}=\psup^{r,h}$ if $\psup^{a,c} \le \psup^{r,h}$ and $\psup^{s,h}>\psup^{r,h}$ if $\psup^{a,c}>\psup^{r,h}$.
%We conclude that $\psup^{s,h}>\psup^{r,h}$ if there exists a holder's strict superhedging strategy with the cost strictly higher than the cost of any holder's replicating strategy. The analysis of the case (ii) when $\psup^{s,h}=\psup^{r,h}$ is analogous to the issuer's case.}
%\end{remark}

%%%%%%%%%%%%%%%%%%%%%%%%%%%%%%%%%%%%%%%%%%%%%%%%%%%%%%%%%%%%%%%%%%%%%%%%%%%%%%%%%%%%%%%%%%%%%
%%%%%%%%%%%%%%%%%%%%%%%%%%%%%%%%%%%%%%%%%%%%%%%%%%%%%%%%%%%%%%%%%%%%%%%%%%%%%%%%%%%%%%%%%%%%%
\section{Unilateral Pricing Through Reflected BSDEs}  \label{sec4}
%%%%%%%%%%%%%%%%%%%%%%%%%%%%%%%%%%%%%%%%%%%%%%%%%%%%%%%%%%%%%%%%%%%%%%%%%%%%%%%%%%%%%%%%%%%%%
%%%%%%%%%%%%%%%%%%%%%%%%%%%%%%%%%%%%%%%%%%%%%%%%%%%%%%%%%%%%%%%%%%%%%%%%%%%%%%%%%%%%%%%%%%%%%

The goal of this section is to re-examine and extend a BSDE approach to the valuation of American options in nonlinear market, which was initiated in the paper by Dumitrescu et al.~\cite{DQS2017}. Our main goal is to show that unilateral acceptable prices for an American contract $\cC^a$ can be characterized in terms of solutions to reflected BSDEs driven by a multi-dimensional continuous semimartingale $S$.  In this section, we postulate that the wealth process $V=V(y,\phi,A)$ satisfies the forward equation
\begin{equation} \label{ueq3x}
V_t=y-\int_0^t g(u,V_u,\xi_u)\,du+\int_0^t\xi_u\,dS_u+A_t,
\end{equation}
where $y \in \rr$ represents the initial wealth at time 0 of a given trading strategy $\xi$ (recall that $A_0=0$). By applying Lemma \ref{lem:1} with $y_1=x+p<x+p'=y_2,\,f_1=f_2=g$ and $z=\xi$, it is easy to check that Assumption \ref{ass1.2} is met when the dynamics of the wealth process are given by the SDE \eqref{ueq3x} and are uniquely specified by the initial value $y$ and the process $\xi$ (for instance, when $g=g(t,v,z)$ is Lipschitz continuous with respect to $v$).

%%%%%%%%%%%%%%%%%%%%%%%%%%%%%%%%%%%%%%%%%%%%%%%%%%%%%%%%%%%%%%%%%%%%%%%%%%%%%
\subsection{Dynamics of the Wealth Process} \label{sec4.1}
%%%%%%%%%%%%%%%%%%%%%%%%%%%%%%%%%%%%%%%%%%%%%%%%%%%%%%%%%%%%%%%%%%%%%%%%%%%%%

Let us first describe more explicitly the main features of the mechanism of nonlinear trading, which underpins the wealth dynamics given by \eqref{ueq3x}. We start by introducing the notation for traded assets, that is, cash accounts, risky assets, and funding accounts associated with risky assets. It should be stressed, however, that our further developments will not depend on the choice of a particular model for primary assets and trading arrangements and thus our general results are capable of covering a broad spectrum of market models.

Let $\cS=(S^1,\ldots,S^d)$ stand for the collection of prices of a family of $d$ risky assets where the processes $S^1,\dots,S^d$ are continuous semimartingales. Continuous processes of finite variation, denoted as $B^{0,l}$ and $B^{0,b}$, represent the {\it lending} {\it borrowing} unsecured cash accounts, respectively. For every $j=1,2,\ldots,d$, we denote by $B^{j,l}$ (respectively, $B^{j,b}$) the {\it lending} (respectively, {\it borrowing}) {\it funding account} associated with the $i$th risky asset, and also assumed to be continuous processes of finite variation. The financial interpretation of these accounts varies from case to case (for more details, see \cite{BCR2018,BR2015}).  Let us denote by $\cB$ the collection of all cash and funding accounts available to a trader. For simplicity of presentation, we maintain our assumption that the issuer and the holder have identical market conditions but it is clear that this assumption is not relevant for our further developments and thus it can be easily relaxed.
A {\it trading strategy} is an ${\mathbb R}^{3d+2}$-valued, $\gg$-adapted process $\phi=( \xi^1,\ldots,\xi^{d}; \psi^{0,l},\psi^{0,b},\dots ,\psi^{d,l}, \psi^{d,b})$ where the components represent all outstanding positions in the risky assets $S^j,\, j=1,2, \dots,d$, cash accounts $B^{0,l},\, B^{0,b}$, and funding accounts $B^{j,l}, B^{j,b} ,\, j=1,2, \dots,d$ for risky assets.

\begin{definition}\label{self financing} {\rm
We say that a trading strategy $(y,\phi)$ is {\it self-financing} for $\cC^a$ and we write $\phi \in \Psi (y,A)$ if the {\it wealth process} $V(y,\phi,A)$, which is given by
\begin{equation*} %\label{eq1x}
V_t(y,\phi,A)=\sum_{j=1}^{d}\xi^j_t S^j_t+\sum_{j=0}^{d}\psi^{j,l}_t B^{j,l}_t+\sum_{j=0}^{d}\psi^{j,b}_t B^{j,b}_t ,
\end{equation*}
satisfies, for every $t \in [0,T]$,
\begin{equation*} % \label{eq2x}
V_t(y,\phi,A)=y+\sum_{j=1}^{d}\int_0^t\xi^j_u\,dS^j_u+\sum_{j=0}^{d}\int_0^t\psi^{j,l}_u\,dB^{j,l}_u+\sum_{j=0}^{d}\int_0^t\psi^{j,b}_u\,dB^{j,b}_u + A_t
\end{equation*}
subject to additional constraints imposed on the components of $\phi $. In particular, we postulate that $\psi^{j,l}_t \geq 0,\, \psi^{j,b}_t \leq 0$ and $\psi^{j,l}_t \psi^{j,b}_t=0$ for all $j=0,1, \dots ,d$ and $t \in [0,T]$.
% Note that $V_0(y,\phi)=y$ for any strategy $\phi $.
} \end{definition}

Due to additional trading constraints, which depend on the particular trading mechanism, the choice of an initial value $y$ and a process $\xi $ is known to uniquely specify the wealth process of a self-financing strategy $\phi\in\Psi (y,A)$. In addition, one needs also to introduce some form of {\it admissibility} of trading strategies and to postulate that the market model $\cM=(\cB ,\cS ,\Psi (A))$ where the class $\Psi (A) = \cup_{y\in \rr}\Psi (y,A)$ of all admissible trading strategies is arbitrage-free in a suitable sense, for instance, the market model $\cM$ can be assumed to be {\it regular}, in the sense of Bielecki et al.~\cite{BCR2018}.

Since the arbitrage-free feature of a nonlinear market was studied in El Karoui and Quenez~\cite{EQ1997} and, more recently, Bielecki et al.~\cite{BCR2018,BR2015}, we do not elaborate on this issue here and we refer the interested reader to these works.
It is important to notice that, due to the trading constraints, differential funding costs and possibly also some additional adjustment processes, which are not explicitly stated in Definition \ref{self financing}, the dynamics of the wealth process are nonlinear, in general. We refer the reader to Bielecki et al.~\cite{BCR2018,BR2015} for more encompassing versions of the self-financing property of a trading strategy (see, for instance, Definition 4.5 in \cite{BR2015} or Definition 2.2 in \cite{BCR2018}) and to Nie and Rutkowski~\cite{NR1,NR2,NR3,NR4} for explicit examples of nonlinear models with trading constraints and adjustments (in particular, the collateralization of contracts). We only observe that each particular trading arrangement gives rise to an explicit mapping $g$
in the dynamics \eqref{ueq3x} of the wealth, which is sometimes quite complex, but at the same time usually fairly regular.

The following elementary lemma addresses the issue of the (strict) monotonicity of the wealth process driven by \eqref{ueq3x}. Note that since the process $z$ is assumed to be given in the statement of Lemma \ref{lem:1}, we may interpret the SDE \eqref{eq:forwardSDE1} as a deterministic integral equation, which holds for almost all $\omega \in \Omega$.

% \newpage

\begin{lemma} \label{lem:1}
Let $g_j:\Omega\times[0,T]\times\rr\times\rr^d\rightarrow\rr,\,j=1,2$ be $\cP\otimes\cB(\rr)\otimes\cB(\rr^d)/\cB(\rr)$-measurable. Consider the SDEs for $j=1,2$
\begin{equation} \label{eq:forwardSDE1}
v^j_t=y_j-\int_0^t g_j(u,v^j_u,z_u)\,du+k^j_t+\int_0^t z_u\,dS_u + H_t ,
\end{equation}
where $z$ is an $\rr^d$-valued, $\gg$-progressively measurable stochastic process, the $\gg$-adapted process $k=k^1-k^2$ is nondecreasing with $k_0=0$, $S$ is an $\rr^d$-valued semimartingale, and $H$ is a c\`adl\`ag, $\gg$-adapted process.  Assume that \eqref{eq:forwardSDE1} has a unique solution $v^j$ for $j=1,2$. If $g_1(t,v^2_t,z_t)\leq g_2(t,v^2_t,z_t),\,dt\otimes d\P$-a.e. and $y_1\geq y_2$ (respectively, $y_1>y_2$), then $v^1_t\geq v^2_t$ (respectively, $v^1_t>v^2_t$) for all $t \in [0,T]$.
\end{lemma}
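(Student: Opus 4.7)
The plan is to reduce the problem to a deterministic pathwise comparison by exploiting the fact that $z$ and $S$ enter both equations identically. First I would subtract the two equations and observe that the stochastic integral $\int_0^t z_u\,dS_u$ and the càdlàg process $H$ cancel, so that $\delta_t := v^1_t - v^2_t$ and $k_t := k^1_t - k^2_t$ satisfy
\[
\delta_t = (y_1-y_2) + k_t + \int_0^t\bigl[g_2(u,v^2_u,z_u) - g_1(u,v^1_u,z_u)\bigr]\,du.
\]
Here $\delta$ is càdlàg with jumps coming only from $k$, which by hypothesis is nondecreasing with $k_0 = 0$; in particular all jumps of $\delta$ are nonnegative.

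Next I would linearise the $g_1$-increment. Under the (tacit) Lipschitz regularity of $g_1$ in the wealth argument flagged in the paragraph preceding the lemma, I can write $g_1(u,v^1_u,z_u) - g_1(u,v^2_u,z_u) = \alpha_u\,\delta_u$ for a $\gg$-progressively measurable process $\alpha$ with $|\alpha_u|\leq L$ (define $\alpha_u$ as the ratio on $\{\delta_u\neq 0\}$ and arbitrarily elsewhere). Introducing
\[
\beta_t := (y_1-y_2) + k_t + \int_0^t\bigl[g_2(u,v^2_u,z_u) - g_1(u,v^2_u,z_u)\bigr]\,du,
\]
and splitting $g_2(u,v^2_u,z_u) - g_1(u,v^1_u,z_u)$ as $[g_2(u,v^2_u,z_u) - g_1(u,v^2_u,z_u)] + [g_1(u,v^2_u,z_u) - g_1(u,v^1_u,z_u)]$, the process $\delta$ satisfies the linear Volterra equation $\delta_t = \beta_t - \int_0^t \alpha_u\delta_u\,du$. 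The crucial features to record are that $\beta$ is a càdlàg \emph{nondecreasing} process (both $dk_s\geq 0$ and the integrand $g_2 - g_1$ at $v^2$ is nonnegative by hypothesis) and that $\beta_0 = y_1 - y_2$, which is nonnegative (resp.\ strictly positive) according to which case one is in.

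Finally I would solve the linear equation by the integrating factor $M_t := \exp\bigl(\int_0^t \alpha_u\,du\bigr)$. Since $M$ is continuous, positive and of finite variation, an integration by parts yields $d(M_t\delta_t) = M_t\,d\beta_t$, so that
\[
\delta_t = M_t^{-1}\Bigl[(y_1-y_2) + \int_{(0,t]} M_s\,d\beta_s\Bigr].
\]
Nonnegativity of the Stieltjes integral against the nondecreasing $\beta$ gives $\delta_t \geq M_t^{-1}(y_1 - y_2)$, which is $\geq 0$ in the weak case and bounded below by $e^{-LT}(y_1-y_2) > 0$ in the strict case, proving both assertions at once.

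The main obstacle I anticipate is pinning down the regularity assumption that legitimises the linearisation: the lemma as stated only requires measurability of $g_j$ and uniqueness of $v^j$, but some form of (one-sided) Lipschitz dependence on the wealth variable is needed to control $g_1(u,v^1_u,z_u) - g_1(u,v^2_u,z_u)$, and this is precisely the setting advertised in the remark preceding the statement. A secondary, purely technical point is checking the Stieltjes integration-by-parts for the càdlàg, possibly jumpy $\beta$; this is routine because the integrating factor $M$ is continuous, so no quadratic covariation terms arise.
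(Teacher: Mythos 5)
Your proof is correct and follows essentially the same route as the paper's: subtract the two equations so the stochastic integral and $H$ cancel, linearise the $g_1$-increment by the ratio process (your $\alpha$ is the negative of the paper's $\lambda$), and apply the integrating factor $\exp(\int_0^\cdot\alpha_u\,du)$ together with the monotonicity of $k$ and the sign of $g_2-g_1$ at $v^2$. Your remark that the linearisation tacitly requires a (one-sided) Lipschitz or at least integrability bound on $\alpha$ is well taken -- the paper's proof relies on the same implicit hypothesis when it forms $e^{-\int_0^t\lambda_u\,du}$ -- but this does not change the argument.
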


\proof
If we set $\bar{v}_t:=v^1_t-v^2_t$ for all $t\in [0,T]$, then $\bar{v}$ is easily seen to satisfy
\begin{align*}
\bar{v}_t&=y_1-y_2+\int_0^t\big(g_2(u,v^2_u,z_u)-g_1(u,v^1_u,z_u)\big)\,du+k_t  \\
&=y_1-y_2+\int_0^t\big(\lambda_u \bar{v}_u+g_2(u,v^2_u,z_u)-g_1(u,v^2_u,z_u)\big)\,du+k_t
\end{align*}
where $\lambda_u:=\big(g_1(u,v^2_u,z_u)-g_1(u,v^1_u,z_u)\big)(\bar{v}_u)^{-1}\I_{\{v^1_u\ne v^2_u\}}$. Thus
\[
d\Big(e^{-\int_0^t\lambda_u\,du}\bar{v}_t\Big)=e^{-\int_0^t\lambda_u \,du}\big((g_2(t,v^2_t,z_t)-g_1(t,v^2_t,z_t))\,dt+dk_t\big).
\]
After integrating both sides from $0$ to $t$, and noticing that the process $k=k^1-k^2$ is nondecreasing and $e^{-\int_0^t\lambda_u \,du}>0$, we obtain the inequality
\[
\bar{v}_t\ge\bar{v}_0\,e^{\int_0^t\lambda_u \,du}+\int_0^t e^{\int_s^t\lambda_u \,du}\big(g_2(s,v^2_s,z_s)-g_1(s,v^2_s,z_s)\big)\,ds
\]
from which the assertion of the lemma follows.
\endproof

%%%%%%%%%%%%%%%%%%%%%%%%%%%%%%%%%%%%%%%%%%%%%%%%%%%%%%%%%%%%%%%%%%%%%%%%%%%%%
\subsection{Comparison Properties of Nonlinear Evaluations} \label{sec4.1xx}
%%%%%%%%%%%%%%%%%%%%%%%%%%%%%%%%%%%%%%%%%%%%%%%%%%%%%%%%%%%%%%%%%%%%%%%%%%%%%

We henceforth postulate that the wealth process $V=V(y,\phi,A)$ is governed by the SDE \eqref{ueq3x} where $\xi=(\xi^1, \dots,\xi^d)$ is given and the mapping $g$ satisfies some additional assumptions. This will allow us to refer to a large body of existing literature on the theory of BSDEs and, in particular, to exploit the link between reflected BSDEs and solutions to nonlinear optimal stopping problems (see, for instance, Cvitani\'c and Karatzas~\cite{CK1996}, Dumitrescu et al.~\cite{DQS2016}, El Karoui et al.~\cite{EQK1997}, Grigorova and Quenez~\cite{GQ2017}, Grigorova et al.~\cite{GIOOQ2017,GIOQ2017}, and Quenez and Sulem~\cite{QS2014}). Our goal in this section is to examine the valuation and hedging of American contracts for the special case where the wealth dynamics are driven by \eqref{ueq3x}. To keep our presentation concise and encompassing several alternative nonlinear market models, we will directly postulate that the associated BSDEs enjoy desirable properties, such as: the existence, uniqueness, and the strict comparison property for solutions to BSDEs, which are known to hold under various circumstances. As was already mentioned, a particular instance of a market model given by \eqref{ueq3x} has been studied in a recent paper by Dumitrescu et al.~\cite{DQS2017}.

We will use the standard terminology related to nonlinear evaluations generated by solutions to BSDEs (see, e.g., Chapter 3 in Peng~\cite{P2004a} or Section 4 in Peng \cite{P2004b}). Consider the following BSDE on $[0,s]$
\begin{equation}\label{xBSDEU}
Y_t =\zeta_s +\int_t^s g(u,Y_u,Z_u)\,du-\int_t^s Z_u\,dM_u-(\hHh_s-\hHh_t),
\end{equation}
where $\zeta_s \in L^2(\cG_s)$, $M$ is a $d$-dimensional martingale, the process $\hHh$ is real-valued and $\gg$-adapted,
and the generator $g: \Omega\times[0,T]\times\rr\times\rr^d \rightarrow \rr$ is $\cP \otimes \cB (\rr )\otimes\cB (\rr^d)/\cB(\rr)$-measurable where $\cP$ is the $\sigma$-field of predictable sets on $\Omega\times[0,T]$. Assume that the BSDE \eqref{xBSDEU} has a unique solution $(Y,Z)$ in a suitable space of stochastic processes (see, e.g., \cite{CFS2008,NR2}).
For every $0\leq t\leq s \leq T$ and $\zeta_s \in L^2(\cG_s)$, we denote $\cEgH_{t,s}(\zeta_s )= Y_t$ where $(Y,Z)$ solves the BSDE \eqref{xBSDEU} with $Y_s=\zeta_s$. Then the system of operators $\cEgH_{t,s}: L^2(\cG_s) \to L^2(\cG_t)$ is called the {\it $\cEgH$-evaluation}.  It is worth noting that a deterministic dates $t \leq s$ appearing in the BSDE \eqref{xBSDEU} can be replaced by arbitrary $\gg$-stopping times $\tau \leq \sigma $ from $\cT$ and thus the notion of the $\cEgH$-evaluation can be extended to stopping times $\cEgH_{\tau ,\sigma }: L^2(\cG_\sigma ) \to L^2(\cG_\tau )$. The concept of the (strict) comparison property is of great importance in the theory of BSDEs and nonlinear evaluations.

\begin{definition} \label{xdefmong} {\rm
We say that the {\it comparison property} of $\cEgH$ holds if for every stopping time $\tau\in\cT$ and random variables $\zeta^1_{\tau},\zeta^2_{\tau} \in L^2(\cG_{\tau})$, the following property is valid: if $\zeta^1_{\tau} \geq \zeta^2_{\tau}$ then $\cEgH_{0,\tau }(\zeta^1_{\tau})\geq \cEgH_{0,\tau }(\zeta^2_{\tau})$. We say that the {\it strict comparison property} of $\cEgH$ holds if for every $\tau\in\cT$ and $\zeta^1_{\tau},\zeta^2_{\tau} \in L^2(\cG_{\tau})$ if $\zeta^1_{\tau}\geq\zeta^2_{\tau}$ and $\zeta^1_{\tau}\ne\zeta^2_{\tau}$ then $\cEgH_{0,\tau}(\zeta^1_{\tau})>\cEgH_{0,\tau}(\zeta^2_{\tau})$.}
\end{definition}

In view of its financial interpretation, the nonlinear evaluation $\cE^{g,A}$ associated with the BSDE
\begin{equation} \label{nhBSDE}
Y_t=\zeta_s+\int_t^s g(u,Y_u,Z_u)\,du-\int_t^s Z_u\,dS_u-(A_s-A_t)
\end{equation}
is henceforth denoted by $\cEgi$ and called the {\it issuer's $g$-evaluation}. In Section \ref{sec4.3}, we address the issuer's pricing problem and we work under the following standing assumption.

\begin{assumption} \label{assnBSDE} {\rm We postulate that:\\
(i) the wealth process $V=V(y,\phi,A)$ of a trading strategy $\phi \in \Psi (y,A)$ satisfies \eqref{ueq3x},\\
(ii) for any fixed $y \in \rr$ and any process $\xi$ such that the stochastic integral in \eqref{ueq3x} is well defined,
the SDE \eqref{ueq3x} has a unique strong solution, \\
(iii) the strict monotonicity property holds for the wealth $V(y,\phi,A)$ (see Assumption \ref{ass1.2}),\\
(iv) for every $(s,\zeta_s)\in [0,T]\times L^2(\cG_s)$ the BSDE \eqref{nhBSDE} has a unique solution $(Y,Z)$ on $[0,s]$}.
\end{assumption}

Note that in Section \ref{sec4.4}, where we study the holder's pricing problem, we will use a modified version of Assumption \ref{assnBSDE} in which the process $A$ is replaced by $-A$.

%In Section \ref{sec4.4}, where we will study the holder's pricing problem, Assumption \ref{assnBSDE} will be modified by replacing $A$ (respectively,
%$X=\Vben(x_1)-\Xhh$) by $-A$ (respectively, $x=\Vben(x_2)+\Xhh$).

\begin{remark} {\rm
In view of Lemma \ref{lem:1} and condition (ii) in Assumption \ref{assnBSDE}, condition (iii) in Assumption \ref{assnBSDE} is not restrictive, since it is satisfied for every generator $g$. For explicit assumptions about $g$ ensuring that the BSDE \eqref{nhBSDE}, where $S$ is a multi-dimensional, continuous, square-integrable martingale enjoying the predictable representation property, has a unique solution and the strict comparison property of the issuer's $g$-evaluation $\cEgi$ holds, see Theorems 3.2 and 3.3 in Nie and Rutkowski~\cite{NR2}.}
\end{remark}

%%%%%%%%%%%%%%%%%%%%%%%%%%%%%%%%%%%%%%%%%%%%%%%%%%%%%%%%%%%%%%%%%%%%%%%%%%%%%
\subsection{Issuer's Acceptable Price via RBSDE} \label{sec4.3}
%%%%%%%%%%%%%%%%%%%%%%%%%%%%%%%%%%%%%%%%%%%%%%%%%%%%%%%%%%%%%%%%%%%%%%%%%%%%%

We denote by $X:=\Vben(x_1)-\Xhh$ the {\it issuer's relative reward} and we assume that the process $X$ is square-integrable. Then,
by Assumption \ref{assnBSDE}(iv), the following BSDE on $[0,T]$
\begin{equation} \label{nBSDE}
Y_t =X_T +\int_t^T g(u,Y_u,Z_u)\,du-\int_t^T Z_u\,dS_u-(A_T-A_t)
\end{equation}
has a unique solution $(Y,Z)$ in a suitable space of stochastic processes.
To study the issuer's pricing problem for an American contract $\cC^a$, we make the following assumption
(see, e.g., Definition 2.4 in Quenez and Sulem~\cite{QS2014}).
%\begin{equation} \label{rnhBSDE}
%Y_t=X_T+\int_t^T g(u,Y_u,Z_u)\,du-\int_t^T Z_u\,dS_u-(A_T-A_t)+K_T-K_t
%\end{equation}
%where $K$ is an increasing process and the process $Y$ is assumed to satisfy the inequality $Y_t \geq X_t$ for all $t\in [0,T]$.
%In Section \ref{sec4.3}, we work under the following assumption regarding the reflected BSDE \eqref{rnhBSDE}.

\begin{assumption} \label{assBSDEi}
{\rm The reflected BSDE with the lower obstacle $X$
\begin{equation} \label{iRBSDE}
\left\{ \begin{array} [c]{ll}
dY_t=-g(t,Y_t,Z_t)\,dt+Z_t\,dS_t+dA_t-d\Kpp_t,\medskip\\ Y_T=X_T,\quad Y_t \geq X_t,\quad \int_0^T (Y_t-X_t)\,d\Kpp^c_t=0,
\quad \Delta \Kpp^d_t=-\Delta (Y_t-A_t) \I_{\{ Y_{t-}= X_{t-}\}},
\end{array} \right.
\end{equation}
has a unique solution $(Y,Z,\Kpp)$ where $K$ is a $\gg$-predictable, c\`adl\`ag, nondecreasing process such that $K_0=0$ and $K=K^c+K^d$ is its unique decomposition into continuous and jump components.}
\end{assumption}

Reflected BSDEs were introduced in seminal papers by El Karoui et al.~\cite{EPAQ1997,EQK1997} where they were applied to solutions of optimal stopping problems and pricing of American options. They were subsequently studied by several authors who dealt with various frameworks, to mention a few: Aazizi and Ouknine~\cite{AO2016}, Baadi and Ouknine~\cite{BB2017}, Cr\'epey and Matoussi \cite{CM2008}, Essaky~\cite{ES2008}, Hamad\`ene~\cite{HA2002}, Hamad\`ene and Ouknine~\cite{HO2016}, Grigorova et al.~\cite{GIOOQ2017,GIOQ2017}, Klimsiak~\cite{K2012,K2015}, Klimsiak et al.~\cite{KRS2016} and Quenez and Sulem~\cite{QS2014}. They have shown that Assumption \ref{assBSDEi} is met in many instances of our interest but, due to space limitations, we are not going to quote any particular result here.

It is worth stressing that we deliberately do not specify particular spaces
of stochastic processes in which the components $Y$ and $Z$ are searched for, since our further results do not depend on
the choice of these spaces. Only the properties of the  process $K$ in a unique solution $(Y,Z,K)$ to the issuer's RBSDE \eqref{iRBSDE} (and, by the same token, of the process $k$ in a unique solution $(y,z,k)$ to the holder's RBSDE  \eqref{hRBSDE}) are essential in what follows and thus they are stated explicitly and analyzed in some detail.

\begin{definition} \label{def2vxa} {\rm
We say that $\vsup^i(\cC^a)\in\rr$ is the {\it value} of the issuer's optimal stopping problem for $\cC^a$ if
\begin{equation*}
\vsup^i(\cC^a)=\sup_{\tau\in\cT}\,\cEgi_{0,\tau }(X_{\tau}).
\end{equation*}
A stopping time $\taus\in\cT$ is called a {\it solution} to the issuer's optimal stopping problem if $\vsup^i(\cC^a)=\wh{v}^{i}(\cC^a)$ where}
\begin{equation}  \label{eq2vb}
\wh{v}^{i}(\cC^a):=\cEgi_{0,\taus }(X_{\taus})=\max_{\tau\in\cT}\,\cEgi_{0,\tau }(X_{\tau}).
\end{equation}
\end{definition}

\begin{assumption} \label{assBSDEm1}
{\rm The value $\vsup^i(\cC^a)$ to the issuer's optimal stopping problem exists and satisfies $\vsup^i(\cC^a)=Y_0$.}
\end{assumption}

\begin{assumption} \label{assBSDEm2}
{\rm The stopping time $\tau^i:=\inf\,\{t\in [0,T]\,|\,Y_t=X_t\}$ is a (not necessarily unique) solution to the issuer's optimal stopping problem so that $\wh{v}^i(\cC^a)=\cEgi_{0,\tau^i }(X_{\tau^i })$.}
\end{assumption}

\begin{remark} {\rm Various results pertaining to Assumptions \ref{assBSDEi}--\ref{assBSDEm2} were obtained under alternative assumptions on the generator $g$ and the processes $X$ and $S$ in \eqref{iRBSDE} by, among others, El Karoui et al.~\cite{EQK1997}, Grigorova et al.~\cite{GIOOQ2017} and Quenez and Sulem~\cite{QS2014}. Although it is common to set $S=W$ and $A=0$, an extension
to a more general situation is also feasible when the generator $g$ satisfies a suitable Lipschitz-type condition (see \cite{NR2} for the case of a BSDE without reflection).}
\end{remark}

The first main result in this section is the following theorem.

\begin{theorem} \label{the4.1}
 Let  Assumptions \ref{assnBSDE}--\ref{assBSDEm2} be satisfied and let $(Y,Z,\Kpp)$ be the unique solution to the issuer's reflected BSDE \eqref{iRBSDE}. Then the following assertions are valid:\\
(i) if $\cEgi$ has the strict comparison property, then
\begin{equation}  \label{efq4.3}
\breve{p}^{r,i}(x_1,\cC^a)=\breve{p}^{s,i}(x_1,\cC^a)=Y_0-x_1=\cEgi_{0,\tau^i }(X_{\tau^i})-x_1=\wh{v}^i(\cC^a)-x_1
\end{equation}
and $(p',\phi',\tau')=(Y_0-x_1,Z,\tau^i)$ is an issuer's replicating strategy for $\cC^a$, \\
(ii) if $\cEgi$ has the strict comparison property, then the issuer's acceptable price for $\cC^a$ equals $p^i(x_1,\cC^a)=Y_0-x_1=\wh{v}^i(\cC^a)-x_1$.
\end{theorem}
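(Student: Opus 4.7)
My plan is to prove the theorem by first exhibiting $(p',\phi',\tau')=(Y_0-x_1,Z,\tau^i)$ as an explicit replicating strategy and then pinning down the four prices in \eqref{efq4.3} through a matching lower bound obtained from the BSDE comparison property. The central observation is that the wealth equation \eqref{ueq3x} fed with $y=Y_0$ and $\xi=Z$ coincides on $[0,\tau^i]$ with the first component of the RBSDE \eqref{iRBSDE}. Indeed, on $[0,\tau^i)$ we have $Y_t>X_t$, so the Skorokhod conditions $\int_0^T(Y_t-X_t)\,d\Kpp^c_t=0$ and $\Delta \Kpp^d_t=-\Delta(Y_t-A_t)\I_{\{Y_{t-}=X_{t-}\}}$ force $K\equiv 0$ throughout $[0,\tau^i]$. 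Uniqueness from Assumption \ref{assnBSDE}(ii) then yields $V_t(Y_0,Z,A)=Y_t$ on $[0,\tau^i]$, and in particular $V_{\tau^i}(Y_0,Z,A)=Y_{\tau^i}=X_{\tau^i}$, which is exactly the break-even condition with $p'=Y_0-x_1$.

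For superhedging on all of $[0,T]$ I would invoke Lemma \ref{lem:1} with $v^1=V_\cdot(Y_0,Z,A)$, $v^2=Y$, $g_1=g_2=g$, $y_1=y_2=Y_0$, $k^1\equiv 0$, $k^2=-\Kpp$ (so that $k=K$ is nondecreasing with $k_0=0$), $z=Z$ and $H=A$; the lemma delivers $V_t(Y_0,Z,A)\geq Y_t\geq X_t$ for every $t\in[0,T]$. Together with the break-even at $\tau^i$, this shows $(Y_0-x_1,Z,\tau^i)$ belongs to $\cH^{r,i}(x_1)\subseteq\cH^{s,i}(x_1)$, giving $\breve{p}^{r,i}\leq Y_0-x_1$ and $\breve{p}^{s,i}\leq Y_0-x_1$.

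The matching lower bound rests on a BSDE argument. For any $(p,\phi)\in\cH^{s,i}(x_1)$, rewriting \eqref{ueq3x} on $[0,\tau]$ exhibits $(V(x_1+p,\phi),\phi)$ as the unique solution of \eqref{nhBSDE} with terminal datum $V_\tau(x_1+p,\phi)$, so that $\cEgi_{0,\tau}(V_\tau(x_1+p,\phi))=x_1+p$ for every $\tau\in\cT$. The (strict) comparison property applied to $V_\tau\geq X_\tau$ yields $x_1+p\geq\cEgi_{0,\tau}(X_\tau)$, and taking the supremum over $\tau$, combined with Assumptions \ref{assBSDEm1}--\ref{assBSDEm2}, gives $x_1+p\geq Y_0=\wh{v}^i(\cC^a)=\cEgi_{0,\tau^i}(X_{\tau^i})$. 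Together with the upper bounds above and Lemma \ref{lem1.7} this establishes all four equalities of part (i).

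For part (ii) it suffices to show $p=Y_0-x_1\in\cH^{f,r,i}(x_1)$, since Proposition \ref{pro1.1}(i) will then force $\cH^{f,r,i}(x_1)=\{Y_0-x_1\}$ and hence $p^i(x_1,\cC^a)=Y_0-x_1$. By Lemma \ref{lem1.6}(iii), the no-arbitrage clause in Definition \ref{def1.16} is equivalent to demanding that every $\phi'\in\Psi(x_1+p,A)$ for which $(p,\phi')\in$ (SH) admit some $\tau'$ with $(p,\phi',\tau')\in$ (BE). Taking $\tau'=\tau^i$ and invoking the same BSDE representation, I obtain $\cEgi_{0,\tau^i}(V_{\tau^i}(x_1+p,\phi'))=x_1+p=Y_0=\cEgi_{0,\tau^i}(X_{\tau^i})$, while $V_{\tau^i}(x_1+p,\phi')\geq X_{\tau^i}$ by (SH); the contrapositive of the strict comparison property of $\cEgi$ forces $V_{\tau^i}(x_1+p,\phi')=X_{\tau^i}$ a.s., which is the desired break-even. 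The most delicate step of the whole argument is the claim that $\Kpp\equiv 0$ on $[0,\tau^i]$: one has to use the Skorokhod jump condition carefully to exclude any pre-$\tau^i$ increment of $\Kpp^d$ at a time where the left-limits might meet. The strict comparison property plays a genuinely different role and only enters in part (ii), where it upgrades the a.s.\ superhedging inequality at $\tau^i$ to an a.s.\ equality.
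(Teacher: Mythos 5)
Your overall architecture mirrors the paper's: the upper bound for the superhedging cost via Lemma \ref{lem:1}, the matching lower bound via the comparison property and Assumption \ref{assBSDEm1}, replication at $\tau^i$, and the fair-price verification via strict comparison at $\tau^i$ followed by Proposition \ref{pro1.1}. The one place where you genuinely depart from the paper is also where your argument has a gap: the claim that the Skorokhod conditions alone force $\Kpp\equiv 0$ on $[0,\tau^i]$. The continuous part is fine (since $Y_t>X_t$ on $[0,\tau^i)$ kills $d\Kpp^c$ there), but the jump condition $\Delta \Kpp^d_t=-\Delta(Y_t-A_t)\I_{\{Y_{t-}=X_{t-}\}}$ only localizes the jumps of $\Kpp^d$ to times where the \emph{left limits} of $Y$ and $X$ coincide, and $Y_{t-}=X_{t-}$ can occur at some $t\le\tau^i$ even though $Y_s>X_s$ for every $s<\tau^i$ (the difference may tend to zero from the left without ever vanishing). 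You flag this yourself as ``the most delicate step'' but you do not close it, and nothing in Assumption \ref{assBSDEi} rules such a time out. The paper does not attempt this route at all: in Proposition \ref{pro4.2} it obtains $\Kpp_{\tau^i}=0$ from Assumptions \ref{assBSDEm1}--\ref{assBSDEm2} combined with the strict comparison property, namely $Y_0=\wh{v}^i(\cC^a)=\cEgi_{0,\tau^i}(X_{\tau^i})=\cEgi_{0,\tau^i}(Y_{\tau^i})$, while the RBSDE gives $Y_0=\cEgi_{0,\tau^i}(Y_{\tau^i}+\Kpp_{\tau^i})$, so strict comparison forces $\Kpp_{\tau^i}=0$ and hence $\Kpp\equiv0$ on $[0,\tau^i]$ by monotonicity. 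You should replace your Skorokhod argument by this one.

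A secondary consequence of the gap is your closing remark that the strict comparison property ``only enters in part (ii)''. In the paper it is already indispensable in part (i), precisely because it is what delivers $\Kpp_{\tau^i}=0$ and therefore the break-even property of $(Y_0-x_1,Z,\tau^i)$; this is why the theorem hypothesizes strict comparison in both parts. The remainder of your argument --- the BSDE representation of the wealth giving $\cEgi_{0,\tau}(V_\tau(x_1+p,\phi))=x_1+p$, the lower bound $x_1+p\ge\vsup^i(\cC^a)=Y_0$ via the comparison property and Assumption \ref{assBSDEm1}, and the verification that every superhedging strategy at the price $Y_0-x_1$ breaks even at $\tau^i$ (equivalently, that $Y_0-x_1\notin\cH^{a,i}(x_1)$), concluded by Proposition \ref{pro1.1} --- coincides in substance with the paper's Propositions \ref{pro4.1} and \ref{pro4.3} and is correct.
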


\begin{proof}
The proof of Theorem \ref{the4.1} is split into three steps, which are formulated as Propositions \ref{pro4.1}, \ref{pro4.2} and \ref{pro4.3}.
\end{proof}

We start by analyzing the issuer's minimum superhedging cost (note that Assumption \ref{assBSDEm2} is not postulated in Proposition \ref{pro4.1}).

\begin{proposition} \label{pro4.1}
If Assumptions  \ref{assnBSDE}--\ref{assBSDEm1} are satisfied and $\cEgi$ has the comparison property, then the issuer's
minimum superhedging cost is well defined and satisfies
\begin{equation} \label{eq4.2}
\breve{p}^{s,i}(x_1,\cC^a)=\vsup^i(\cC^a)-x_1=Y_0-x_1,
\end{equation}
where $(Y,Z,\Kpp)$ is the unique solution to the reflected BSDE \eqref{iRBSDE}.
\end{proposition}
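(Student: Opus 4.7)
The plan is to establish the two inequalities $\breve{p}^{s,i}(x_1,\cC^a)\le Y_0-x_1$ and $\breve{p}^{s,i}(x_1,\cC^a)\ge Y_0-x_1$ separately, then invoke Assumption \ref{assBSDEm1} to replace $Y_0$ with $\vsup^i(\cC^a)$. The key observation is that the RBSDE \eqref{iRBSDE} can be read as a forward SDE of the form \eqref{ueq3x} perturbed by the nondecreasing process $-K$, which opens the door to Lemma \ref{lem:1} on one side and to the comparison property of $\cEgi$ on the other.

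\textbf{Upper bound (existence of a superhedging strategy).} I would set $p^{*}:=Y_0-x_1$ and propose $\phi^{*}$ as the strategy whose risky-position process coincides with $Z$. Integrating \eqref{iRBSDE} from $0$ to $t$ yields
\begin{equation*}
Y_t=Y_0-\int_0^t g(u,Y_u,Z_u)\,du+\int_0^t Z_u\,dS_u+A_t-\Kpp_t,
\end{equation*}
while the wealth $\tilde V_t:=V_t(x_1+p^{*},\phi^{*},A)$ of the candidate strategy satisfies
\begin{equation*}
\tilde V_t=Y_0-\int_0^t g(u,\tilde V_u,Z_u)\,du+\int_0^t Z_u\,dS_u+A_t.
\end{equation*}
Applying Lemma \ref{lem:1} with $v^{1}=\tilde V,\,v^{2}=Y,\,g_1=g_2=g,\,y_1=y_2=Y_0,\,z=Z,\,H=A,\,k^{1}=0,\,k^{2}=-\Kpp$ (so that $k=k^{1}-k^{2}=\Kpp$ is nondecreasing with $k_0=0$) gives $\tilde V_t\ge Y_t$ for every $t\in[0,T]$. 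Combining with the obstacle constraint $Y_t\ge X_t=\Vben_t(x_1)-\Xhh_t$ yields $V_t(x_1+p^{*},\phi^{*},A)+\Xhh_t\ge \Vben_t(x_1)$ for all $t\in[0,T]$, hence $(p^{*},\phi^{*})\in$ (SH) and so $p^{*}\in\cH^{s,i}(x_1)$.

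\textbf{Lower bound (no cheaper superhedge).} Fix any $p\in\cH^{s,i}(x_1)$ and a corresponding $\phi\in\Psi(x_1+p,A)$ with risky component $\xi$ such that $V_t(x_1+p,\phi,A)\ge X_t$ for every $t\in[0,T]$. Rewriting \eqref{ueq3x} as a backward relation on $[0,\tau]$ for any $\tau\in\cT$ gives
\begin{equation*}
V_t(x_1+p,\phi,A)=V_\tau(x_1+p,\phi,A)+\int_t^\tau g(u,V_u,\xi_u)\,du-\int_t^\tau \xi_u\,dS_u-(A_\tau-A_t),
\end{equation*}
which identifies $(V(x_1+p,\phi,A),\xi)$ as the unique solution of the BSDE \eqref{nhBSDE} on $[0,\tau]$ with terminal $V_\tau(x_1+p,\phi,A)$. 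Hence $x_1+p=V_0=\cEgi_{0,\tau}(V_\tau(x_1+p,\phi,A))$. Using $V_\tau\ge X_\tau$ and the comparison property of $\cEgi$, I obtain $x_1+p\ge \cEgi_{0,\tau}(X_\tau)$. Since $\tau\in\cT$ was arbitrary, taking the supremum and invoking Assumption \ref{assBSDEm1} delivers $x_1+p\ge \vsup^i(\cC^a)=Y_0$, so $p\ge Y_0-x_1=p^{*}$.

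\textbf{Conclusion.} The two bounds together show that $\cH^{s,i}(x_1)$ is nonempty with $p^{*}\in\cH^{s,i}(x_1)$ minimal, so $\breve{p}^{s,i}(x_1,\cC^a)$ is well defined and equals $Y_0-x_1=\vsup^i(\cC^a)-x_1$, proving \eqref{eq4.2}. The step I expect to be delicate is the upper bound: one must be careful that the process $Z$ produced by the RBSDE actually yields an admissible self-financing strategy in the sense of Definition \ref{self financing} (so that $\xi=Z$ lies in a permissible position component for some $\phi\in\Psi(Y_0,A)$) and that the forward-SDE comparison Lemma \ref{lem:1} is applicable to the càdlàg perturbation $-\Kpp$, including its jump part (which is $\gg$-predictable, thus compatible with the setting of Lemma \ref{lem:1}).
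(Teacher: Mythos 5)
Your proposal is correct and follows essentially the same route as the paper's proof: the upper bound via the pair $(Y_0-x_1,Z)$ and Lemma \ref{lem:1} applied to the forward wealth SDE versus the RBSDE read forward with the nondecreasing perturbation $\Kpp$, and the lower bound via the identification $x_1+p=\cEgi_{0,\tau}(V_\tau)$ together with the comparison property and Assumption \ref{assBSDEm1}. The delicate points you flag (admissibility of $Z$ and the c\`adl\`ag perturbation in Lemma \ref{lem:1}) are likewise left implicit in the paper, which absorbs them into Assumption \ref{assnBSDE}(ii) and the stated hypotheses of Lemma \ref{lem:1}.
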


\proof
We first prove that $\pinf^{s,i}\leq Y_0-x_1$. It suffices to show that for the initial value $p':=Y_0-x_1$, where $Y_0$ is obtained from the reflected BSDE \eqref{iRBSDE}, we can find an issuer's superhedging strategy, that is, there exists a trading strategy $\phi'\in\Psi(x_1+p',A)$ such that $V_t(x_1+p',\phi')\geq X_t$ for all $t\in[0,T]$. To this end, we set $(p',\phi')=(Y_0-x_1,Z)$ where $(Y,Z,\Kpp)$ is the unique solution to the reflected BSDE \eqref{iRBSDE}. Then, on the one hand, the value process $V=V(x_1+p',\phi')$ is a unique strong solution (by Assumption \ref{assnBSDE}(ii)) to the following SDE where the initial value $V_0=Y_0$ and the process $Z$ are fixed
\begin{equation}\label{value process SDE}
dV_t=-g(t,V_t,Z_t)\,dt+Z_t\,dS_t+dA_t.
\end{equation}
On the other hand, if $(Y,Z,\Kpp)$ solves the reflected BSDE \eqref{iRBSDE}, then the process $\wt{Y}=Y$ can also be seen as a unique strong solution to the following SDE
\begin{equation} \label{forward issuer RBSDE}
d\wt{Y}_t=-g(t,\wt{Y}_t,Z_t)\,dt+Z_t\,dS_t+dA_t-d\Kpp_t,
\end{equation}
where, once again, the initial value $\wt{Y}_0=Y_0$ and the processes $Z$ and $\Kpp$ are given. Therefore, from Lemma \ref{lem:1} with $g_1=g_2=g$ we infer that $V_t\ge \wt{Y}_t=Y_t$ for all $t\in[0,T]$. Since $Y_t \ge X_t$ for all $t\in [0,T]$, we conclude that $V_t\ge X_t$ for all $t\in[0,T]$. Consequently, $(x_1+p',\phi')=(Y_0,Z)$ is an issuer's superhedging strategy and thus $\pinf^{s,i}\leq Y_0-x_1$.

We will now show that $\pinf^{s,i}\ge Y_0-x_1$. Let us consider an arbitrary $p\in\rr$ for which there exists $\phi\in\Psi(x_1+p,A)$ such that $(p,\phi)$ satisfy (SH). If we can show that $x_1+p\ge Y_0$, then the inequality $\pinf^{s,i} \ge Y_0-x_1$ will hold by the definition of the lower bound $\pinf^{s,i}$. To this end, we observe that $V_{\tau}(x_1+p,\phi)\geq X_{\tau}$ for every $\tau\in\cT $ since, by Definition \ref{def1.2}, we have that $V_t(x_1+p,\phi)\geq X_t$ for all $t \in [0,T]$. Consequently, by applying the mapping $\cEgi$ to both sides and using the comparison property of $\cEgi$, we obtain
\[
x_1+p=\cEgi_{0,\tau}(V_{\tau}(x_1+p,\phi))\geq \cEgi_{0,\tau}(X_{\tau}).
\]
Since $\tau\in\cT $ is arbitrary, we conclude that $x_1+p\geq \sup_{\tau\in\cT}\cEgi_{0,\tau}(X_{\tau})=\vsup^i(\cC^a)=Y_0$ where the second equality follows from Assumption \ref{assBSDEm1}. Hence $\pinf^{s,i}\geq Y_0-x_1$ and thus we conclude that the equality $\pinf^{s,i}=Y_0-x_1$ is valid.

Finally, from the first part of the proof, we know that for $p'=Y_0-x_1$ there exists a trading strategy $\phi'=Z\in\Psi(x_1+p',A)$ such that $V_t(x_1+p',\phi')\geq X_t$ for all $t\in[0,T]$ so that $Y_0-x_1\in \cH^{s,i}(x_1)$. Consequently, we have that $\pinf^{s,i}(x_1,\cC^a)=\breve{p}^{s,i}(x_1,\cC^a)=Y_0-x_1$.
\endproof

Consider the solution $(Y,Z,\Kpp)$ and the pair $(Y_0,Z)$ introduced in the proof of Proposition \ref{pro4.1}. In the next result, we examine the existence of an issuer's replicating strategy for $\cC^a$.

\begin{proposition} \label{pro4.2}
If Assumptions  \ref{assnBSDE}--\ref{assBSDEm2} are satisfied and $\cEgi$ has the strict comparison property, then the following assertions are valid: \\
(i) the pair $(Y_0-x_1,Z)$ is an issuer's replicating strategy for $\cC^a$ and $\tau^i$ is an issuer's break-even time for the pair $(Y_0-x_1,Z)$, \hfill \break
(ii) the issuer's minimum superhedging and replication costs satisfy
\begin{equation}  \label{eq4.3}
\breve{p}^{r,i}(x_1,\cC^a)=\breve{p}^{s,i}(x_1,\cC^a)=Y_0-x_1=\cEgi_{0,\tau^i }(X_{\tau^i})-x_1=\wh{v}^i(\cC^a)-x_1.
\end{equation}
\end{proposition}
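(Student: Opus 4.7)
The plan is to build directly on Proposition \ref{pro4.1}, which has already established that $(p',\phi') = (Y_0-x_1, Z)$ is an issuer's superhedging strategy with wealth $V_t := V_t(Y_0,Z) \ge X_t$ on $[0,T]$, and that $\pinf^{s,i}(x_1,\cC^a) = Y_0 - x_1$. The only genuinely new content in part (i) is the break-even identity $V_{\tau^i} = X_{\tau^i}$; once this is in hand, part (ii) is a short bookkeeping step.

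To prove the break-even, my first move is to re-read the wealth SDE \eqref{value process SDE} backward on $[0,\tau^i]$: for every $t \in [0,\tau^i]$,
\[
V_t = V_{\tau^i} + \int_t^{\tau^i} g(u,V_u,Z_u)\,du - \int_t^{\tau^i} Z_u\,dS_u - (A_{\tau^i} - A_t),
\]
so $(V,Z)$ is a solution of the BSDE \eqref{nhBSDE} on $[0,\tau^i]$ with terminal value $V_{\tau^i}$. By Assumption \ref{assnBSDE}(iv) and the extension of $\cEgi$ to stopping times, $V_0 = \cEgi_{0,\tau^i}(V_{\tau^i})$. On the other hand, Assumption \ref{assBSDEm2} together with Definition \ref{def2vxa} gives $Y_0 = \wh{v}^i(\cC^a) = \cEgi_{0,\tau^i}(X_{\tau^i})$. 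Since $V_0 = Y_0$, the two identities combine into
\[
\cEgi_{0,\tau^i}(V_{\tau^i}) = \cEgi_{0,\tau^i}(X_{\tau^i}).
\]
Combined with the superhedging inequality $V_{\tau^i} \ge X_{\tau^i}$, the strict comparison property of $\cEgi$ forces $V_{\tau^i} = X_{\tau^i}$ almost surely; any discrepancy on a set of positive probability would yield a strict inequality between the two evaluations, contradicting the display above. Because $X = \Vben(x_1) - \Xhh$, this is exactly the break-even condition (BE) for the triplet $(Y_0-x_1, Z, \tau^i)$, proving assertion (i).

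For part (ii), I would simply note that (i) places $Y_0 - x_1$ in $\cH^{r,i}(x_1)$, giving $\pinf^{r,i} \le Y_0 - x_1$, while the obvious inclusion $\cH^{r,i}(x_1) \subseteq \cH^{s,i}(x_1)$ combined with Proposition \ref{pro4.1} gives $\pinf^{r,i} \ge \pinf^{s,i} = Y_0 - x_1$. Hence both infima are attained and equal $Y_0 - x_1$, so $\breve{p}^{r,i} = \breve{p}^{s,i} = Y_0 - x_1$. The remaining equalities $Y_0 = \cEgi_{0,\tau^i}(X_{\tau^i}) = \wh{v}^i(\cC^a)$ are restatements of Assumption \ref{assBSDEm2} and \eqref{eq2vb}. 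The main obstacle, and precisely the reason the hypothesis must be strengthened from the comparison property (sufficient for Proposition \ref{pro4.1}) to the strict comparison property here, is the step that deduces pathwise equality of $V_{\tau^i}$ and $X_{\tau^i}$ from equality of their $\cEgi$-evaluations; nothing else in the argument goes beyond bookkeeping and the BSDE uniqueness already built into Assumption \ref{assnBSDE}.
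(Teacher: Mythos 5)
Your proof is correct, and for the key step it takes a genuinely different route from the paper. The paper first notes $X_{\tau^i}=Y_{\tau^i}$ by right-continuity, then proves $\Kpp_{\tau^i}=0$ by applying the strict comparison property to the terminal values $Y_{\tau^i}$ and $Y_{\tau^i}+\Kpp_{\tau^i}$ (via the identity $Y_0=\cEgi_{0,\tau^i}(Y_{\tau^i}+\Kpp_{\tau^i})$), and finally invokes uniqueness of the forward SDE \eqref{value process SDE} to conclude that $V=Y$ on $[0,\tau^i]$, whence $V_{\tau^i}=Y_{\tau^i}=X_{\tau^i}$. You instead bypass the process $\Kpp$ entirely: you read the wealth SDE backward to identify $V_0=\cEgi_{0,\tau^i}(V_{\tau^i})$, equate this with $\cEgi_{0,\tau^i}(X_{\tau^i})=Y_0=V_0$, and apply strict comparison directly to the inequality $V_{\tau^i}\geq X_{\tau^i}$ inherited from Proposition \ref{pro4.1}. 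Your argument is shorter and avoids the paper's somewhat delicate claim that $Y_0=\cEgi_{0,\tau^i}(Y_{\tau^i}+\Kpp_{\tau^i})$ (the shifted process $Y+\Kpp$ does not literally solve the BSDE \eqref{nhBSDE} with generator $g$ evaluated at its own first argument unless $g$ is independent of $y$, so your version, which applies strict comparison to the wealth process -- a genuine $\cEgi$-martingale -- is in this respect cleaner). What you lose is the by-product information: the paper's route establishes $\Kpp_{\tau^i}=0$ and the pathwise identity $V=Y$ on all of $[0,\tau^i]$, which are exactly the facts reused in the proof of Theorem \ref{the4.2}; your proof delivers only the terminal equality $V_{\tau^i}=X_{\tau^i}$, which suffices for Proposition \ref{pro4.2} but would have to be supplemented later. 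Your part (ii) is the same bookkeeping as the paper's, via Lemma \ref{lem1.7} and Proposition \ref{pro4.1}. One minor point worth flagging in both arguments: the evaluation $\cEgi_{0,\tau^i}(V_{\tau^i})$ presupposes square-integrability of $V_{\tau^i}$ and uniqueness of the BSDE at the stopping time $\tau^i$; you invoke the extension of Assumption \ref{assnBSDE}(iv) to stopping times without verification, but the paper does exactly the same in Propositions \ref{pro4.1} and \ref{pro4.3}, so this is consistent with its level of rigor.
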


\proof
From Lemma \ref{lem1.7} and Proposition \ref{pro4.1}, we already know that $Y_0-x_1=\pinf^{s,i}=\check{p}^{s,i} \leq\pinf^{r,i}$ and thus to establish the equality $\breve{p}^{r,i}(x_1,\cC^a)=\breve{p}^{s,i}(x_1,\cC^a)$, it is enough to show that the trading strategy $(p',\phi')=(Y_0-x_1,Z)$, which is already known to be an issuer's superhedging strategy, is also an issuer's replicating strategy for $\cC^a$. We first note that the definition of $\tau^i$ and the right-continuity of the processes $X$ and $Y$ yield the equality $X_{\tau^i}=Y_{\tau^i}$. Consequently, we have that
\[
Y_0=\wh{v}^i(\cC^a)=\cEgi_{0,\tau^i }(X_{\tau^i})= \cEgi_{0,\tau^i }(Y_{\tau^i}),
\]
where the first two equalities follow from Assumptions \ref{assBSDEm1} and \ref{assBSDEm2}. We will now show that $\Kpp_{\tau^i}=0$.
%On the one hand, the equality $Y_0=\cEgi_{0,\tau^i }(Y_{\tau^i})$ means that
%\begin{equation*} % \label{nyBSDE}
%Y_0=Y_{\tau^i} +\int_0^{\tau^i} g(u,Y_u,Z_u)\,du-\int_0^{\tau^i} Z_u\,dS_u-A_{\tau^i}.
%\end{equation*}
%On the other hand,
Since $(Y,Z,\Kpp)$ solves the reflected BSDE \eqref{iRBSDE},
% and $K_{\tau^i-}=0$ (the latter equality holds in view of conditions stated in \eqref{iRBSDE}, since manifestly $Y_t > X_t$ on $[0,\tau^i)$),
we also know that
\begin{equation*}
Y_0=Y_{\tau^i} +\int_0^{\tau^i} g(u,Y_u,Z_u)\,du-\int_0^{\tau^i} Z_u\,dS_u-A_{\tau^i}+K_{\tau^i}.
\end{equation*}
Therefore, $Y_0=\cEgi_{0,\tau^i }(Y_{\tau^i}+K_{\tau^i})$ and thus $\cEgi_{0,\tau^i }(Y_{\tau^i})=\cEgi_{0,\tau^i }(Y_{\tau^i}+K_{\tau^i})$.
From the strict comparison property of $\cEgi$, we conclude that $K_{\tau^i}=0$ and thus, for all $t \in [0,\tau^i]$,
\begin{equation*}
Y_t=Y_0-\int_0^t g(u,Y_u,Z_u)\,du+\int_0^t Z_u\,dS_u + A_t.
\end{equation*}
Finally, using the equality $V_0(Y_0,Z)=Y_0$ and the postulated uniqueness of a solution to the SDE (\ref{value process SDE}) (see Assumption \ref{assnBSDE}(ii)), we obtain the equality  $V_t(Y_0,Z)=Y_t$ on $[0,\tau^i]$ and thus, in particular, $V_{\tau^i}(Y_0,Z)=Y_{\tau^i}=X_{\tau^i}$. We have thus shown that $\tau^i$ is an issuer's break-even time for the pair $(Y_0-x_1,Z)$.  We have thus shown that the pair $(p',\phi' )=(Y_0,Z)$ is an issuer's replicating strategy for $\cC^a$.
\endproof

The final step in establishing Theorem \ref{the4.1} and hence providing a solution to the issuer's valuation problem hinges on showing that the issuer's acceptable price is also the issuer's maximum fair price.

\begin{proposition} \label{pro4.3}
If Assumptions  \ref{assnBSDE}--\ref{assBSDEm2} are satisfied and
$\cEgi$ has the strict comparison property, then the issuer's acceptable price $p^i(x_1,\cC^a)$ is well defined and
\begin{equation}  \label{eq4.4}
p^i(x_1,\cC^a)=\wh{p}^{f,i}(x_1,\cC^a)=\breve{p}^{r,i}(x_1,\cC^a)=\breve{p}^{s,i}(x_1,\cC^a).
\end{equation}
\end{proposition}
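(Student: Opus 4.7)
The plan is to appeal to Proposition \ref{pro1.1}(i): it suffices to show that the set $\cH^{f,r,i}(x_1)$ is nonempty, and then the singleton structure together with \eqref{bg11} delivers all four equalities in \eqref{eq4.4}. Set $p':= Y_0-x_1$, where $(Y,Z,\Kpp)$ is the unique solution of the reflected BSDE \eqref{iRBSDE}. The claim is that $p'\in \cH^{f,r,i}(x_1)$.

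The definition of $\cH^{f,r,i}(x_1)$ has two clauses. The first clause (existence of a superhedging strategy for $\cC^a$ with a break-even time) is already furnished by Proposition \ref{pro4.2}: the pair $(p',Z)$ satisfies (SH), and the triplet $(p',Z,\tau^i)$ satisfies (BE). So I would focus on the second clause: for every $\phi\in\Psi(x_1+p',A)$, there exists $\tau\in\cT$ such that $(p',\phi,\tau)\in$ (NA). I would argue this by contradiction. Suppose, on the contrary, that there is some $\phi^{\ast}\in\Psi(x_1+p',A)$ for which $(p',\phi^{\ast},\tau)\in$ (AO) for \emph{every} $\tau\in\cT$. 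In particular, $V_{\tau^i}(x_1+p',\phi^{\ast})\geq X_{\tau^i}$ and $\P\bigl(V_{\tau^i}(x_1+p',\phi^{\ast})>X_{\tau^i}\bigr)>0$.

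The main step is now to translate this into a contradiction via the strict comparison property of $\cEgi$. By Assumption \ref{assnBSDE}(i), the process $V(x_1+p',\phi^{\ast})$ satisfies the SDE \eqref{ueq3x} on $[0,\tau^i]$ with initial value $x_1+p'=Y_0$; reading this SDE backwards from $\tau^i$, the pair $\bigl(V(x_1+p',\phi^{\ast}),\phi^{\ast}\bigr)$ solves the BSDE \eqref{nhBSDE} on $[0,\tau^i]$ with terminal condition $V_{\tau^i}(x_1+p',\phi^{\ast})$. Hence, by the definition of the issuer's $g$-evaluation,
\[
\cEgi_{0,\tau^i}\!\bigl(V_{\tau^i}(x_1+p',\phi^{\ast})\bigr)=x_1+p'=Y_0.
\]
On the other hand, the argument inside the proof of Proposition \ref{pro4.2} (the step using $K_{\tau^i}=0$) established that
\[
\cEgi_{0,\tau^i}(Y_{\tau^i})=Y_0,
\]
together with $X_{\tau^i}=Y_{\tau^i}$ by right-continuity. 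Since $V_{\tau^i}(x_1+p',\phi^{\ast})\geq Y_{\tau^i}$ with strict inequality on a set of positive probability, the strict comparison property of $\cEgi$ yields
\[
\cEgi_{0,\tau^i}\!\bigl(V_{\tau^i}(x_1+p',\phi^{\ast})\bigr)>\cEgi_{0,\tau^i}(Y_{\tau^i}),
\]
which contradicts $Y_0=Y_0$. Therefore no such $\phi^{\ast}$ exists, so $p'\in\cH^{f,r,i}(x_1)$.

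Consequently $\cH^{f,r,i}(x_1)\neq\emptyset$, and Proposition \ref{pro1.1}(i) gives at once that $p^i(x_1,\cC^a)$ is well defined and equals $\wh{p}^{f,i}=\breve{p}^{r,i}=\breve{p}^{s,i}$, which is \eqref{eq4.4}. I anticipate the only delicate point to be the one highlighted above, namely the identification $\cEgi_{0,\tau^i}(V_{\tau^i}(x_1+p',\phi^{\ast}))=x_1+p'$, which requires the uniqueness of the strong solution of the SDE \eqref{ueq3x} (Assumption \ref{assnBSDE}(ii)) so that the forward SDE and the BSDE \eqref{nhBSDE} run on $[0,\tau^i]$ describe the same object; everything else is a direct bookkeeping combined with the strict comparison hypothesis.
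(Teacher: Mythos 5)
Your proposal is correct and follows essentially the same route as the paper: both prove $\cH^{f,r,i}(x_1)\neq\emptyset$ by assuming a strict superhedging strategy $(p',\phi^{\ast})$ exists at $p'=Y_0-x_1$, evaluating at $\tau=\tau^i$, and using $\cEgi_{0,\tau^i}(V_{\tau^i}(x_1+p',\phi^{\ast}))=x_1+p'$ together with $\cEgi_{0,\tau^i}(X_{\tau^i})=Y_0$ from Proposition \ref{pro4.2} and the strict comparison property to reach the contradiction $Y_0>Y_0$, then conclude via Proposition \ref{pro1.1}(i). The only cosmetic difference is that the paper phrases the membership step as $\breve{p}^{r,i}\notin\cH^{a,i}(x_1)$ and passes through the interval structure of $\cH^{a,i}(x_1)$, whereas you verify the (NA) clause of $\cH^{f,r,i}(x_1)$ directly, which is logically equivalent; you also make explicit the uniqueness argument behind the identity $\cEgi_{0,\tau^i}(V_{\tau^i})=x_1+p'$ that the paper leaves implicit.
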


\proof
It suffices to show that $\breve{p}^{r,i}$ belongs to $\cH^{f,i}(x_1)$ or, equivalently, that $\breve{p}^{r,i}<p$
for every $p\in\cH^{a,i}(x_1)$ (recall that $\cH^{a,i}(x_1)$ is the complement of $\cH^{f,i}(x_1)$). To this end, we will argue by contradiction. Assume that $\breve{p}^{r,i}\in\cH^{a,i}(x_1)$ so that there exists a strategy $\breve{\phi}\in\Psi (x_1+\breve{p}^{r,i})$
such that $(\breve{p}^{r,i},\breve{\phi})$ satisfy (AO). Then we have, for every $\tau\in\cT$,
\[
\P\big(V_{\tau}(x_1+\breve{p}^{r,i},\breve{\phi})\geq X_{\tau}\big)=1\ \ \text{\rm and }\ \P\big(V_{\tau}(x_1+\breve{p}^{r,i},\breve{\phi})> X_{\tau}\big)>0.
\]
Let us now take $\tau=\tau^i$. By applying the mapping $\cEgi$ to both sides, we obtain
\[
x_1+\breve{p}^{r,i}=\cEgi_{0,\tau^i}\big(V_{\tau^i}(x_1+\breve{p}^{r,i},\breve{\phi})\big)>\cEgi_{0,\tau^i}(X_{\tau^i})=x_1+\breve{p}^{r,i},
\]
where the last equality comes from Proposition \ref{pro4.2}. This is a contradiction and thus we have shown that $\breve{p}^{r,i}$ is not in $\cH^{a,i}(x_1)$. Recall that either $\cH^{a,i}(x_1)=[\pinf^{a,i},\infty )$ or $\cH^{a,i}(x_1)=(\pinf^{a,i},\infty)$ and we claim
that in fact the latter case is true. Indeed, from Assumption \ref{assnBSDE}, Lemma \ref{lem1.4} and Proposition \ref{pro4.2}, we have $\breve{p}^{r,i}=\breve{p}^{s,i}=\pinf^{a,i}$ and, since $\breve{p}^{r,i}$ is not in $\cH^{a,i}(x_1)$, we have that $\cH^{a,i}(x_1)=(\pinf^{a,i},\infty)$. Obviously, $\breve{p}^{r,i}<p$ for every $p\in\cH^{a,i}(x_1)$ and thus $\breve{p}^{r,i}$ belongs to $\cH^{f,i}(x_1)$ meaning that the set $\cH^{f,r,i}(x_1)$ is nonempty. All equalities in \eqref{eq4.4} now follow from Proposition \ref{pro1.1}.
\endproof

%%%%%%%%%%%%%%%%%%%%%%%%%%%%%%%%%%%%%%%%%%%%%%%%%%%%%%%%%%%%%%%%%%%%%%%%%%%%%
\subsection{Issuer's Break-Even Times} \label{sec4.3b}
%%%%%%%%%%%%%%%%%%%%%%%%%%%%%%%%%%%%%%%%%%%%%%%%%%%%%%%%%%%%%%%%%%%%%%%%%%%%%

In Section \ref{sec4.3b}, we postulate that the assumptions of Theorem \ref{the4.1}(ii) are satisfied and the contract $\cC^a$ is traded at time 0 at the issuer's acceptable price $p^i=p^i(x_1,\cC^a)$. From Definition \ref{def1.15} and Propositions \ref{pro4.2} and \ref{pro4.3}, we know that there exists a pair $(\phi',\tau^i)\in\Psi(x_1+p^i,A)\times\cT$ such that $(p^i,\phi')$ satisfy (SH) and $(p^i,\phi',\tau^i)$ satisfy (BE), specifically, $\phi'=Z$ and $p^i=Y_0-x_1$, where $(Y,Z,\Kpp)$ is the unique solution to the reflected BSDE \eqref{iRBSDE}. Our next goal is to provide a detailed characterization of all issuer's break-even times associated with an issuer's replicating strategy $(p^i,\phi')$. To establish Theorem \ref{the4.2}, which is the second main result in Section \ref{sec4.3}, we will need the following additional assumption.

\begin{assumption} \label{assBSDEcomparison}
{\rm The following {\it extended comparison property} for solutions to BSDEs holds: if for $j=1,2$
\[
\left\{ \begin{array} [c]{ll}
dY_s^j=-g_j(s,Y^j_s,Z^j_s)\,ds+Z_s^j\,dS_s-d\hHh_s^j,\medskip\\ Y^j_{\tau}=\xi_j, \end{array} \right.
\]
where $\tau\in\cT$, $\xi_1\ge\xi_2,\,g_1(s,Y^2_s,Z^2_s)\geq g_2(s,Y^2_s,Z^2_s)$ for all $s\in[0,\tau ]$ and the process $\hHh^1-\hHh^2$ is nondecreasing, then $Y_s^1\geq Y_s^2$ for every $s\in[0,\tau ]$.}
\end{assumption}

\begin{remark}
{\rm Suitable versions of the comparison theorem for BSDEs are known and thus Assumption \ref{assBSDEcomparison} can be
checked to be met in several nonlinear market models % (see Theorems \ref{BSDEexis} and \ref{BSDEcomp} in the appendix
(see, for instance, explicit examples analyzed in Nie and Rutkowski \cite{NR2}).}
\end{remark}

Before stating the main result in Section \ref{sec4.3},  let us recall the following well-known definition related to nonlinear
evaluations (see, e.g., Peng \cite{P2004a,P2004b}).

\begin{definition}
{\rm We say that a $\gg$-adapted, c\`adl\`ag %, square-integrable
process $Y$ is an $\cEgi$-{\it supermartingale} (respectively, an $\cEgi$-{\it martingale})
on $[0,T]$ if $Y_{s}\geq\cEgi_{s,t}(Y_t)$ (respectively, $Y_{s}=\cEgi_{s,t}(Y_t)$) for $0\leq s \leq t \leq T$.}
\end{definition}

\begin{theorem}  \label{the4.2}
Let Assumptions \ref{assnBSDE}--\ref{assBSDEcomparison} be satisfied and the strict comparison property of $\cEgi$ hold.
Then for the process $\phi'=Z \in \Psi(x_1+p^i,A)$ and an arbitrary $\tau'\in\cT $ the following assertions are equivalent:\\
(i) $\,\tau'$ is an issuer's break-even time for the pair $(p^i,\phi')\in\rr\times\Psi(x_1+p^i,A)$,  \\
(ii) the triplet $(p^i,\phi',\tau')$ satisfies {\rm (NA)},  \\
(iii) the equality $V_{\tau'}(x_1+p^i,\phi')=X_{\tau'}$ holds,\\
(iv) $\, X_{\tau'}=Y_{\tau'}$ and $\Kpp_{\tau'}=0$ and thus $Y$ is an $\cEgi$-martingale on $[0,\tau']$,\\
(v) $\,\tau'$ is a solution to the issuer's optimal stopping problem, that is, $\cEgi_{0,\tau' }(X_{\tau'})=\wh{v}^i(\cC^a)$. \\
%\[
%\cEgi_{0,\tau' }(X_{\tau'})=\max_{\tau\in\cT } \cEgi_{0,\tau}(X_{\tau}).
%\]
The stopping time $\tau^i=\inf\,\{t\in [0,T]\,|\,Y_t=X_t\}$ is the earliest issuer's break-even time for $(p^i,\phi' )$.
\end{theorem}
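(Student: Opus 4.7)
The plan is to prove the cycle $(i) \Rightarrow (ii) \Rightarrow (iii) \Rightarrow (iv) \Rightarrow (v) \Rightarrow (iii)$ together with the trivial equivalence $(i) \Leftrightarrow (iii)$, which comes straight from the definition of the break-even condition (BE) and the identity $X_{\tau'} = \Vben_{\tau'}(x_1) - \Xhh_{\tau'}$. The implication $(i) \Rightarrow (ii)$ is immediate since the equality branch of (NA) is precisely (BE). For $(ii) \Rightarrow (iii)$ I would invoke Proposition \ref{pro4.2}, which guarantees that $(p^i, \phi')$ satisfies (SH), so $V_{\tau'}(x_1+p^i,\phi')\geq X_{\tau'}$ a.s.; this rules out the strict inequality branch of (NA) and forces the equality, i.e.\ (iii).

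The analytic heart of the proof is $(iii) \Rightarrow (iv)$. Write $V := V(x_1+p^i, \phi')$ with $p^i = Y_0 - x_1$ and $\phi' = Z$. Both $V$ and $Y$ satisfy forward equations with the same initial value $Y_0$ and the same control $Z$, but the dynamics of $Y$ carry an extra nondecreasing subtracted term $dK$. Forward comparison (Lemma \ref{lem:1}) yields $V_t \geq Y_t$ on $[0,T]$; combined with the obstacle condition $Y_t \geq X_t$ and the hypothesis $V_{\tau'} = X_{\tau'}$, one obtains $V_{\tau'} = Y_{\tau'} = X_{\tau'}$. Viewing $V$ and $Y$ now as solutions to backward equations on $[0,\tau']$ with common terminal value $V_{\tau'} = Y_{\tau'}$, and noting that the difference of their driving finite-variation processes, namely $K$, is nondecreasing, the extended comparison property (Assumption \ref{assBSDEcomparison}) yields $Y_s \geq V_s$ on $[0,\tau']$. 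Hence $V \equiv Y$ on $[0,\tau']$, and substituting back into the defining dynamics forces $K_t \equiv 0$ on $[0,\tau']$. Consequently, $Y$ satisfies on $[0,\tau']$ exactly the BSDE \eqref{nhBSDE}, which is the $\cEgi$-martingale property.

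The remaining implications are short. For $(iv) \Rightarrow (v)$, the $\cEgi$-martingale property combined with $X_{\tau'} = Y_{\tau'}$ gives $Y_0 = \cEgi_{0,\tau'}(Y_{\tau'}) = \cEgi_{0,\tau'}(X_{\tau'})$, and Assumption \ref{assBSDEm1} identifies $Y_0$ with $\wh{v}^i(\cC^a)$. For $(v) \Rightarrow (iii)$, the fact that $V$ satisfies the SDE \eqref{value process SDE} on $[0,\tau']$ means that $(V,Z)$ also solves the BSDE \eqref{nhBSDE} on $[0,\tau']$ with terminal $V_{\tau'}$, so $V_0 = \cEgi_{0,\tau'}(V_{\tau'}) = Y_0 = \cEgi_{0,\tau'}(X_{\tau'})$; combined with $V_{\tau'} \geq X_{\tau'}$ (from (SH)) and the strict comparison property of $\cEgi$, this forces $V_{\tau'} = X_{\tau'}$ a.s.

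The earliest break-even claim is then immediate: $\tau^i$ is a break-even time by Proposition \ref{pro4.2}, and any other break-even time $\tau'$ satisfies $Y_{\tau'} = X_{\tau'}$ by (iv), so by the very definition of $\tau^i$ as an infimum one has $\tau^i \leq \tau'$ a.s. I expect the principal obstacle to be establishing $K_{\tau'}=0$ in step $(iii) \Rightarrow (iv)$; this is the step that genuinely requires \emph{both} the forward Lemma \ref{lem:1} and the backward extended comparison (Assumption \ref{assBSDEcomparison}), and it clarifies why the latter assumption is introduced here rather than earlier.
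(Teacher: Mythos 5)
Your proposal is correct, and while it follows the same skeleton as the paper (the chain of implications, the forward comparison Lemma \ref{lem:1}, Assumption \ref{assBSDEcomparison} and the strict comparison property of $\cEgi$), the mechanism in the two delicate steps is genuinely different. For (iii)$\Rightarrow$(iv) the paper first shows $\cEgi_{0,\tau'}(Y_{\tau'})=Y_0$ using the $\cEgi$-martingale property of $V$, then establishes that $Y$ is an $\cEgi$-supermartingale via the extended comparison, and finally runs a contradiction argument with the \emph{strict} comparison property to upgrade the supermartingale to an $\cEgi$-martingale on $[0,\tau']$, from which $\Kpp_{\tau'}=0$ follows. You instead sandwich the processes pathwise: $V\geq Y$ from the forward Lemma \ref{lem:1} and $Y\geq V$ from the backward extended comparison applied with equal terminal values $V_{\tau'}=Y_{\tau'}$ and nondecreasing difference $\Kpp$ of the finite-variation drivers; subtracting the two dynamics then gives $\Kpp\equiv 0$ on $[0,\tau']$ directly. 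This is cleaner and, notably, does not invoke the strict comparison property at all in that step. Similarly, for closing the cycle you prove (v)$\Rightarrow$(iii) directly (using that $V$ is an honest $\cEgi$-martingale, so $\cEgi_{0,\tau'}(V_{\tau'})=Y_0=\cEgi_{0,\tau'}(X_{\tau'})$, and then strict comparison), whereas the paper proves (v)$\Rightarrow$(iv) by first extracting $Y_{\tau'}=X_{\tau'}$ from the $\cEgi$-supermartingale property of $Y$ and then repeating the martingale argument. Both routes rest on exactly the same hypotheses; yours concentrates the use of strict comparison into a single implication and replaces the nonlinear supermartingale argument by a two-sided pathwise comparison, which is arguably more elementary. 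The treatment of the remaining implications and of the earliest break-even time coincides with the paper's.
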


\proof
Recall that if $\phi'=Z$, then the pair $(p^i,\phi')$ is an issuer's superhedging strategy for $\cC^a$ (see the proof of Proposition \ref{pro4.1}).
It is thus is clear that assertions (i), (ii) and (iii) are equivalent.

\noindent  (iii)$\, \Rightarrow\, $(iv). From the proof of Proposition \ref{pro4.1}, we already know that $V_t(x_1+ p^i,\phi')\geq Y_t\geq X_t$ for all $t\in[0,T]$ and thus, in particular, the inequality $V_{\tau}(x_1+p^i,\phi')\ge Y_{\tau}\ge X_{\tau}$ holds for every $\tau\in\cT$.  Since we assumed that (iii) holds, we have $V_{\tau'}(x_1+p^i,\phi')=X_{\tau'}$ and thus $V_{\tau'}(Y_0,\phi')=Y_{\tau'}=X_{\tau'}$ (recall from Theorem \ref{the4.1}(ii) that $p^i=Y_0-x_1$). It thus remains to show that $\Kpp_{\tau'}=0$. Since the process $V=V(Y_0,\phi')$ satisfies the SDE (\ref{value process SDE}), it is an $\cEgi$-martingale and thus we obtain the following equalities
\begin{equation}\label{equation 1}
\cEgi_{0,\tau'}(Y_{\tau'})=\cEgi_{0,\tau'}\big(V_{\tau'}(Y_0,\phi')\big)=Y_0.
\end{equation}
On the one hand, for any fixed $t \in (0,T]$ the process $\bar{Y}_s:=\cEgi_{s,t}(Y_{t}),\,s\in [0,t]$ solves the following BSDE
\[
\left \{\begin{array} [c]{ll}
d\bar{Y}_s=-g(s,\bar{Y}_s,\bar{Z}_s)\,ds+\bar{Z}_s\,dS_s+dA_s,\medskip\\ \bar{Y}_t=Y_t.
\end{array} \right.
\]
On the other hand, if $(Y,Z,\Kpp)$ solves the reflected BSDE \eqref{iRBSDE}, then for any fixed $[0,t]$, the pair $(\wt{Y},\wt{Z})=(Y,Z)$ is a unique solution to the BSDE
\[
\left \{\begin{array} [c]{ll}
d\wt{Y}_s=-g(s,\wt{Y}_s,\wt{Z}_s)\,ds+\wt{Z}_s\,dS_s+dA_s-d\Kpp_s,\medskip\\ \wt{Y}_t=Y_t ,
\end{array} \right.
\]
where $\Kpp$ is a predetermined increasing process. Therefore, in view of the extended comparison property of solutions to BSDEs (see Assumption \ref{assBSDEcomparison}), the inequality $\cEgi_{s,t}(Y_{t})\leq Y_t$ holds for all $s \in [0,t]$ and thus $Y$ is an $\cEgi$-supermartingale. Moreover, similarly to the above discussion, by using the extended comparison property of solutions to BSDEs, one can show that, for any $\theta\in \cT$, the inequality $\cEgi_{s,\theta}(Y_{\theta})\leq Y_s$ holds for all $s \in [0,\theta]$.
We will now show that for every $0\leq s\leq \tau'$
\begin{equation}\label{equationmore1}
\cEgi_{s,\tau'}(Y_{\tau'})=Y_s.
\end{equation}
Let us assume, on the contrary, that equality \eqref{equationmore1} fails to hold. Then, using the strict comparison property of $\cEgi$, we obtain $\cEgi_{0,\tau'}(Y_{\tau'})=\cEgi_{0,s}(\cEgi_{s,\tau'}(Y_{\tau'}))<\cEgi_{0,s}(Y_s)\leq Y_0$. This manifestly contradicts \eqref{equation 1} and thus \eqref{equationmore1} is valid. For every $0\leq s\leq t\leq \tau'$, from \eqref{equationmore1}, we have $\cEgi_{t,\tau'}(Y_{\tau'})=Y_t$ and then
\[
\cEgi_{s,t}(Y_{t})=\cEgi_{s,t}(\cEgi_{t,\tau'}(Y_{\tau'}))=\cEgi_{s,\tau'}(Y_{\tau'})=Y_s,
\]
where the last equality also comes from  \eqref{equationmore1}. We conclude that $Y$ is an $\cEgi$-martingale on $[0,\tau' ]$ and thus the equality $\Kpp_{\tau'}=0$ follows.

\noindent (iv)$\, \Rightarrow \,$(iii). By assumption, $Y_{\tau'}=X_{\tau'}$ and $\Kpp_{\tau'}=0$ and thus the reflected BSDE \eqref{iRBSDE} reduces to the following BSDE on $[0,\tau']$
\[
\left\{ \begin{array} [c]{ll}
dY_t=-g(t,Y_t,Z_t)\,dt+Z_t\,dS_t+dA_t,\medskip\\ Y_{\tau'}=X_{\tau'}.
\end{array} \right.
\]
The above BSDE can also be represented in the forward manner, for $t \in[0,\tau']$,
\[
dY_t=-g(t,Y_t,Z_t)\,dt+Z_t\,dS_t+dA_t,
\]
where the initial value $Y_0$ and the process $Z$ are given. Similarly, the wealth process $V:=V(x_1+p^i,\phi')=V(Y_0,Z)$ solves the following SDE, for $t\in[0,T]$,
\[
dV_t=-g(t,V_t,Z_t)\,dt+Z_t\,dS_t+dA_t,
\]
with initial condition $V_0=Y_0$. From the uniqueness of a solution to the above SDE, we infer that $V_t=Y_t$ for $t\in[0,\tau']$. In particular, $V_{\tau'}(x_1+p^i,\phi' )=Y_{\tau'}=X_{\tau'}$, as was required to show.

\noindent (iv)$\, \Rightarrow \,$(v).  From the $\cEgi$-martingale property of $Y$ on $[0,\tau']$, we get $\cEgi_{0,\tau'}(Y_{\tau'})=Y_0$. In view of Assumption \ref{assBSDEm1}, we have that $Y_0=\vsup^i(\cC^a)$ and thus the equalities $\cEgi_{0,\tau'}(X_{\tau'})=\vsup^i(\cC^a)=\wh{v}^i(\cC^a)$ hold, which means that $\tau'$ is a solution to the issuer's optimal stopping problem.

\noindent (v)$\, \Rightarrow \,$(iv). From condition (v) and Assumption \ref{assBSDEm1}, we obtain the equality $Y_0=\cEgi_{0,\tau'}(X_{\tau'})$. We will now use similar arguments as in the proof of the implication (iii)$\, \Rightarrow\, $(iv). First, the process $\bar{X}_s:=\cEgi_{s,t}(X_{\tau'})$ solves the following BSDE
\[
\left \{\begin{array} [c]{ll}
d\bar{X}_s=-g(s,\bar{X}_s,\bar{Z}_s)\,ds+\bar{Z}_s\,dS_s+dA_s,\medskip\\ \bar{X}_{\tau'}=X_{\tau'},
\end{array} \right.
\]
and it also satisfies $\bar{X}_0=Y_0$.  Second, if $(Y,Z,\Kpp)$ solves the reflected BSDE \eqref{iRBSDE}, then the pair $(\wt{Y},\wt{Z})=(Y,Z)$ is a unique solution to the BSDE
\[
\left \{\begin{array} [c]{ll}
d\wt{Y}_s=-g(s,\wt{Y}_s,\wt{Z}_s)\,ds+\wt{Z}_s\,dS_s+dA_s-d\Kpp_s,\medskip\\ \wt{Y}_{\tau'}=Y_{\tau'} \geq X_{\tau'} ,
\end{array} \right.
\]
where $\Kpp$ is a predetermined increasing process and, obviously, $\wt{Y}_0=Y_0$.  The extended comparison property of solutions to BSDEs yields the inequality $\cEgi_{t,\tau'}(Y_{\tau'})\leq Y_t$ for all $t \in [0,\tau']$. Therefore, if $Y_{\tau'} \geq X_{\tau'}$ and $Y_{\tau'} \ne X_{\tau'}$, then the strict comparison property of $\cEgi$ gives
\[
Y_0 \geq \cEgi_{0,\tau'}(Y_{\tau'}) > \cEgi_{0,\tau'}(X_{\tau'})=Y_0,
\]
which is a contradiction. This shows that $Y_{\tau'}=X_{\tau'}$. As in the proof of the implication (iii)$\, \Rightarrow\, $(iv), we argue that $Y$ is an $\cEgi$-martingale on $[0,\tau' ]$ and thus the equality $\Kpp_{\tau'}=0$ is satisfied.

It remains to show that the last assertion is valid. In view of Assumption \ref{assBSDEm2}, $\tau^i$ is a solution to the issuer's optimal stopping problem and thus, from part (v),  $\tau^i$ is an issuer's break-even time for $(p^i,\phi')$. In view of (iv), for any break-even time for for $(p^i,\phi')$ we have that $X_{\tau'}=Y_{\tau'}$. The definition of $\tau^i$ now shows that it is the earliest issuer's break-even time for $\cC^a$.
\endproof

%It now remains to prove the assertion regarding the stopping time $\tau^i$. From the right-continuity of the processes $Y$ and $X$, we have that $Y_{\tau^i}=X_{\tau^i}$. Furthermore, from the definition of $\tau^i$, we obtain that $Y_t>X_t$ for $t\in[0,\tau^i)$ and thus $\Kpp_t=0$ for every $t\in[0,\tau^i)$. This in turn implies that $\Kpp_{\tau^i}=0$, due to the definition of the solution to reflected BSDEs with a lower obstacle (see, for instance,  Definition 2.4 in \cite{QS2014}). Hence we have $Y_{\tau^i}=X_{\tau^i}$ and $\Kpp_{\tau^i}=0$ so that (iv) holds with $\tau'=\tau^i$. Consequently, in view of the already established equivalence of (i) and (iv), we conclude that $\tau^i$ is an issuer's break-even time for $(p^i,\phi')$ and, obviously, it is the earliest one in view of part (iv) and the definition of $\tau^i$.

%%%%%%%%%%%%%%%%%%%%%%%%%%%%%%%%%%%%%%%%%%%%%%%%%%%%%%%%%%%%%%%%%%%%%%%%%%%%%%%%
\subsection{Holder's Acceptable Price via RBSDE}   \label{sec4.4}
%%%%%%%%%%%%%%%%%%%%%%%%%%%%%%%%%%%%%%%%%%%%%%%%%%%%%%%%%%%%%%%%%%%%%%%%%%%%%%%%

We now address the pricing, hedging and exercising problems from the perspective of the holder. The corresponding nonlinear evaluation $\cE^{g,-A}$, which is defined through solutions to the holder's BSDE
\begin{equation} \label{ncBSDE}
y_t=\zeta_s+\int_t^s g(u,y_u,z_u)\,du-\int_t^s z_u\,dS_u+A_s-A_t,
\end{equation}
is henceforth denoted by $\cEgh$ and called the {\it holder's $g$-evaluation}. For brevity, we denote  by $x:=\Vben(x_2)+\Xhh$
the {\it holder's relative reward} and we assume that the process $x$ is square-integrable.

\begin{assumption} \label{assBSDEh}
{\rm The reflected BSDE with the upper obstacle $x$
\begin{equation} \label{hRBSDE}
\left\{ \begin{array}[c]{ll}
dy_t=-g(t,y_t,z_t)\,dt+z_t\,dS_t-dA_t+d\kpp_t,\medskip\\
y_T=x_T,\quad y_t\leq x_t,\quad \int_0^T (x_t-y_t)\,d\kpp^c_t=0,\quad \Delta \kpp^{d}_t = \Delta (y_t+A_t)\I_{\{ y_{t-}= x_{t-}\}},
\end{array} \right.
\end{equation}
has a unique solution $(y,z,\kpp)$ where $\kpp$ is a $\gg$-predictable, nondecreasing process such that $\kpp_0=0$ and $\kpp=\kpp^c+\kpp^d$ is its unique decomposition into continuous and jump components.}
\end{assumption}

\begin{definition} \label{cef2vxa} {\rm
We say that $\vinf^h(\cC^a)\in\rr$ is the {\it value} of the holder's optimal stopping problem for $\cC^a$ if
\begin{equation*}
\vinf^h(\cC^a)=\inf_{\tau\in\cT}\,\cEgh_{0,\tau }(x_{\tau}).
\end{equation*}
A stopping time $\taus \in\cT$ is called a {\it solution} to the holder's optimal stopping problem if $\vinf^h(\cC^a)=\breve{v}^h(\cC^a)$ where}
\begin{equation}  \label{ceq2vb}
\breve{v}^h(\cC^a):=\cEgh_{0,\taus }(x_{\taus})=\min_{\tau\in\cT}\,\cEgh_{0,\tau }(x_{\tau}).
\end{equation}
\end{definition}

The following assumptions can be justified by an independent analysis of a nonlinear optimal stopping problem.
Although we take here the properties stated in Assumptions \ref{assBSDEv1} and \ref{assBSDEv2} for granted, it is worth to mention
that they are supported by recent results in Grigorova et al.~\cite{GIOOQ2017,GIOQ2017}.

\begin{assumption} \label{assBSDEv1}
{\rm The value $\vinf^h(\cC^a)$ to the holder's optimal stopping problem exists and satisfies $\vinf^h(\cC^a)=y_0$.}
\end{assumption}

\begin{assumption} \label{assBSDEv2}
{\rm The stopping time $\tau^h :=\inf\,\{t\in [0,T]\,|\,y_t=x_t\}$ is a solution to the holder's optimal stopping problem.}
\end{assumption}

A salient feature of an American contract is a holder's {\it rational exercise time}, which in our framework is defined as follows.

\begin{definition} \label{holderrationaltime}
{\rm We say that $\tau\in\cT$ is a \textit{rational exercise time} for the holder of $\cC^a$ if the contract is traded at the holder's maximum superhedging cost $\wh{p}^{s,h}=\wh{p}^{s,h}(x_2,\cC^a)$ at time 0 and there exists a strategy $\psi\in\Psi(x_2-\wh{p}^{s,h},-A)$ such that $V_{\tau}(x_2-\wh{p}^{s,h},\psi)\geq x_{\tau}$.}
\end{definition}

In fact, we will use Definition \ref{holderrationaltime} within the setup where the equality $\wh{p}^{r,h}(x_2,\cC^a)=\wh{p}^{s,h}(x_2,\cC^a)$ holds. If, in addition, the strict comparison property for the BSDE with the driver $g$ is satisfied, then the inequality $V_{\tau}(x_2-\wh{p}^{r,h},\psi)\geq x_{\tau}$ can be replaced by the equality $V_{\tau}(x_2-\wh{p}^{r,h},\psi)=x_{\tau}$ so that a rational exercise time is also a holder's break-even time.
The following theorem describes the properties of a solution to the holder's pricing problem for an American contract $\cC^a$.

\begin{theorem} \label{the4.3}
Let Assumption \ref{assnBSDE} and \ref{assBSDEh}--\ref{assBSDEv2} be satisfied and let $(y,z,\kpp)$ be a unique solution to the holder's reflected BSDE
\eqref{hRBSDE}. Then the following assertions are valid.\\
(i) If $\cEgh$ has the strict comparison property, then
\begin{equation} \label{uuii}
\wh{p}^{r,h}(x_2,\cC^a)= \wh{p}^{s,h}(x_2,\cC^a)=x_2-y_0=x_2-\cEgh_{0,\tau^h}(x_{\tau^h})=x_2-\breve{v}^h(\cC^a).
\end{equation}
A holder's replicating strategy for $\cC^a$ is given by the triplet $(p',\psi',\tau')=(x_2-y_0,z,\tau^h)$ and $\tau^h$ is a holder's rational exercise time.\\
(ii) If $\cEgh$ has the strict comparison property, then the holder's acceptable price for $\cC^a$ equals
$p^h(x_2,\cC^a)=x_2-y_0=x_2-\breve{v}^h(\cC^a)$.
\end{theorem}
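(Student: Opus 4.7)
My plan is to prove Theorem \ref{the4.3} as a mirror image of Theorem \ref{the4.1}, organized into steps that parallel Propositions \ref{pro4.1}, \ref{pro4.2} and \ref{pro4.3} with $(Y,Z,\Kpp)$ replaced by $(y,z,\kpp)$ and the obstacle flipped from below to above. The key computational tool is that whenever $V=V(x_2-p,\psi,-A)$ for some $\psi\in\Psi(x_2-p,-A)$ and $\tau\in\cT$, the process $V$ on $[0,\tau]$ is the unique strong solution of the SDE whose BSDE counterpart defines $\cEgh$, so that $\cEgh_{0,\tau}\bigl(V_{\tau}\bigr)=x_2-p$; this is exactly the duality between the wealth SDE \eqref{ueq3x} (with $A$ replaced by $-A$) and the holder's BSDE \eqref{ncBSDE}.

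\textbf{Step 1: identification of $\wh{p}^{s,h}(x_2,\cC^a)$ and a witness.} I first obtain the upper bound $\wh{p}^{s,h}\le x_2-y_0$. Given any $p\in\cH^{s,h}(x_2)$ with witness $(\psi,\tau)$ realizing $V_{\tau}(x_2-p,\psi)\ge x_{\tau}$, applying $\cEgh_{0,\tau}$ and using its comparison property together with Assumption \ref{assBSDEv1} yields
\begin{equation*}
x_2-p=\cEgh_{0,\tau}\bigl(V_{\tau}(x_2-p,\psi)\bigr)\ge\cEgh_{0,\tau}(x_{\tau})\ge\inf_{\sigma\in\cT}\cEgh_{0,\sigma}(x_{\sigma})=\vinf^h(\cC^a)=y_0,
\end{equation*}
so $p\le x_2-y_0$. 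Next, I show that $(p',\psi',\tau')=(x_2-y_0,z,\tau^h)$ realizes condition (BE$'$) and hence also (SH$'$). By right-continuity of $x$ and $y$ and the definition of $\tau^h$, $y_{\tau^h}=x_{\tau^h}$; Assumptions \ref{assBSDEv1}--\ref{assBSDEv2} give $\cEgh_{0,\tau^h}(y_{\tau^h})=\cEgh_{0,\tau^h}(x_{\tau^h})=\breve{v}^h(\cC^a)=y_0$. Reading the RBSDE \eqref{hRBSDE} on $[0,\tau^h]$ presents $y_0$ as a $\cEgh$-type evaluation involving the shifted terminal $y_{\tau^h}-\kpp_{\tau^h}$, mirroring the identity used in Proposition \ref{pro4.2}; the strict comparison property of $\cEgh$ combined with $\kpp_{\tau^h}\ge 0$ therefore forces $\kpp_{\tau^h}=0$. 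On $[0,\tau^h]$ the process $y$ then satisfies the same SDE as the wealth $V(y_0,z,-A)$, and Assumption \ref{assnBSDE}(ii) yields $V_{\tau^h}(y_0,z,-A)=y_{\tau^h}=x_{\tau^h}$, establishing (BE$'$). Thus $x_2-y_0\in\cH^{r,h}(x_2)\subseteq\cH^{s,h}(x_2)$, and combining with the upper bound and \eqref{cequac} delivers all equalities in \eqref{uuii}; moreover, $(x_2-y_0,z,\tau^h)$ is a replicating strategy and $\tau^h$ qualifies as a rational exercise time in the sense of Definition \ref{holderrationaltime}.

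\textbf{Step 2: the acceptable price.} For part (ii), I show by contradiction that $\wh{p}^{r,h}\in\cH^{f,h}(x_2)$. Suppose on the contrary that $\wh{p}^{r,h}\in\cH^{a,h}(x_2)$, so that there exists $(\breve{\psi},\breve{\tau})\in\Psi(x_2-\wh{p}^{r,h},-A)\times\cT$ satisfying (AO$'$): $V_{\breve{\tau}}\ge x_{\breve{\tau}}$ with strict inequality on a set of positive probability. Applying $\cEgh_{0,\breve{\tau}}$ and using its strict comparison property,
\begin{equation*}
x_2-\wh{p}^{r,h}=\cEgh_{0,\breve{\tau}}\bigl(V_{\breve{\tau}}\bigr)>\cEgh_{0,\breve{\tau}}(x_{\breve{\tau}})\ge\vinf^h(\cC^a)=y_0=x_2-\wh{p}^{r,h},
\end{equation*}
a contradiction. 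Hence $\wh{p}^{r,h}\in\cH^{f,h}(x_2)\cap\cH^{r,h}(x_2)=\cH^{f,r,h}(x_2)$, this set is nonempty, and Proposition \ref{pro1.2} yields $p^h(x_2,\cC^a)=\wh{p}^{f,r,h}=\wh{p}^{r,h}=x_2-y_0$, completing the proof.

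\textbf{Main obstacle.} The delicate point, mirroring the issuer's Proposition \ref{pro4.2}, is the equality $\kpp_{\tau^h}=0$. One must represent $y_0$ simultaneously as $\cEgh_{0,\tau^h}(y_{\tau^h})$, via the optimal stopping characterization of the RBSDE, and as the $\cEgh$-evaluation associated with $y_{\tau^h}-\kpp_{\tau^h}$, read off from the RBSDE itself; the strict comparison property then collapses $\kpp_{\tau^h}$. Care must be taken to respect the sign flip between the issuer's $+\Kpp$ reflection and the holder's $-\kpp$ reflection and to mirror the orientation of the obstacle from lower to upper; once this bookkeeping is in place, the remaining equalities in \eqref{uuii} and the identification of $(x_2-y_0,z,\tau^h)$ both as a replicating strategy and as a witness for the holder's optimal stopping problem follow exactly as in the issuer's case.
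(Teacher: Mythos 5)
Your proposal is correct and follows essentially the same route as the paper, which also splits the argument into the three steps you describe (the comparison-property upper bound $\psup^{s,h}\le x_2-y_0$, the construction of the replicating strategy via $x_{\tau^h}=y_{\tau^h}$ and $\kpp_{\tau^h}=0$ through strict comparison, and the contradiction argument placing $\wh{p}^{r,h}$ in $\cH^{f,h}(x_2)$ so that Proposition \ref{pro1.2} applies). Your identification of $\kpp_{\tau^h}=0$ as the delicate point, including the sign flip relative to the issuer's case, matches the paper's own treatment.
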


\begin{proof}
As in the case of the issuer, we split the proof into three steps, which are formulated as Propositions \ref{pro4.4}, \ref{pro4.5}
and \ref{pro4.6}. They can be seen as the holder's counterparts of Propositions \ref{pro4.1}, \ref{pro4.2} and \ref{pro4.3}, although their statements and proofs differ from the issuer's case.
\end{proof}

Note the Assumption \ref{assBSDEv2} is not postulated in the next result. It is worth stressing that the equality $\psup^{s,h}(x_2,\cC^a)=x_2-y_0$ does not necessarily hold under the assumptions of Proposition \ref{pro4.4} and the holder's maximum superhedging cost $\wh{p}^{s,h}(x_2,\cC^a)$ is not necessarily well defined.

%\begin{remark}
%{\rm The holder's pricing problem is manifestly different from the issuer's one, since all holder's conditions stated in Definition \ref{def1.7}
%involve a stopping time. For this reason, in the holder's counterpart of the issuer's Proposition \ref{pro4.2},
%it is postulated that the solution to the holder's optimal stopping problem exists (see Assumption \ref{assBSDEv2} in Proposition\ref{pro4.5a}).
%In contrast, Assumption \ref{assBSDEm2} on the existence of a solution to the issuer's optimal stopping problem was not postulated in Proposition \ref{pro4.2}.
%It is worth stressing that if we only postulate in Proposition \ref{pro4.5a} that Assumptions \ref{assnBSDE}, \ref{assBSDEh} and \ref{assBSDEv1}  are satisfied,
%then the equality $\psup^{s,h}(x_2,\cC^a)=x_2-\vinf^h(\cC^a)$ would be still valid, but the equality $\psup^{s,h}(x_2,\cC^a)=\wh{p}^{s,h}(x_2,\cC^a)$ would not need to hold,
%as can be demonstrated through simple counterexamples.}
%\end{remark}

\begin{proposition} \label{pro4.4}
If Assumptions \ref{assnBSDE} and \ref{assBSDEh}--\ref{assBSDEv1} are satisfied and $\cEgh$ has the comparison property, then
\begin{equation} \label{4.5ce}
\psup^{s,h}(x_2,\cC^a) \leq x_2-\vinf^h(\cC^a)=x_2-y_0,
\end{equation}
where $(y,z,\kpp)$ is the unique solution to the reflected BSDE \eqref{hRBSDE}.
\end{proposition}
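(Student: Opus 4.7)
The plan is to exploit the observation that, when restricted to $[0,\tau]$ for any stopping time $\tau$, the holder's wealth process satisfies exactly the BSDE that defines the $\cEgh$-evaluation, and then to invoke the comparison property of $\cEgh$ to bound the initial wealth from below by the nonlinear evaluation of the relative reward.

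First I would fix an arbitrary $p\in\cH^{s,h}(x_2)$ (if this set is empty the statement is trivial, since $\psup^{s,h}=-\infty$). By Definition \ref{def1.14}, there exists a pair $(\psi,\tau)\in\Psi(x_2-p,-A)\times\cT$ with $V_\tau(x_2-p,\psi)\geq x_\tau$. Write $V:=V(x_2-p,\psi,-A)$. From the wealth dynamics \eqref{ueq3x} applied with initial endowment $x_2-p$ and cumulative cash flows $-A$, a direct rearrangement between $t$ and $\tau$ gives, for all $t\in[0,\tau]$,
\begin{equation*}
V_t = V_\tau + \int_t^\tau g(u,V_u,\psi_u)\,du - \int_t^\tau \psi_u\,dS_u + (A_\tau - A_t),
\end{equation*}
which is precisely the BSDE \eqref{ncBSDE} defining the holder's $g$-evaluation with terminal datum $\zeta_\tau = V_\tau$. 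Hence, by uniqueness (Assumption \ref{assnBSDE}(iv) applied to the holder's BSDE in its modified form), $x_2 - p = V_0 = \cEgh_{0,\tau}(V_\tau)$.

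Next I would apply the comparison property of $\cEgh$ to the inequality $V_\tau\geq x_\tau$, obtaining
\begin{equation*}
x_2 - p = \cEgh_{0,\tau}(V_\tau) \geq \cEgh_{0,\tau}(x_\tau) \geq \inf_{\sigma\in\cT}\cEgh_{0,\sigma}(x_\sigma) = \vinf^h(\cC^a) = y_0,
\end{equation*}
where the last two equalities follow from Definition \ref{cef2vxa} and Assumption \ref{assBSDEv1}. Rearranging yields $p\leq x_2 - y_0$, and taking the supremum over $p\in\cH^{s,h}(x_2)$ delivers $\psup^{s,h}(x_2,\cC^a)\leq x_2-y_0$. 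The second equality in \eqref{4.5ce} is immediate from Assumption \ref{assBSDEv1}.

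The only real subtlety lies in step one, namely justifying that the holder's wealth process on $[0,\tau]$ is indeed the $\cEgh$-evaluation of its own terminal value $V_\tau$. This is really just a rewriting of the forward SDE as a backward equation, but it relies on the uniqueness statement in Assumption \ref{assnBSDE}(iv) (with $A$ replaced by $-A$, as noted in the remark preceding Section \ref{sec4.3}) to identify the two. Once this identification is in place, comparison handles the rest; note also that no form of strict monotonicity or strict comparison is needed here, which is consistent with the fact that Proposition \ref{pro4.4} only produces an inequality rather than an equality.
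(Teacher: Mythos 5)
Your proposal is correct and follows essentially the same route as the paper: fix $p\in\cH^{s,h}(x_2)$, identify $x_2-p=\cEgh_{0,\tau}(V_{\tau}(x_2-p,\psi))$, and apply the comparison property together with Assumption \ref{assBSDEv1}. The only difference is that you spell out the rewriting of the forward wealth SDE as the backward equation \eqref{ncBSDE} to justify that identification, a step the paper's proof takes for granted.
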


\proof
We will show that $\psup^{s,h} \leq x_2-y_0$. By the definition of the supremum, it is enough to show that $x_2-y_0 \geq p$ for all $p\in\cH^{s,h}(x_2)$. From the definition of $\cH^{s,h}(x_2)$, we know that for any $p\in\cH^{s,h}(x_2)$, there exists a pair $(\psi,\tau)\in\Psi(x_2-p,-A)\times\cT$ such that $V_{\tau}(x_2-p,\psi)\ge x_{\tau}$. The comparison property of $\cEgh$ gives
\[
x_2-p=\cEgh_{0,\tau}\big(V_{\tau}(x_2-p,\psi)\big)\geq\cEgh_{0,\tau}(x_{\tau})
\]
and thus
\[
x_2-p\geq\inf_{\tau\in\cT}\cEgh_{0,\tau}\big(V_{\tau}(x_2-p,\psi)\big)\geq\inf_{\tau\in\cT}\cEgh_{0,\tau}(x_{\tau})=\vinf^h(\cC^a)=y_0,
\]
where the last equality follows from Assumption \ref{assBSDEv1}. We have thus shown that $\psup^{s,h}(x_2,\cC^a)\leq x_2-y_0=x_2-\vinf^h(\cC^a)$.
%Finally, from the first part of the proof, we know that for $p':=x_2-y_0$ there exists a holder's superhedging strategy, that is, there exists a pair
%$(\psi',\tau')\in\Psi(x_2-p',-A)\times\cT$ such that $(p',\psi',\tau')$ satisfy (SH$'$). Consequently, $\wh{p}^{s,h}(x_2,\cC^a)$ is well defined
%and satisfies \eqref{4.5ce}.
\endproof

We will now give conditions under which, in particular, the holder's maximum superhedging and replication costs are well defined and we show that they are equal, that is,  $\wh{p}^{r,h}(x_2,\cC^a)=\wh{p}^{s,h}(x_2,\cC^a)$.

\begin{proposition} \label{pro4.5}
If Assumptions \ref{assnBSDE} and \ref{assBSDEh}--\ref{assBSDEv2} are satisfied and $\cEgh$ has the strict comparison property, then: \\
\noindent (i) $(x_2-y_0,z,\tau^h)$ is a holder's replicating strategy for $\cC^a$,\\
\noindent (ii) the holder's maximum replication cost is well defined and satisfies
\[
\wh{p}^{r,h}(x_2,\cC^a)=\wh{p}^{s,h}(x_2,\cC^a)=x_2-y_0=x_2-\breve{v}^h(\cC^a)=x_2-\cEgh_{0,\tau^h}(x_{\tau^h}).
\]
\end{proposition}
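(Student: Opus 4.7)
The plan is to mirror the proof of Proposition \ref{pro4.2} with the sign changes appropriate to the holder's dynamics (driven by $-A$), taking as the natural candidate $(p',\psi',\tau'):=(x_2-y_0,z,\tau^h)$, where $(y,z,\kpp)$ is the unique solution to the holder's RBSDE \eqref{hRBSDE} and $\tau^h$ is as in Assumption \ref{assBSDEv2}. For assertion (i), I would verify that this triplet satisfies the break-even condition (BE$'$) at $\tau^h$, that is, $V_{\tau^h}(y_0,z,-A)=x_{\tau^h}$.

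The central technical step, and the main obstacle, is to show that $\kpp_{\tau^h}=0$ $\P$-a.s. Right-continuity of $y$ and $x$ together with the definition of $\tau^h$ yield $y_{\tau^h}=x_{\tau^h}$; combined with Assumptions \ref{assBSDEv1}--\ref{assBSDEv2}, this gives $y_0=\cEgh_{0,\tau^h}(y_{\tau^h})$. On the other hand, integrating \eqref{hRBSDE} over $[0,\tau^h]$ yields
\[
y_0=y_{\tau^h}+\int_0^{\tau^h}g(u,y_u,z_u)\,du-\int_0^{\tau^h}z_u\,dS_u+A_{\tau^h}-\kpp_{\tau^h},
\]
which, via the definition of the holder's BSDE \eqref{ncBSDE}, reads $y_0=\cEgh_{0,\tau^h}(y_{\tau^h}-\kpp_{\tau^h})$. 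Since $\kpp_{\tau^h}\geq 0$, the strict comparison property of $\cEgh$ forces $\kpp_{\tau^h}=0$, and by monotonicity of $\kpp$ we then have $\kpp\equiv 0$ on $[0,\tau^h]$. Consequently, on $[0,\tau^h]$ the process $y$ solves the SDE $dy_t=-g(t,y_t,z_t)\,dt+z_t\,dS_t-dA_t$ with initial value $y_0$, which coincides with the SDE for $V(y_0,z,-A)$; the uniqueness in Assumption \ref{assnBSDE}(ii) (applied with $-A$) then gives $V_{\tau^h}(x_2-p',\psi',-A)=y_{\tau^h}=x_{\tau^h}$. This proves (i) and, since Definition \ref{holderrationaltime} is met with equality, also shows that $\tau^h$ qualifies as a rational exercise time.

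For assertion (ii), part (i) implies $x_2-y_0\in\cH^{r,h}(x_2)$, so $\psup^{r,h}(x_2,\cC^a)\geq x_2-y_0$. The inclusion $\cH^{r,h}(x_2)\subseteq\cH^{s,h}(x_2)$ from Lemma \ref{lem1.8}(i), combined with Proposition \ref{pro4.4}, gives $\psup^{r,h}(x_2,\cC^a)\leq\psup^{s,h}(x_2,\cC^a)\leq x_2-y_0$. Both suprema therefore equal $x_2-y_0$ and are attained, whence $\wh{p}^{r,h}(x_2,\cC^a)=\wh{p}^{s,h}(x_2,\cC^a)=x_2-y_0$; the remaining identifications with $x_2-\breve{v}^h(\cC^a)=x_2-\cEgh_{0,\tau^h}(x_{\tau^h})$ are immediate from Assumptions \ref{assBSDEv1}--\ref{assBSDEv2}.
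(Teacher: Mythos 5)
Your proposal is correct and follows essentially the same route as the paper's proof: reduce everything to showing that $(x_2-y_0,z,\tau^h)$ is a holder's replicating strategy, establish $\kpp_{\tau^h}=0$ via the strict comparison property (the same argument the paper imports from Proposition \ref{pro4.2}), and then identify $V$ with $y$ on $[0,\tau^h]$ by uniqueness of the forward SDE, with part (ii) following from the chain $x_2-y_0\ge\psup^{s,h}\ge\psup^{r,h}$ together with attainment. The only difference is cosmetic: you spell out the $\kpp_{\tau^h}=0$ step that the paper merely references.
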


\proof
We already know that $x_2-y_0 \geq \psup^{s,h}\geq \psup^{r,h}$  (see \eqref{cequac} and \eqref{4.5ce}). Hence to prove parts (i) and (ii), it suffices to show that if $(y,z,\kpp)$ is the unique solution to the reflected BSDE \eqref{hRBSDE}, then $(p',\psi',\tau')=(x_2-y_0,z,\tau^h)$ is a holder's replicating strategy. The wealth process $V=V(x_2-p',\psi')$ satisfies the SDE
\begin{equation} \label{eq.fSDE}
dV_t=-g(t,V_t,z_t)\,dt+z_t\,dS_t-dA_t,
\end{equation}
where the initial value $V_0=y_0$ and the process $z$ are given. The definition of $\tau^h$ and the right-continuity of the processes $x$ and $y$ ensure that $x_{\tau^h}=y_{\tau^h}$ so that
\[
y_0=\breve{v}^h(\cC^a)=\cEgh_{0,\tau^h }(x_{\tau^h})= \cEgh_{0,\tau^h }(y_{\tau^h}),
\]
where the second equality is a consequence of Assumption \ref{assBSDEv2}, and thus we see that $y_0=\cEgh_{0,\tau^h }(y_{\tau^h})$.
% and $k_{\tau^h-}=0$ where the latter equality is clear since $y_t<x_t$ on the stochastic interval $[0,\tau^h)$.
Therefore, using the strict comparison property of $\cEgh$ and simple arguments analogous to those used in the derivation of the equality $\Kpp_{\tau^i}=0$ in the proof of Proposition \ref{pro4.2}, we obtain the equality $\kpp_{\tau^h}=0$. Since $\kpp_t=0$ on $[0,\tau^h]$, the reflected BSDE \eqref{hRBSDE} can be seen on $[0,\tau^h]$ as the forward SDE
\begin{equation} \label{eq.fbSDE}
dy_t=-g(t,y_t,z_t)\,dt+z_t\,dS_t-dA_t,
\end{equation}
where the initial value $y_0=V_0$ and the process $z$ are given. From the uniqueness of a solution to the SDE \eqref{eq.fSDE}, it follows that $V=y$ on $[0,\tau^h]$. Hence $V_{\tau^h}=y_{\tau^h}=x_{\tau^h}$ and thus the triplet $(p',\psi',\tau')=(x_2-y_0,z,\tau^h)$ is indeed a holder's replicating strategy.
\endproof

To complete the proof of Theorem \ref{the4.4}, we need to examine the existence of the holder's acceptable price $p^h(x_2,\cC^a)$.
This will be done in the proof of the following proposition.

\begin{proposition} \label{pro4.6}
If Assumptions \ref{assnBSDE} and \ref{assBSDEh}--\ref{assBSDEv2} are satisfied and $\cEgh$ has the strict comparison property, then the holder's acceptable price $p^h(x_2,\cC^a)$ is well defined and
\[
p^h(x_2,\cC^a)=\breve{p}^{f,h}(x_2,\cC^a)=\wh{p}^{r,h}(x_2,\cC^a)=\wh{p}^{s,h}(x_2,\cC^a).
\]
\end{proposition}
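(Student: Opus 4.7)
The plan is to mirror the argument used for Proposition \ref{pro4.3} in the issuer's case, exploiting the holder-specific identifications already established in Proposition \ref{pro4.5}. The core observation is that, thanks to Proposition \ref{pro4.5}, the number $\wh{p}^{r,h}(x_2,\cC^a)=x_2-y_0$ is well defined and belongs to $\cH^{r,h}(x_2)$. In view of Proposition \ref{pro1.2}(i), it is therefore enough to show that the set $\cH^{f,r,h}(x_2)=\cH^{r,h}(x_2)\cap\cH^{f,h}(x_2)$ is nonempty, for then it is automatically a singleton and all four quantities in the asserted chain coincide.

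To accomplish this I would prove that $\wh{p}^{r,h}\in\cH^{f,h}(x_2)$ by contradiction. Suppose instead that $\wh{p}^{r,h}\in\cH^{a,h}(x_2)$, i.e., there exist $(\psi,\tau)\in\Psi(x_2-\wh{p}^{r,h},-A)\times\cT$ such that $(\wh{p}^{r,h},\psi,\tau)$ fulfills condition (AO$'$). Writing $x_t:=\Vben_t(x_2)+\Xhh_t$ as before, this amounts to
\[
V_{\tau}(x_2-\wh{p}^{r,h},\psi)\ge x_{\tau}\quad\text{and}\quad\P\big(V_{\tau}(x_2-\wh{p}^{r,h},\psi)>x_{\tau}\big)>0.
\]
Because $V(x_2-\wh{p}^{r,h},\psi)$ solves the forward SDE \eqref{eq.fSDE}, it is an $\cEgh$-martingale, so $\cEgh_{0,\tau}\big(V_{\tau}(x_2-\wh{p}^{r,h},\psi)\big)=x_2-\wh{p}^{r,h}$. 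Applying the strict comparison property of $\cEgh$ to the inequality above, together with Assumption \ref{assBSDEv1}, yields
\[
x_2-\wh{p}^{r,h}=\cEgh_{0,\tau}\big(V_{\tau}(x_2-\wh{p}^{r,h},\psi)\big)>\cEgh_{0,\tau}(x_{\tau})\ge\inf_{\sigma\in\cT}\cEgh_{0,\sigma}(x_{\sigma})=\vinf^h(\cC^a)=y_0.
\]
Proposition \ref{pro4.5} gives $x_2-\wh{p}^{r,h}=y_0$, so we have produced $y_0>y_0$, a contradiction.

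Hence $\wh{p}^{r,h}\notin\cH^{a,h}(x_2)$, so $\wh{p}^{r,h}\in\cH^{f,h}(x_2)$, and consequently $\wh{p}^{r,h}\in\cH^{f,r,h}(x_2)$. Proposition \ref{pro1.2}(i) then ensures that $\cH^{f,r,h}(x_2)$ is a singleton, that the holder's acceptable price $p^h(x_2,\cC^a)$ is well defined, and that the chain of equalities
\[
p^h(x_2,\cC^a)=\breve{p}^{f,h}(x_2,\cC^a)=\wh{p}^{r,h}(x_2,\cC^a)=\wh{p}^{s,h}(x_2,\cC^a)
\]
holds; combining with Proposition \ref{pro4.5} and Theorem \ref{the4.3}(i) identifies the common value with $x_2-y_0=x_2-\breve{v}^h(\cC^a)$.

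The main obstacle is the verification that the $\cEgh$-evaluation of the wealth process reduces cleanly to its initial value; this requires recognizing that the forward SDE driving the wealth is precisely (the forward form of) the BSDE defining $\cEgh$, and then invoking Assumption \ref{assnBSDE}(ii) to identify the two solutions. After that step, the strict comparison property delivers the contradiction essentially for free, exactly as in the parallel step of Proposition \ref{pro4.3}. One small technical caveat is that the strict comparison property must actually apply to the terminal random variables $V_{\tau}(x_2-\wh{p}^{r,h},\psi)$ and $x_{\tau}$ at the stopping time $\tau$, which is already built into Definition \ref{xdefmong} and its extension to stopping times.
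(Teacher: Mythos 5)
Your proposal is correct and follows essentially the same route as the paper: show by contradiction (using the $\cEgh$-martingale property of the wealth, the strict comparison property, and $\inf_{\sigma\in\cT}\cEgh_{0,\sigma}(x_{\sigma})=y_0=x_2-\wh{p}^{r,h}$) that $\wh{p}^{r,h}\notin\cH^{a,h}(x_2)$, hence $\cH^{f,r,h}(x_2)\neq\emptyset$, and conclude via Proposition \ref{pro1.2}(i). The only cosmetic difference is that the paper additionally spells out that $\cH^{a,h}(x_2)$ must then be the open half-line $(-\infty,\psup^{a,h})$, whereas you pass directly to the complement; both are valid.
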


\proof
We will first show that $\wh{p}^{r,h} \in\cH^{f,h}(x_2)$. In view of \eqref{firstn} and \eqref{secondn}, it is enough to prove that $\wh{p}^{r,h}(x_2,\cC^a)>p$ for every $p\in\cH^{a,h}(x_2)$. To this end, we argue by contradiction.  Let us write $\wh{p}=\wh{p}^{r,h}(x_2,\cC^a)$. Assume that $\wh{p}\in\cH^{a,h}(x_2)$ so that there exists $(\whphi,\wh{\tau})\in\Psi(x_2-\wh{p})\times\cT$ such that $(\wh{p},\whphi,\wh{\tau})$ satisfies (AO$'$), that is,
\[
\P\big(V_{\wh{\tau}}(x_2-\wh{p},\whphi)\geq x_{\wh{\tau}}\big)=1 \ \ \text{\rm and }\ \P\big(V_{\wh{\tau}}(x_2-\wh{p},\whphi)>x_{\wh{\tau}}\big)>0.
\]
By applying the mapping $\cEgh$, we obtain
\[
x_2-\wh{p}=\cEgh_{0,\wh{\tau}}\big(V_{\wh {\tau}}(x_2-\wh {p}^{r,h},\whphi)\big)>
\cEgh_{0,\wh{\tau}}(x_{\wh{\tau}})\geq \inf_{\tau\in\cT}\cEgh_{0,\tau}(x_{\tau})=\cEgh_{0,\tau^h}(x_{\tau^h})=x_2-\wh{p},
\]
where the last equality comes from Proposition \ref{pro4.5}. This is a contradiction and thus $\wh{p}^{r,h}(x_2,\cC^a) \notin \cH^{a,h}(x_2)$. In general, either $\cH^{a,h}(x_2)=(-\infty,\psup^{a,h}(x_2,\cC^a)]$ or $\cH^{a,h}(x_2)=(-\infty,\psup^{a,h}(x_2,\cC^a))$ and we argue that the latter situation occurs. Indeed, from Assumption \ref{assnBSDE}, Lemma \ref{lem1.5} and Propositions \ref{pro4.4} and \ref{pro4.5}, we have $\wh{p}^{r,h}(x_2,\cC^a)=\wh{p}^{s,h}(x_2,\cC^a)=\psup^{a,h}(x_2,\cC^a)$ and thus, since $\wh{p}^{r,h}(x_2,\cC^a)$ is not in $\cH^{a,h}(x_2)$, we conclude that $\cH^{a,h}(x_2)=(-\infty,\psup^{a,h}(x_2,\cC^a))$. It is also clear that $\wh{p}^{r,h}(x_2,\cC^a)>p$ for every $p\in\cH^{a,h}(x_2)$ and thus $\wh{p}^{r,h}(x_2,\cC^a)$ belongs to $\cH^{f,h}(x_2)$ so that $\cH^{f,r,h}(x_2)\neq\emptyset$. We complete the proof by making use of Proposition \ref{pro1.2}.
\endproof

%%%%%%%%%%%%%%%%%%%%%%%%%%%%%%%%%%%%%%%%%%%%%%%%%%%%%%%%%%%%%%%%%%%%%%%%%%%%%%%%
\subsection{Holder's Rational Exercise Times}   \label{sec4.4b}
%%%%%%%%%%%%%%%%%%%%%%%%%%%%%%%%%%%%%%%%%%%%%%%%%%%%%%%%%%%%%%%%%%%%%%%%%%%%%%%%

We conclude the paper by an analysis of the properties of holder's rational exercise times. Note that in Theorem  \ref{the4.4} we work under the assertions of Theorem \ref{the4.3}. It is thus known that the equality $\wh{p}^{r,h}(x_2,\cC^a)=\wh{p}^{s,h}(x_2,\cC^a)$ holds and thus a stopping time $\tau\in\cT$ is a holder's rational exercise time if the contract is traded at the holder's maximum replication cost $\wh{p}^{r,h}=\wh{p}^{r,h}(x_2,\cC^a)$ at time 0 and there exists a strategy $\psi\in\Psi(x_2-\wh{p}^{r,h},-A)$ such that $V_{\tau}(x_2-\wh{p}^{r,h},\psi)=x_{\tau}$. We thus deal here with a natural extension of the classical concept of a rational exercise time for the holder of an American option when the underlying market model is linear. Let us notice that in any complete linear market, albeit not in a general nonlinear market, any holder's rational exercise time is also a break-even time for the issuer (in particular, the equality $\tau^h=\tau^i$ is satisfied).

Recall that $\gg$-adapted, c\`adl\`ag process $Y$ is an $\cEgh$-submartingale (respectively, an $\cEgh$-martingale) on $[0,T]$
if $Y_{s } \leq\cEgh_{s,t}(Y_t)$ (respectively, $Y_{s}=\cEgh_{s,t}(Y_t)$) for all $0\leq s \leq t\leq T$.

The following result gives a characterisation of all holder's rational exercise times and describes the earliest and the
latest rational exercise times. Of course, results of this kind are well known from the existing literature on a classical
optimal stopping problem (see, for instance, Kobylanski and Quenez \cite{KQ2012}). For nonlinear optimal stopping problems, the interested reader is also referred to Dumitrescu \cite{DQS2017} and Grigorova et al. \cite{GIOOQ2017,GIOQ2017}.

\begin{theorem}  \label{the4.4}
Let Assumptions \ref{assnBSDE} and \ref{assBSDEcomparison}--\ref{assBSDEv2} be satisfied and
the strict comparison property of $\cEgh$ hold. In particular, let $(y,z,\kpp)$ be the unique solution to the reflected BSDE \eqref{hRBSDE}.
Then a stopping time $\tau' \in \cT$ is a holder's rational exercise time if and only if the following conditions are met: \\
(i) $y$ is a $\cEgh$-martingale on $[0,\tau']$, that is, $k_{\tau'}=0$,\\
(ii) the equality $y_{\tau'}=x_{\tau'}$ holds. \\
The earliest holder's rational exercise time equals $\tau^h:=\inf\,\{t\in [0,T]\,|\,y_t=x_t\}$. If, in addition, the process $\kpp$ is continuous, then $\bar{\tau}^h:=\inf\,\{t\in [0,T]\,|\,\kpp_t >0\}$ is the latest holder's rational exercise time.
\end{theorem}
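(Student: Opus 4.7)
The plan is to mirror the structure of Theorem \ref{the4.2} (the issuer's counterpart) while making the appropriate sign changes dictated by the fact that for the holder the relevant evaluation $\cEgh$ satisfies a submartingale-type inequality $y_s \leq \cEgh_{s,t}(y_t)$ and the reflection pushes $y$ up rather than down. First I would unpack the definition of a rational exercise time in the present setting: by the remark immediately following Definition \ref{holderrationaltime}, since the strict comparison property is assumed, the defining inequality $V_\tau(x_2-\wh p^{r,h},\psi)\ge x_\tau$ forces equality, so a rational exercise time is a holder's break-even time for some superhedging strategy.

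For the sufficiency (conditions (i)--(ii) imply $\tau'$ is rational), I would proceed as in the proof of Proposition \ref{pro4.5}. Take $\psi'=z$. Since $k_{\tau'}=0$ and $k$ is nondecreasing with $k_0=0$, the process $k$ vanishes identically on $[0,\tau']$, so the reflected BSDE \eqref{hRBSDE} reduces on $[0,\tau']$ to the forward SDE
\[
dy_t=-g(t,y_t,z_t)\,dt+z_t\,dS_t-dA_t,\qquad y_0\text{ given}.
\]
The wealth $V=V(x_2-(x_2-y_0),z)=V(y_0,z)$ satisfies the same SDE with the same initial condition $y_0$, so by Assumption \ref{assnBSDE}(ii), $V_t=y_t$ on $[0,\tau']$, and in particular $V_{\tau'}=y_{\tau'}=x_{\tau'}$. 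Hence $(x_2-y_0,z,\tau')$ is a holder's replicating strategy and $\tau'$ is rational.

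The necessity is the main obstacle. Suppose $\tau'$ is a rational exercise time, so there exists $\psi\in\Psi(x_2-\wh p^{r,h},-A)$ with $V_{\tau'}(x_2-\wh p^{r,h},\psi)\ge x_{\tau'}$. Since $V$ solves an SDE of the form \eqref{eq.fSDE} with $z$ replaced by the relevant control of $\psi$, its dynamics coincide with the BSDE on $[0,\tau']$ defining $\cEgh$, hence $x_2-\wh p^{r,h}=\cEgh_{0,\tau'}(V_{\tau'})$. Applying the comparison property of $\cEgh$ and recalling $\wh p^{r,h}=x_2-y_0=x_2-\breve v^h(\cC^a)$,
\[
y_0=x_2-\wh p^{r,h}=\cEgh_{0,\tau'}(V_{\tau'})\ge \cEgh_{0,\tau'}(x_{\tau'})\ge \inf_{\tau\in\cT}\cEgh_{0,\tau}(x_\tau)=\vinf^h(\cC^a)=y_0,
\]
so all inequalities are equalities. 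The strict comparison property then forces $V_{\tau'}=x_{\tau'}$, so in particular the first equality in $\cEgh_{0,\tau'}(V_{\tau'})=\cEgh_{0,\tau'}(x_{\tau'})$ is attained. Next, by the extended comparison property in Assumption \ref{assBSDEcomparison} applied to the RBSDE \eqref{hRBSDE} viewed against the plain BSDE with terminal $y_{\tau'}$ and the same driver, the holder's solution $y$ is an $\cEgh$-submartingale, i.e.\ $y_s\le \cEgh_{s,t}(y_t)$. Together with $y_{\tau'}\le x_{\tau'}$ and the previously established equality this gives
\[
y_0\le \cEgh_{0,\tau'}(y_{\tau'})\le \cEgh_{0,\tau'}(x_{\tau'})=y_0,
\]
so $\cEgh_{0,\tau'}(y_{\tau'})=\cEgh_{0,\tau'}(x_{\tau'})$, and the strict comparison yields $y_{\tau'}=x_{\tau'}$. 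Finally, propagating the equality $y_0=\cEgh_{0,\tau'}(y_{\tau'})$ to intermediate times by exactly the argument used for \eqref{equationmore1} in the proof of Theorem \ref{the4.2}, one gets $y_s=\cEgh_{s,\tau'}(y_{\tau'})$ for all $s\in[0,\tau']$, so $y$ is an $\cEgh$-martingale on $[0,\tau']$; comparing with the RBSDE dynamics on $[0,\tau']$ and using uniqueness of solutions to the BSDE without reflection forces $k_{\tau'}=0$.

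For the earliest and latest exercise times, the characterisation (i)--(ii) does most of the work. For $\tau^h:=\inf\{t:\,y_t=x_t\}$, Proposition \ref{pro4.5} already shows it is a rational exercise time; any rational $\tau'$ satisfies $y_{\tau'}=x_{\tau'}$, hence $\tau'\ge \tau^h$ by the definition of $\tau^h$ and right-continuity of $y-x$. For $\bar\tau^h:=\inf\{t:\,k_t>0\}$ under the continuity assumption on $k$, any $\tau'$ with $\P(\tau'>\bar\tau^h)>0$ would have $\P(k_{\tau'}>0)>0$, contradicting (i), so $\tau'\le\bar\tau^h$ a.s. It remains to verify that $\bar\tau^h$ itself is rational, i.e.\ it satisfies (i) and (ii): continuity of $k$ and the definition of $\bar\tau^h$ give $k_{\bar\tau^h}=0$; and the Skorokhod-type condition $\int_0^T(x_t-y_t)\,dk^c_t=0$ (noting that $k=k^c$ by continuity) implies $k^c$ only grows on $\{y=x\}$, so for every $\varepsilon>0$ there are times in $(\bar\tau^h,\bar\tau^h+\varepsilon)$ at which $y=x$, and right-continuity of $y-x$ yields $y_{\bar\tau^h}=x_{\bar\tau^h}$. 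Thus $\bar\tau^h$ satisfies (i)--(ii) and is the latest rational exercise time.
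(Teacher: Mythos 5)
Your proof is correct and follows essentially the same route as the paper's: the same reduction to forward SDEs plus uniqueness for sufficiency, the same use of the extended comparison property (to get the $\cEgh$-submartingale inequality) and the strict comparison property for necessity, and the same arguments for the earliest and latest exercise times. The only immaterial difference is one of ordering in the necessity step: you close the inequality chain via $\cEgh_{0,\tau'}(x_{\tau'})\geq\vinf^h(\cC^a)=y_0$ and deduce $y_{\tau'}=x_{\tau'}$ before the martingale property, whereas the paper first establishes that $y$ is an $\cEgh$-martingale on $[0,\tau']$ and then extracts $y_{\tau'}=x_{\tau'}$.
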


\proof Let $\tau'\in\cT$ be any stopping time such that conditions (i) and (ii) are met. Since $y_{\tau'}=x_{\tau'}$ and $\kpp_{\tau'}=0$, we see that the triplet $(y,z,\kpp)$ solves the following BSDE on $[0,\tau']$
\[
\left\{ \begin{array} [c]{ll}
dy_t=-g(t,y_t,z_t)\,dt+z_t\,dS_t-dA_t,\medskip\\ y_{\tau'}=x_{\tau'},
\end{array} \right.
\]
which can also be written in the forward manner, for all $t\in[0,\tau']$,
\[
dy_t=-g(t,y_t,z_t)\,dt+z_t\,dS_t-dA_t,
\]
where initial condition $y_0$ and the process $z$ are given.
Now we take $\psi=z$ and we recall that $\wh{p}^{r,h}(x_2,\cC^a)=x_2-y_0$ (see Proposition \ref{pro4.6}). Hence the wealth process $V=V(x_2-\wh{p}^{r,h}(x_2,\cC^a),\psi)$ satisfies the following SDE for all $t\in[0,T]$
\[
dV_t=-g(t,V_t,z_t)\,dt+z_t\,dS_t-dA_t,
\]
with initial condition $V_0=y_0$. From the uniqueness of a solution to the above SDE, we infer that $V_t=y_t\leq x_t$ for every $t\in[0,\tau']$. In particular, $V_{\tau'}=y_{\tau'}=x_{\tau'}$ and thus $\tau'$ is a rational exercise time for the holder of $\cC^a$.

Let us now assume that $\tau'$ is a rational exercise time for the holder of $\cC^a$. From Definition \ref{holderrationaltime}, it follows that for $p=\wh{p}^{r,h}(x_2,\cC^a)=x_2-y_0$ there exists a strategy $\psi\in\Psi(x_2-p,-A)$ such that $V_{\tau'}(x_2-p,\psi )\geq x_{\tau'}$. The comparison property of $\cEgh$ yields
\begin{equation} \label{equation holder}
y_0=x_2-p=\cEgh_{0,\tau'}\big(V_{\tau'}(x_2-p,\psi)\big)\ge\cEgh_{0,\tau'}(x_{\tau'})\geq\cEgh_{0,\tau'}(y_{\tau'}),
\end{equation}
where the last inequality is valid since $x_{\tau'}\geq y_{\tau'}$. For any fixed $t\in (0,T]$, the process $\bar{y}_s:=\cEgh_{s,t}(y_t)$ solves the following BSDE on $[0,t]$
\[
\left\{ \begin{array} [c]{ll}
d\bar{y}_s=-g(s,\bar{y}_s,\bar{z}_s)\,ds+\bar{z}_s\,dS_s-dA_s,\medskip\\ \bar{y}_t=y_t,
\end{array} \right.
\]
and, if $(y,z,k)$ solves the reflected BSDE \eqref{hRBSDE}, then $y$ satisfies the following BSDE on $[0,t]$
\[
\left\{ \begin{array} [c]{ll}
dy_s=-g(s,y_s,z_s)\,ds+z_s\,dS_s-dA_s+d\kpp_s,\medskip\\ y_t=y_t. % ,\quad y_s\leq x_s,\quad \int_{0}^{t}(x_s-y_s)\,d\kpp_s=0.
\end{array} \right.
\]
Using the extended comparison property postulated in Assumption \ref{assBSDEcomparison}, we get $y_s \leq \bar{y}_s = \cEgh_{s,t}(y_{t})$ for all $s\in[0,t]$ and thus $y$ is an $\cEgh$-submartingale. Furthermore,  using the extended comparison property of solutions to BSDEs, one can show that for any $\theta\in \cT$ we have $\cEgh_{s,\theta}(y_{\theta})\ge y_s$ for all $s \in [0,\theta]$. In view of \eqref{equation holder} and the strict comparison property of $\cEgh$, we deduce that for every $0\leq s\leq \tau'$
\begin{equation}\label{equationholdermore1}
\cEgh_{s,\tau'}(y_{\tau'})=y_s.
\end{equation}
Indeed, suppose that this is not true. Then the strict comparison property of $\cEgh$ would yield
\[
\cEgh_{0,\tau'}(y_{\tau'})= \cEgh_{0, s}(\cEgh_{s,\tau'}(y_{\tau'}))>\cEgh_{0, s}(y_s)\ge y_0,
 \]
which would contradict \eqref{equation holder}.

We now claim that for $0\leq s\leq t\leq \tau'$, we have that $\cEgh_{s,t}(y_{t})=y_s$. To show this, we observe that \eqref{equationholdermore1} yields $\cEgh_{t,\tau'}(y_{\tau'})=y_t$ and thus
\[
\cEgh_{s,t}(y_{t})=\cEgh_{s,t}(\cEgh_{t,\tau'}(y_{\tau'}))=\cEgh_{s,\tau'}(y_{\tau'})=y_s,
\]
where the last equality also comes from  \eqref{equationholdermore1}. We thus see that $y$ is an $\cEgh$-martingale on $[0,\tau' ]$ and thus
$\kpp_{\tau' }=0$. In particular, we have $\cEgh_{0,\tau'}(y_{\tau'})=y_0$ and thus, using \eqref{equation holder}, we obtain
\begin{equation}\label{equforrationaltime}
y_0=\cEgh_{0,\tau'}(x_{\tau'})=\cEgh_{0,\tau'}(y_{\tau'})=\cEgh_{0,\tau'}\big(V_{\tau'}(x_2-p,\psi)\big).
\end{equation}
By combining this equality with the inequality $y_{\tau'}\leq x_{\tau'}$ and the strict comparison property of $\cEgh$, we conclude that $y_{\tau'}=x_{\tau'}$. We have thus shown that if  $\tau'$ is a rational exercise time, then conditions (i)--(ii) are valid.

Let us show that $\tau^h$ is a rational exercise time. From the definition of $\tau^h$ and the right-continuity of $y$ and $x$, we infer that $y_{\tau^h}=x_{\tau^h}$.  Equality $\kpp_{\tau^h}=0$ has been already established in Proposition \ref{pro4.5}. Hence $\tau^h$ satisfies conditions (i)--(ii) and thus it is one of the holder's rational exercise times and it is the earliest one, since $y_t<x_t$ for all $t\in[0,\tau^h)$.

It remains to prove that $\bar{\tau}^h$ is the latest rational exercise time under an additional assumption that the process $\kpp$ is continuous so that $\kpp = \kpp^c$. We need to show that $y_{\bar{\tau}^h}=x_{\bar{\tau}^h}$. For an arbitrary $\varepsilon>0$, there exists $\delta\in[0,\varepsilon]$ such that $\kpp_{\bar \tau^h+\delta}>0$. Since $\int_0^T (x_t-y_t)\, d\kpp_t=0$, from the right-continuity of processes $y$ and $x$ and the inequality $x \geq y$, we obtain the equality $y_{\bar \tau^h}=x_{\bar \tau^h}$. Since, obviously, $\kpp_t=0$ for $t\in[0,\bar\tau^h)$, we also have $\kpp_{\bar\tau^h}=0$. This shows that $\bar\tau^h$ is one of the holder's rational exercise times. Moreover, it is the latest one since, if $\tau' \in \cT $ is such that $\mathbb{P} (\tau'>\bar{\tau}^h)>0$, then $\mathbb{P} (\kpp_{\tau'}>0)>0$ and thus the equality $\kpp_{\tau'}=0$ cannot hold. Note, however, that if the continuity of $\kpp$ is not postulated, then it may happen that $\kpp_{\bar \tau^h}\ne 0$ in which case $\bar{\tau}^h$ fails to be a rational exercise time (for instance, such properties are always true if $\kpp = \kpp^d$).
\endproof

\begin{remark}\label{rem4.4} {\rm
From the proof of Theorem \ref{the4.4} (see, in particular, equation (\ref{equforrationaltime})), the inequality $V_{\tau'}(x_2-p,\psi )\geq x_{\tau'}$ and the strict comparison property of $\cEgh$, we deduce that when the equality $\wh{p}^{r,h}(x_2,\cC^a)=\wh{p}^{s,h}(x_2,\cC^a)$ holds, then for any rational exercise time given by  Definition \ref{holderrationaltime} we have that $V_{\tau}(x_2-\wh{p}^{r,h},\psi)=x_{\tau}$, meaning that a rational exercise time is also a holder's break-even time. It is also obvious that a holder's break-even time is a rational exercise time. Thus when the equality $\wh{p}^{r,h}(x_2,\cC^a)=\wh{p}^{s,h}(x_2,\cC^a)$ holds, then the inequality $V_{\tau}(x_2-\wh{p}^{r,h},\psi)\geq x_{\tau}$ in Definition \ref{holderrationaltime} can be replaced by the equality $V_{\tau}(x_2-\wh{p}^{r,h},\psi)=x_{\tau}$. Once again,
it is worth noting that this observation is fully consistent with the standard definition of a rational exercise time for the holder of an American option in a classical complete, linear market model, such as the Black and Scholes model.}
\end{remark}

\section*{Acknowledgments}
The research of T. Nie and M. Rutkowski was supported by the DVC Research Bridging Support Grant {\it Non-linear Arbitrage Pricing of Multi-Agent Financial Games}. The work of T. Nie was supported by the National Natural Science Foundation of China (No. 11601285) and the Natural Science Foundation of Shandong Province (No. ZR2016AQ13).

%\newpage

%%%%%%%%%%%%%%%%%%%%%%%%%%%%%%%%%%%%%%%%%%%%%%%%%%%%%%%%%%%%%%%%%%%%%%%%%%%%%%%%%%%%%%%%%%%%%%%%%%%%%%%%%%%%%%

%%%%%%%%%%%%%%%%%%%%%%%%%%%%%%%%%%%%%%%%%%%%%%%%%%%%%%%%%%%%%%%%%%%%%%%%%%%%%%%%%%%%%%%%%%%%%%%%%%%%%%%%%%%%%%

%%%%%%%%%%%%%%%%%%%%%%%%%%%%%%%%%%%%%%%%%%%%%%%%%%%%%%%%%%%%%%%%%%%%%%%%%%%%%%%%%%%%%%%%%%%%%%%%%%%%%%%%%%%%%%


\begin{thebibliography}{20}
{\parskip= 3 pt

\bibitem{AO2016}
Aazizi, S. and Ouknine, Y.:
Strong envelope and strong supermartingale: Application to reflected backward stochastic differential equation.
Working paper, 2016 (arXiv:1112.0255v2).

\bibitem{BB2017}
Baadi, B. and Ouknine, Y.:
Reflected BSDEs when the obstacle is not right-continuous in a general filtration.
{\it ALEA -- Latin American Journal of Probability and Mathematical Statistics} 14 (2017), 201--218.
% {\it ALEA, Lat. Am. J. Probab. Math. Stat.} 14 (2017), 201--218.

%\bibitem{BY2011a}
%Bayraktar, E. and Yao, S.:
%Optimal stopping for non-linear expectations: Part I.
%{\it Stochastic Processes and their Applications} 121 (2011), 185--211.
%
%\bibitem{BY2011b}
%Bayraktar, E. and Yao, S.:
%Optimal stopping for non-linear expectations: Part II.
%{\it Stochastic Processes and their Applications} 121 (2011), 212--264.

\bibitem{BY2012}
Bayraktar, E. and Yao, S.:
Quadratic reflected BSDEs with unbounded obstacles.
{\it Stochastic Processes and their Applications} 122 (2012), 1155--1203.

%\bibitem{BY2015}
%Bayraktar, E. and Yao, S.:
%Doubly reflected BSDEs with integrable parameters and related Dynkin games.
%{\it Stochastic Processes and their Applications} 125 (2015), 4489--4542.

\bibitem{B1984}
Bensoussan, A.:
On the theory of option pricing.
{\it Acta Applicandae Mathematicae} 2 (1984), 139--158.

\bibitem{BCR2018}
Bielecki, T. R., Cialenco, I., and Rutkowski, M.:
Arbitrage-free pricing of derivatives in nonlinear market models.
{\it Probability, Uncertainty and Quantitative Risk} 3/2 (2018), DOI 10.1186/s41546-018-0027-x.

\bibitem{BR2015}
Bielecki, T. R. and Rutkowski, M.:
Valuation and hedging of contracts with funding costs and collateralization.
{\it SIAM Journal on Financial Mathematics} 6 (2015), 594--655.

%\bibitem{BN2010}
%Bouchard, B. and Nam, V. T.:
%The American version of the Geometric Dynamic Programming
%Principle : Application to the pricing of American options under constraints.
%{\it Applied Mathematics and Optimization} 61(2) (2010), 235--265.

%\bibitem{BPT2016}
%Bouchard, B., Possama\"{\i}, D., and Tan X.:
%A general Doob-Meyer-Mertens decomposition for g-supermartingale systems.
%{\it Electronic Journal of Probability} 21/36 (2016), 1--21.

\bibitem{CFS2008}
Carbone, R., Ferrario, B., and Santacroce, M.:
Backward stochastic differential equations driven by c\`{a}dl\`{a}g martingales.
{\it Theory of Probability and its Applications} 52(2) (2008), 304--314.

\bibitem{CM2008}
Cr\'epey, S. and Matoussi, A.:
Reflected and doubly reflected BSDEs with jumps: a priori estimates and comparison.
{\it The Annals of Probability} 18(5) (2008), 2041--2069.

%\bibitem{C2012}
%Cohen, S.:
%Representing filtration consistent nonlinear expectations as $g$-expectations in general probability spaces.
%{\it Stochastic Processes and their Applications} 122 (2012), 1601--1626.

%\bibitem{CHMP2002}
%Coquet, F., Hu, Y., M\'emin, J., and Peng, S.:
%Filtration-consistent nonlinear expectations and related $g$-expectations.
%{\it Probability Theory and Related Fields} 123 (2002), 1--27.

\bibitem{CK1996}
Cvitani\'c, J. and Karatzas, I.:
Backward stochastic differential equations with reflection and Dynkin games.
{\it The Annals of Probability} 24(4) (1996), 2024--2056.

%\bibitem{DL1982}
%Dellacherie, C. and Lenglart, E.:
%Sur des probl\`emes de r\'egularisation, de recollement et d'interpolation en th\'eorie des processus.
%In {\it Lecture Notes in Mathematics 920}, %S\'eminaire de Probabilit\'es XVI},
%Springer, Berlin, 1982, pp. 298--313.

%\bibitem{DM1975}
%Dellacherie, C. and Meyer, P. A.:
%Probabilit\'{e} et Potentiel, Chap. I-IV. Nouvelle \'{e}dition. Hermann. 1975.

\bibitem{DQS2016}
Dumitrescu, R., Quenez, M. C., and Sulem, A.:
Generalized Dynkin games and doubly reflected BSDEs with jumps.
{\it Electronic Journal of Probability} 21/64 (2016), 1--32.

\bibitem{DQS2017}
Dumitrescu, R., Quenez, M. C., and Sulem, A.:
Game options in an imperfect market with default.
{\it SIAM Journal on Financial Mathematics} 8 (2017), 532--559.

\bibitem{DQS2018}
Dumitrescu, R., Quenez, M. C., and Sulem, A.:
American options in an imperfect complete market with default.
To appear in {\it ESAIM: Proceedings and Surveys} (2018).

%\bibitem{ETZ2014}
%Ekren, I., Touzi, N., and Zhang, J.:
%Optimal stopping under nonlinear expectations.
%{\it Stochastic Processes and their Applications} 124 (2014), 3277--3311.

\bibitem{EK1981}
El Karoui, N.: Les aspects probabilistes du contr\^ole stochastique.
In {\it Lecture Notes in Mathematics 876, Ecole d'Et\'e de Probabilit\'es de Saint-Flour IX, 1979},
P.-L. Hennequin (Ed.), Springer, Berlin, 1981, pp. 73--238.

\bibitem{ELH1997}
El Karoui, N. and Huang, S. J.:
A general result of existence and uniqueness of backward stochastic differential equations.
In {\it Backward Stochastic Differential Equations, Pitman Research Notes in Mathematics
Series 364}, N. El Karoui and L. Mazliak (Eds.), Addison Wesley Longman Ltd, Harlow, Essex, 1997, pp. 27--36.

\bibitem{EQK1997}
El Karoui, N., Kapoudjian, C., Pardoux, E., Peng, S., and Quenez, M. C.:
Reflected solutions of backward SDE's, and related obstacle problems for PDE's.
{\it The Annals of Probability} 25 (1997), 702--737.

\bibitem{EPAQ1997}
El Karoui, N., Pardoux, E., and Quenez, M. C.:
Reflected backward SDEs and American options.
In {\it Numerical Methods in Finance},
L. C. G. Rogers and D. Talay (Eds.), Cambridge University Press, Cambridge, 1997, pp. 215--231.

\bibitem {EPQ1997}
El Karoui, N., Peng, S., and Quenez, M. C.:
Backward stochastic differential equations in finance.
{\it Mathematical Finance} 7 (1997), 1--71.

\bibitem{EQ1997}
El Karoui, N. and Quenez, M. C.:
Non-linear pricing theory and backward stochastic differential equations.
{\it Lecture Notes in Mathematics 1656}, B. Biais et al. (Eds.),
Springer, Berlin, 1997,  pp. 191--246.

\bibitem{ES2008}
Essaky, E. H.:
Reflected backward stochastic differential equation with jumps and RCLL obstacle.
{\it Bulletin des Sciences Math\'ematiques} 132(8) (2008), 690--710.
%{\it Bull. Sci. Math.} 132(8) (2008), 690--710.

%\bibitem{EH2013}
%Essaky, E. H. and Hassani, M. (2013)
%Generalized BSDE with 2-reflecting barriers and stochastic quadratic growth.
%{\it Journal of Differential Equations} (2013) 1500--1528.

%\bibitem{GQ2017}
%Grigorova, M., Imkeller, P., Ouknine, Y., and Quenez, M. C.:
%Doubly reflected BSDEs and ${\cal E}^{f}$-Dynkin games: beyond the right-continuous case.
%Working paper, 2017.

\bibitem{GIOOQ2017}
Grigorova, M., Imkeller, P., Offen, E., Ouknine, Y., and Quenez, M. C.:
Reflected BSDEs when the obstacle is not right-continuous and optimal stopping.
{\it The Annals of Applied Probability} 27 (2017), 3153--3188.

\bibitem{GIOQ2017}
Grigorova, M., Imkeller, P., Ouknine, Y., and Quenez, M. C.:
Optimal stopping with $f$-expectations: the irregular case.
Working paper, 2017 (hal-01403616v2).

\bibitem{GQ2017}
Grigorova, M. and Quenez, M. C.:
Optimal stopping and a non-zero-sum Dynkin game in discrete time with risk measures induced by BSDEs.
{\it Stochastics: An International Journal of Probability and Stochastic Processes} 89 (2017), 259--279.

\bibitem{HA2002}
Hamad\`ene, S.:
Reflected BSDEs with discontinuous barrier and application.
{\it Stochastics and Stochastics Reports} 74(3-4) (2002), 571--596.
%{\it Stoch. Stoch. Rep.} 74(3-4) (2002), 571--596.

%\bibitem{HO2003}
%Hamad\`ene, S. and Ouknine, Y.:
%Reflected backward stochastic differential equation with jumps and random obstacle.
%{\it Electronic Journal of Probability} 8/2 (2003).
%%{\it Electron. J. Probab. 8, no. 2, 20 (2003).

\bibitem{HO2016}
Hamad\`ene, S. and Ouknine, Y.:
Reflected backward SDEs with general jumps.
{\it Theory of Probability and its Applications} 60 (2016), 263--280.

\bibitem{JLL1990}
Jaillet, P., Lamberton, D., and Lapeyre, B.:
Variational inequalities and the pricing of American options.
{\it Acta Applicandae Mathematicae} 21(3) (1990), 263--289.

\bibitem{KK2004}
Kallsen, J. and K\"uhn, C.:
Pricing derivatives of American and game type in incomplete markets.
{\it Finance and Stochastics} 8 (2004), 261--284.

%\bibitem{KPZ2015}
%Kazi-Tani, N., Possama\"i, D., and Zhou, C.:
%Quadratic BSDEs with jumps: a fixed-point approach.
%{\it Electronic Journal of Probability} 20/6 (2015), 1--28.

%\bibitem{K2000}
%Kobylanski, M.:
%Backward stochastic differential equations and partial differential equations with quadratic growth.
%{\it Annals of Probability} 28 (2000), 558--602.

\bibitem{K1988}
Karatzas, I.:
On the pricing of American options.
{\it Applied Mathematics and Optimization} 17 (1988), 37--60.

%\bibitem{KK1996}
%Karatzas, I. and Kou, S.:
%On the pricing of contingent claims under constraints.
%{\it Annals of Applied Probability} 6 (1996), 321--369.

\bibitem{KK1998}
Karatzas, I. and Kou, S.:
Hedging American contingent claims with constrained portfolios.
{\it Finance and Stochastics} 2 (1998), 215--258.

\bibitem{KY2000}
Kifer, Y.: Game options.
{\it Finance and Stochastics} 4 (2000), 443--463.

\bibitem{KY2013}
Kifer, Y.:
Dynkin games and Israeli options.
{\it ISRN Probability and Statistics} (2013), ID856458, 17 pages.

\bibitem{KNR2018}
Kim, E., Nie, T., and Rutkowski, M.:
Valuation and hedging of game options in nonlinear models.
Working paper, 2018.

\bibitem{K2012}
Klimsiak, T.:
Reflected BSDEs with monotone generator.
{\it Electronic Journal of Probability} 17/107 (2012), 1--25.
%{\it Electron. J. Probab.} 17(107) (2012), 1--25.

\bibitem{K2015}
Klimsiak, T.:
Reflected BSDEs on filtered probability spaces.
{\it Stochastic Processes and their Applications}
125 (2015), 4204--4241.

\bibitem{KR2016}
Klimsiak, T. and Rozkosz, A.:
The early exercise premium representation for American options on multiply assets.
{\it Applied Mathematics and Optimization} 73 (2016), 99-–114.

\bibitem{KRS2016}
Klimsiak, T., Rzymowski, M., and S\l omi\'nski, L.:
Reflected BSDEs with regulated trajectories.
Working paper, 2016 (arXiv:1608.08926v1).

\bibitem{KQ2012}
Kobylanski, M. and Quenez, M. C.:
Optimal stopping time problem in a general framework.
{\it Electronic Journal of Probability} 17/72 (2012), 1--28.

%\bibitem{L2002}
%Lin, Q.:
%Nonlinear Doob-Meyer decomposition for general filtration.
%{\it Acta Mathematicae Applicatae Sinica} 18 (2002), 619--624.

%\bibitem{LSM1997}
%Lepeltier, J. P. and San Mart\'in, J.:
%Backward stochastic differential equations with continuous coefficients.
%{\it Statistics and Probability Letters} 32 (1997), 425--430.
%
%\bibitem{LSM2004}
%Lepeltier, J. P. and San Mart\'in, J.:
%Backward SDEs with two barriers and continuous coefficients.
%{\it Journal of Applied Probability} 41 (2004), 162--175.


%\bibitem{MPP2014}
%Matoussi, A., Piozin, L., and Possama\"i, D.:
%Second-order BSDEs with general reflection and game options under uncertainty.
%{\it Stochastic Processes and their Applications} 124 (2014), 2281--2321.

%\bibitem{M1978}
%Maingueneau, A. M.: Temps d'arr\^et optimaux et th\'eorie g\'en\'erale.
%In {\it Lecture Notes in Mathematics 649}, %S\'eminaire de Probabilit\'es XII},
%Springer, Berlin, 1978, pp. 457--467.

%\bibitem{M1972}
%Mertens, J. F.:
%Th\'eorie des processus stochastiques g\'en\'eraux applications aux surmartingales.
%{\it Probability Theory and Related Fields} 22 (1972), 45--68.

\bibitem{M1992}
Myneni, R.:
The pricing of the American option.
{\it The Annals of Applied Probability} 2 (1992), 1--23.

%\bibitem{N1972}
%Neveu, J:
%Discrete-Parameter Martingales. English translation, North-Holland, Amsterdam and American Elsevier, New York, 1975.

\bibitem{NR1}
Nie, T. and Rutkowski, M.:
Fair bilateral prices in Bergman's model with exogenous collateralization.
{\it International Journal of Theoretical and Applied Finance} 18 (2015), 1550048. % (26 pages).

\bibitem{NR2}
Nie, T. and Rutkowski, M.:
BSDEs driven by multidimensional martingales and their applications to markets with funding costs.
{\it Theory of Probability and its Applications} 60 (2016), 604--630.

\bibitem{NR3}
Nie, T. and Rutkowski, M.:
A BSDE approach to fair bilateral pricing under endogenous collateralization.
{\it Finance and Stochastics} 20 (2016), 855--900.

\bibitem{NR4}
Nie, T. and Rutkowski, M.:
Fair bilateral prices under funding costs and exogenous collateralization.
{\it Mathematical Finance} 28 (2018), 621--655.

%\bibitem{NZ2015}
%Nutz, M. and Zhang, J.:
%Optimal stopping under adverse nonlinear expectation and related games.
%{\it Annals of Applied Probability} 25 (2015), 2503--2534.

%\bibitem{PP1990}
%Pardoux, E. and Peng, S.:
%Adapted solution of a backward stochastic differential equation.
%{\it Systems and Control Letters} 14 (1990), 55--61.

%\bibitem{P1992}
%Peng, S.:
%A generalized dynamic programming principle and Hamilton-Jacobi-Bellman equation.
%{\it Stochastics} 38 (1992), 119--134.

%\bibitem{P1999}
%Peng, S.:
%Monotonic limit theorem of BSDE and nonlinear decomposition theorem of Doob–Meyer's type.
%{\it Probability Theory and Related Fields} 113 (1999), 473--499.

\bibitem{P2004a}
Peng, S.:
Nonlinear expectations, nonlinear evaluations and risk measures.
In {\it Lecture Notes in Mathematics 1856}, M. Frittelli and W. Runggaldier (Eds.),
Springer, Berlin, 2004, pp. 165--253.

\bibitem{P2004b}
Peng, S.:
Dynamically consistent nonlinear evaluations and expectations.
Working paper, 2004 (arXiv:0501415v1).

\bibitem{R2002}
Rogers, L. C. G.:
Monte Carlo valuation of American options.
{\it Mathematical Finance} 12(3) (2002), 271--286.

%\bibitem{P2005}
%Peng, S.:
%Dynamically consistent nonlinear evaluations and expectations.
%Working paper, 2005.

\bibitem{QS2014}
Quenez, M. C. and Sulem, A.:
Reflected BSDEs and robust optimal stopping time for dynamic risk measures with jumps.
{\it Stochastic Processes and their Applications} 124 (2014), 3031--3054.

%\bibitem{R2006}
%Royer, M.:
%Backward stochastic differential equations with jumps and related non-linear expectations.
%{\it Stochastic Processes and their Applications} 116 (2006), 1358--1376.
}

\end{thebibliography}
\end{document}